\documentclass[conference]{IEEEtran}

\usepackage[utf8]{inputenc}
\usepackage{amssymb}
\usepackage{amsthm}
\usepackage{mathtools}
\usepackage{float}
\usepackage{esint}
\usepackage{stmaryrd}
\usepackage{subcaption}
\usepackage{ebproof}
\usepackage{mathtools}
\usepackage[dvipsnames]{xcolor}
\usepackage{tikz}

\usepackage{cleveref}\usepackage{pdfpages}
\usetikzlibrary{calc}
\usetikzlibrary{shapes.geometric}
\usetikzlibrary{shapes.misc}
\usetikzlibrary{shapes.symbols}
\usetikzlibrary{positioning}
\usetikzlibrary{decorations.markings}
\usepackage{scalerel,stackengine}
\setstackgap{S}{1pt}
\usetikzlibrary{external}
\usetikzlibrary{circuits.ee.IEC}
\usepackage{tikzit}


\tikzstyle{gn}=[fill=green, draw=black, shape=circle,  tikzit category=ZX, tikzit fill=green, tikzit draw=black, tikzit shape=circle, inner sep=0.1em]
\tikzstyle{rn}=[fill=red, draw=black, shape=circle, tikzit fill=red, tikzit draw=black, tikzit category=ZX, tikzit shape=circle, inner sep=0.1em]
\tikzstyle{divide}=[regular polygon, regular polygon sides=3, shape border rotate=90, draw=black,fill=white, inner sep=2pt, tikzit category=scal, rounded corners=0.8mm]
\tikzstyle{black}=[fill=black, draw=black, shape=circle, tikzit fill=black, tikzit draw=black, tikzit shape=circle, tikzit category=IH, inner sep=1.2pt]
\tikzstyle{gather}=[fill=white, draw=black, tikzit category=scal, rounded corners=0.8mm, regular polygon, regular polygon sides=3, shape border rotate=-90, inner sep=2pt]
\tikzstyle{ggen}=[fill=white, draw=black, shape=rectangle, rounded corners=2mm,  line width=1pt, tikzit draw=red, tikzit category=scal]
\tikzstyle{white}=[fill=white, draw=black, shape=circle, inner sep=2pt, tikzit category=IH]
\tikzstyle{mbox}=[fill=white, draw=black, rounded rectangle, rounded rectangle west arc=none, tikzit category=scal, tikzit shape=rectangle]
\tikzstyle{A}=[fill=white, shape=circle, tikzit category=scal, inner sep=1pt]
\tikzstyle{ggreen}=[fill=green, draw=black, shape=circle, tikzit category=SZX, tikzit fill=green, tikzit draw=black, line width=1pt, inner sep=0.1em]
\tikzstyle{gred}=[fill=red, draw=black, shape=circle, rounded corners=2mm,  tikzit category=SZX, inner sep=0.1em, tikzit fill=red, line width=1pt]
\tikzstyle{ghad}=[fill=yellow, draw=black, shape=rectangle, tikzit category=SZX, tikzit shape=rectangle, tikzit fill=yellow, inner sep=0.1em, line width=1pt]
\tikzstyle{boxm}=[fill=white, draw=black, rounded rectangle, tikzit category=scal, tikzit shape=rectangle, rounded rectangle east arc=none]
\tikzstyle{box}=[fill=white, draw=blue, shape=rectangle]
\tikzstyle{had}=[fill=yellow, draw=black, shape=rectangle, tikzit category=ZX, tikzit fill=yellow, tikzit draw=black, inner sep=0.1em]
\tikzstyle{gwhite}=[fill=white, draw=black, shape=circle, tikzit fill=white, tikzit shape=circle, line width=1 pt, inner sep=2 pt, tikzit draw=red]
\tikzstyle{gblack}=[fill=black, draw=black, shape=circle, tikzit fill=black, tikzit shape=circle, line width=1 pt, inner sep=2 pt, tikzit draw=red]
\tikzstyle{antipode}=[fill=red, draw=black, shape=rectangle, tikzit fill=red, tikzit draw=black, tikzit shape=rectangle, inner sep=2pt]
\tikzstyle{rdelay}=[kite, draw, kite vertex angles=90 and 45,rotate=90,scale=0.6,fill=gray!60!white]
\tikzstyle{ldelay}=[kite, draw, kite vertex angles=90 and 45,rotate=-90,scale=0.6,fill=gray!60!white]
\tikzstyle{mongr}=[fill=green, draw=green, shape=circle, inner sep=2pt]
\tikzstyle{monbl}=[fill=blue, draw=blue, shape=circle, inner sep=2pt]
\tikzstyle{bg}=[inner sep=0.7mm, minimum width=0pt, minimum height=0pt, fill=green, draw=white, very thick, shape=circle]
\tikzstyle{br}=[inner sep=0.7mm, minimum width=0pt, minimum height=0pt, fill=red, draw=white, very thick, shape=circle]
\tikzstyle{rmat}=[draw, signal, fill=zx_grey, signal to=east, signal from=west, inner sep=1pt, minimum height=6pt]
\tikzstyle{lmat}=[draw, signal, fill=zx_grey, signal to=west, signal from=east, inner sep=1pt, minimum height=6pt]
\tikzstyle{umat}=[draw, signal, fill=zx_grey, signal to=north, signal from=south, inner sep=1pt, minimum width=6pt]
\tikzstyle{dmat}=[draw, signal, fill=zx_grey, signal to=south, signal from=north, inner sep=1pt, minimum width=6pt]
\tikzstyle{arrow}=[->]
\tikzstyle{very thick}=[-, line width=1pt, tikzit draw=red]
\tikzstyle{pointille}=[dashed, -]
\tikzstyle{memo}=[dashed, -, draw=gray!80!white]
\tikzstyle{red}=[-, draw=red]
\tikzstyle{blue}=[-, draw=blue]
\tikzstyle{green}=[-, draw=green]
\tikzstyle{yellow}=[-, draw=orange]
\tikzstyle{arrow}=[->]
\tikzstyle{strike}=[-, tikzit draw={rgb,255: red,191; green,0; blue,64}, strike through]
\tikzstyle{strike'}=[-, tikzit draw=cyan, strike bend]

\tikzstyle{box2}=[shape=rectangle, text height=6ex, text depth=0.25ex, yshift=0mm, fill=blue, draw=black, minimum height=1mm, minimum width=5mm, font={\small}]
\tikzstyle{box}=[shape=rectangle, text height=1.5ex, text depth=0.25ex, yshift=0.5mm, fill=white, draw=black, minimum height=5mm, yshift=-0.5mm, minimum width=5mm, font={\small}]
\tikzstyle{Z dot}=[inner sep=0mm, minimum size=2mm, shape=circle, draw=black, fill={rgb,255: red,160; green,255; blue,160}]
\tikzstyle{gdot}=[minimum size=3mm, font={\scriptsize\boldmath}, shape=rectangle, rounded corners=1.3mm, inner sep=1mm, outer sep=-1.8mm, scale=0.8, tikzit shape=circle, draw=black, fill=green, tikzit draw=blue]
\tikzstyle{X dot}=[Z dot, shape=circle, draw=black, fill={rgb,255: red,220; green,0; blue,0}]
\tikzstyle{rdot}=[minimum size=3mm, font={\scriptsize\boldmath}, shape=rectangle, rounded corners=1.3mm, inner sep=1mm, outer sep=-1.8mm, scale=0.8, tikzit shape=circle, draw=black, fill=red, tikzit draw=blue]
\tikzstyle{grdot}=[minimum size=3mm, font={\scriptsize\boldmath}, shape=rectangle, rounded corners=1.3mm, inner sep=1mm,
 line width=1pt, outer sep=-1.5mm, scale=0.8, tikzit shape=circle, draw=black, fill=red, tikzit draw=blue]
 \tikzstyle{ggdot}=[minimum size=3mm, font={\scriptsize\boldmath}, shape=rectangle,  line width=1pt, rounded corners=1.3mm, inner sep=1mm, outer sep=-1.5mm, scale=0.8, tikzit shape=circle, draw=black, fill=green, tikzit draw=blue]

\let\oldtikzfig\tikzfig
\renewcommand{\tikzfig}[1]{
	\tikzsetnextfilename{#1}
	\oldtikzfig{#1}
}

\newtheorem{proposition}{Proposition}

\newtheorem{example}{Example}
\newtheorem{lemma}{Lemma}
\newtheorem{theorem}{Theorem}
\newtheorem{definition}{Definition}

\newcounter{storedvalue}

\newcommand{\ground}
{
	\begin{tikzpicture}[circuit ee IEC,yscale=0.9,xscale=0.8]
	\draw[solid,arrows=-] (0,1ex) to (0,0) node[anchor=center,ground,rotate=-90,xshift=.72ex] {};
	\end{tikzpicture}}


\newcommand{\G}{\mathcal{G}} 
\newcommand{\C}{\mathcal{C}} 
\newcommand{\sem}[1]{\llbracket #1 \rrbracket} 

\newcommand{\Stream}{\G^{\omega}} 
\newcommand{\Onion}{\textup{FinApp}(\C)} 
\newcommand{\RegStSeq}[1]{\textup{RegSt}(#1)} 
\newcommand{\Seq}{\textup{Seq}(\C)} 
\newcommand{\fa}[1]{\textup{FA}\!\!\left(#1 \right)} 
\newcommand{\sembis}[1]{\llparenthesis #1 \rrparenthesis} 
\newcommand{\parsem}[2][]{\partial^{#1}#2} 

\newcommand{\D}{\Diamond} 
\newcommand{\bfup}[1]{\textup{\textbf{#1}}} 
\newcommand{\id}{\bfup{id}} 
\newcommand{\Terre}[1]{\ground_{#1}} 
\newcommand{\Dtr}[3]{\bfup{Dtr}_{#1}^{#2}\left(#3\right)} 
\newcommand{\Dtri}[4]{\bfup{Dtr}{[#1]}_{#2}^{#3}\left(#4\right)} 

\newcommand{\UCPM}{\bfup{CPTP}_2} 
\newcommand{\CPM}{\bfup{CPM}_2} 

\newcommand{\Pure}{\textup{PPur}}
\newcommand{\CM}{\textup{(CM)}}
\newcommand{\IM}{\textup{(IM)}}
\newcommand{\AX}{\textup{(Ax)}}

\newcommand{\CC}{\mathbb{C}}

\newcommand{\NN}{\mathbb{N}}


\newcommand{\quot}[2]{{\raisebox{.2em}{$#1$}\left/\raisebox{-.2em}{$#2$}\right.}}
\newcommand{\df}{\stackrel{\scriptscriptstyle\textup{def}}{=}}
\newcommand{\blue}{\textcolor{blue}{\bullet}}
\newcommand{\red}{\textcolor{red}{\bullet}}
\newcommand{\green}{\textcolor{green}{\bullet}}
\newcommand{\orange}{\textcolor{orange}{\bullet}}

\newcommand{\bvdots}{ \tikz[baseline, every node/.style={inner sep=0}]{ \node at (0,0){.}; \node at (0,-6pt){.}; \node at (0,6pt){.}; } }
\newcommand{\ket}[1]{|#1\rangle}
\newcommand{\bra}[1]{\langle#1|}

\DeclareFontFamily{U}{mathb}{\hyphenchar\font45}
\DeclareFontShape{U}{mathb}{m}{n}{
	<5> <6> <7> <8> <9> <10> gen * mathb
	<10.95> mathb10 <12> <14.4> <17.28> <20.74> <24.88> mathb12
}{}
\DeclareSymbolFont{mathb}{U}{mathb}{m}{n}


\DeclareMathSymbol{\pluscirc}{2}{mathb}{"09}

\begin{document}
\title{Graphical Language with Delayed Trace:\\Picturing Quantum Computing with Finite Memory}
\author{\IEEEauthorblockN{Titouan Carette}
\IEEEauthorblockA{Université de Lorraine,\\ CNRS, Inria, LORIA\\F 54000 Nancy, France\\ Email: titouan.carette@loria.fr}
\and
\IEEEauthorblockN{Marc de Visme}
\IEEEauthorblockA{Université de Lorraine,\\ CNRS, Inria, LORIA\\F 54000 Nancy, France\\ Email: marc.de-visme@loria.fr}
\and
\IEEEauthorblockN{Simon Perdrix}
\IEEEauthorblockA{Université de Lorraine,\\ CNRS, Inria, LORIA\\F 54000 Nancy, France\\ Email: simon.perdrix@loria.fr}
}
\maketitle
\begin{abstract}
	Graphical languages, like quantum circuits or ZX-calculus, have been successfully designed to represent (memoryless) quantum computations acting on a finite number of qubits. Meanwhile, delayed traces have been used as a graphical way to represent finite-memory computations on streams, in a classical setting (cartesian data types). We merge those two approaches and describe a general construction that extends any graphical language, equipped with a notion of discarding, to a graphical language of finite memory computations. In order to handle cases like the ZX-calculus, which is complete for post-selected quantum mechanics, we extend the delayed trace formalism beyond the causal case, refining the notion of causality for stream transformers. We design a stream semantics based on stateful morphism sequences and, under some assumptions, show universality and completeness results. Finally, we investigate the links of our framework with previous works on cartesian data types, signal flow graphs, and quantum channels with memories.
\end{abstract}

\IEEEpeerreviewmaketitle

\section{Introduction}
\noindent \textbf{Motivations.} Several graphical languages have been successfully developed for representing finite-dimensional quantum processes. The quantum circuits and the ZX-calculus are the main  examples of such graphical languages. The ZX-calculus is equipped with a complete equational theory \cite{vilmart2018near,carette2019completeness}, that allows, among other applications, 
 to perform circuit optimization \cite{duncan2019graph,kissinger2020reducing}, and to design fault tolerant computations \cite{PhysRevX.10.041030,de2017zx}.
These graphical languages have been designed for finite-dimensional quantum mechanics: each wire represents a finite system -- generally a qubit -- as a consequence, a finite diagram can only represent a finite-dimensional quantum evolution. Notice that using the scalable construction \cite{carette2019szx}, one can represent finite registers, with the possibility to split and merge  registers. This construction makes the representation more compact but it remains a representation of finite-dimensional quantum computations.
There is a fundamental reason for this restriction: finite-dimensional Hilbert spaces, contrary to infinite-dimensional ones, form a compact closed category, and the compact closure is the cornerstone of graphical languages like the ZX-calculus.

To go beyond finite registers, we explore in this paper the design of graphical languages for quantum stream transformations, \emph{i.e.}, computations taking (infinite) sequences of quantum inputs to (infinite) sequences of quantum outputs. Intuitively a transformation acting on a stream of qubits, inputs a qubit and outputs a qubit at each clock tick. In order to allow interactions between systems input at distinct clock ticks, a memory mechanism is required to store some data across the ticks. Such a quantum transformation is called a  \emph{quantum channel with memory} in \cite{kretschmann2005quantum}.

We choose to graphically represent  the memory mechanism using \emph{delayed traces}, \emph{i.e.}, feedback loops that  store qubits from clock tick to the next. The example in Fig. \ref{fig:cnot} consists in applying a CNot gate on consecutive qubits of a stream.

\begin{figure}
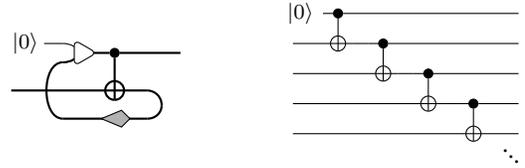

\centerline{$\tikzfig{CNot-casc}\qquad \qquad\tikzfig{cnc-intro}$}
\caption{\emph{  \label{fig:cnot}{\bf Left:} A cascade of CNots on a stream of qubits. At the first tick, a CNot is applied: the control qubit is in the $\ket 0$ state, the target qubit is the first qubit of the input stream. The control qubit is then output and the target qubit stored in the memory. At the second tick the stored qubit becomes the control qubit and the second qubit of the input stream becomes the target qubit and so on. {\bf Right:} An informal unfolded version of the cascade of CNots.}}
\end{figure}

Delayed traces have been studied \cite{sprunger2019differentiable} as a construction which can be applied to any cartesian category. We explore the extension of this construction to the quantum case, since a quantum graphical language, from a category point of view, form a category which is symmetric monoidal but not cartesian.

Depicting finite memory quantum computations on streams does not provide a universal model of quantum computation. It is however an interesting  fragment to explore, strictly more expressive than memory-free languages designed for finite registers, and an intermediate scale model potentially easier to implement using quantum technologies available in the short term, than a universal quantum computer. 

\vspace{0.2cm}

\noindent \textbf{Contributions.}  
We introduce a general construction that extends any (not necessarily quantum) graphical language $\G$ equipped with a discarding map, to a graphical language $\Stream$ with finite memory acting on streams. The construction consists in adding a delayed trace to model the memory, as well as stream constructors and destructors. 

A key property of the construction is that the delay only commutes with causal transformations. Indeed, applying a non-causal transformation before storing a system can produce some side effect on the output at tick $k$ which would occur only at tick $k+1$ if the transformation is applied later on. Moreover, the infinite nature of the computation requires the introduction of a coinduction principle, to show for instance that storing forever a system in a memory and never using it, is equivalent to discard this system right away.

We introduce the \emph{finite approximations} of a $\Stream$-diagram $D$ as a sequence of $\G$-diagrams: the $k^\text{th}$ diagram of the sequence represents the behavior of $D$ from the initial  to the $k^\text{th}$
 tick. The semantics of a $\Stream$-diagram is then defined as the sequence of interpretations of its finite approximations.

 When an evolution is \emph{causal}, 
 its $k^\text{th}$ approximation can be obtained from the $(k+1)^\text{th}$ approximation by discarding the last outputs. This property witnesses the fact that the state of the first $k$  outputs should only depends on the first $k$  inputs. This is however not the case with non-causal evolutions, and in particular post-selected quantum evolutions (\emph{i.e.}, quantum evolutions where one can freely choose the classical outcomes of the measurements). Post-selected evolutions can be represented in several graphical languages including the ZX-calculus. As a consequence, we introduce a new monotonicity condition for finite approximations.
  
  The monotonicity conditions allow us to identify the finite approximations that can be represented in $\Stream$. In particular, these evolutions should satisfy some additional regularity condition, which corresponds to the fact that they can be implemented with a finite memory. 
Finally, we also show the \emph{completeness} of the language, up to some additional assumptions which are satisfied in the quantum case.

\vspace{0.2cm}

\noindent\textbf{Related works.} 
In \cite{sprunger2019differentiable}, a delayed trace construction is introduced for the classical case (cartesian data type), we extend the construction to the non-cartesian case. Moreover we axiomatize, using a graphical language, the delayed trace construction. 
A categorical formulation of delayed traces as infinite combs has also been proposed by \cite{roman2020comb} but again, only in the cartesian or semi-cartesian case.

In categorical quantum mechanics, $!$-boxes \cite{kissinger_et_al:LIPIcs:2015:5533} and the scalable construction \cite{carette2019szx} can be used to represent an infinite family of diagrams at once, and thus a computation acting on a finite but unbounded number of qubits. There is however no notion of stream or memory. 
Since infinite-dimensional Hilbert spaces are not compact closed there are few attempts of graphical languages in infinite dimensions \cite{heunen,coecke2016pictures} --
notice the work of \cite{EPTCS236.4} based on non-standard analysis. Nevertheless, our construction  preserves  compact structures. There is no contradiction here, our scalars are not complex numbers but sequences of complex numbers. Thus, the relevant way to interpret our construction in the categorical quantum mechanic setting would be to consider $\mathbb{C}^\mathbb{N}$-modules instead of infinite-dimensionalal vector spaces over $\mathbb{C}$.

In quantum computing, quantum channels with memory and quantum cellular automata \cite{schumacher2004reversible,arrighi2011unitarity} are examples of computational models with an infinite number of qubits. Notice that typical results in this field are structure theorems which  state for instance that if an evolution is translation invariant and causal then it can be decomposed into a series of local operations. In section \ref{sec:werner}, we discuss the connection between the structure theorem of \cite{kretschmann2005quantum} and the diagrams of $\Stream$.

Delayed traces as been used in \cite{bonchi2014categorical} to axiomatize rational streams. They rely heavily on the properties of linear streams providing a semantics in terms of formal Laurent series. This is the only example we know of previous works on delayed trace in the compact closed, hence non-semi-cartesian, setting. We discuss the links with our formalism in \ref{sec:lin}.

\vspace{0.2cm}

\noindent \textbf{Structure of the paper.} 
We present in Section II some preliminaries on quantum computation, describing how quantum states are represented and the preexisting graphical language of quantum circuits. We then explore the different notions that naturally arise when we add memory to quantum circuits.

In Section III, we rely on the intuitions coming from the quantum case and work at a much greater level of generality: for every graphical language $\G$, we define the graphical language $\Stream$ which manipulates both single inputs and streams of inputs, and allows for the storage of information through time. 

While diagrams of $\Stream$ are finite, we study in Section IV their infinite unfoldings into stateful morphism sequences \cite{sprunger2019differentiable}, and show an equivalence between those sequences that are ultimately constants, and the diagrams of $\Stream$. Those unfoldings are a major intermediate step for the definition of the semantics of diagrams of $\Stream$.

In Section V, we explain how to build a semantics for $\Stream$ from a semantics for $\G$, and show that under some reasonable assumptions, this semantics is \emph{complete}, meaning that the rewriting rules of our language generate all the sound rewriting rules, and \emph{universal}, meaning that the generators of our language generate all the (regular) stream processes.

At last, in Section VI, we explore the applications of the construction, in particular for the ZX-calculus. We also present a fragment of our graphical language, $\Stream_0$, which matches more closely with the preexisting works. In this fragment, one is forced to behave uniformly through time, and operations such that ``changing the third element of a stream'' are not possible.

\vspace{0.2cm}

All the proofs can be found in Appendix \ref{proofs}.

\section{Finite Memory Quantum Computing}

In this section, we review various notions of quantum computing and motivate by examples the kind of computations the language presented in the next section is designed to represent. 

\subsection{Completely Positive Maps}
\label{sec:quantum}
We use the density matrix formalism of finite-dimensionalal quantum mechanics over qubits, see \cite{nielsen2002quantum} for a more complete presentation. We have a symmetric monoidal category $\CPM$ of generalized quantum processes over qubits. The objects are the sets of linear operators of the form $\mathcal{M}_{2^n\times 2^n}\left(\mathbb{C}\right)$ representing systems of $n$ qubits. The morphisms are the linear maps that are completely positive. Among those maps only the trace preserving ones correspond to real physical transformations, we write this subcategory $\UCPM$. The state of an $n$-qubit system is a  \textbf{density matrix} \emph{i.e.} a map $\mathbb{C}\to \mathcal{M}_{2^n\times 2^n}\left(\mathbb{C}\right)$, 
which is positive semi-definite Hermitian and has unit trace. 
Using the Dirac notations $\ket 0 := \scalebox{0.8}{$\begin{pmatrix}
		 1\\
		 0
		 \end{pmatrix}$}$, $\ket 1:= \scalebox{0.8}{$\begin{pmatrix}
		 0\\
		 1
		 \end{pmatrix}$}$, $\bra 0  := \scalebox{1}{$\begin{pmatrix}
		 1&\!\!\!
		 0
		 \end{pmatrix}$}$, and $\bra 1  := \scalebox{1}{$\begin{pmatrix}
		 0&\!\!\!
		 1
		 \end{pmatrix}$}$, the two \emph{classical} states of a qubit are the density matrices  $\ket0 \bra 0$ and $\ket1 \bra 1$.

The monoidal product is the tensor product of vector spaces $\mathcal{M}_{2^n\times 2^n}\left(\mathbb{C}\right)\otimes\mathcal{M}_{2^m\times 2^m}\left(\mathbb{C}\right)\simeq\mathcal{M}_{2^{n+m}\times 2^{n+m}}\left(\mathbb{C}\right)$. The symmetry maps are the exchange maps $\rho \otimes \nu \mapsto \nu \otimes \rho$. We denote $f^\dagger$ the Hermitian adjoint of a completely positive map. A map is said to be an \textbf{isometry} if $f^\dagger \circ f=id$. 
A quantum process is said to be $\textbf{pure}$ if it is of the form $\rho \mapsto V\rho V^\dagger$ with $V\in \mathcal{M}_{2^{m}\times 2^{n}}\left(\mathbb{C}\right)$. A \textbf{measurement} which maps  $\rho$ to $\ket{0}\bra{0}\rho\ket{0}\bra{0}+\ket{1}\bra{1}\rho\ket{1}\bra{1}$, is an example of non-pure evolution. 
In $\CPM$, we can also represent \textbf{post-selected} measurements, those are non-physical processes where we assert that a measurement gave a chosen outcome. For instance, choosing the  outcome $\ket{0}\bra{0}$   corresponds to the post-selected measurement
$\rho \mapsto \bra{0}\rho\ket{0}$.

Let the \textbf{discard} map be $\ground\df \rho \mapsto Tr(\rho)$, which corresponds to measuring a qubit and forgetting the result. A quantum evolution $f$ is \textbf{causal} if $\ground \circ f= \ground$. Intuitively, causal evolutions are side-effect free. 

This discard map is not pure and can even be seen as the essence of all impurity in the following sense: for any completely positive map $f:A\to B$ there is a system $C$ and a pure map $p:A\to B\otimes C$ such that $f=\left(id_B\otimes\ground_C\right) \circ p$. In this situation $p$ is said to be a \textbf{purification} of $f$. Purifications are not unique, but they are in fact unique up to isometries:
 
\begin{theorem}[Stinespring dilation \cite{stinespring1955positive}]\label{thm:stinespring}
	Given two purifications $p:A\to B\otimes C$ and $p':A\to B\otimes C'$ of the same completely positive map $f:A\to B$, either there is an isometry $v:C\to C'$ such that $p'=\left(id_A \otimes v \right)\circ p$, or there is an isometry $v':C'\to C$ such that $p=\left(id_A \otimes v' \right)\circ p'$.
\end{theorem}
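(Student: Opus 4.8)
The plan is to reduce the statement to the classical freedom in Kraus/Stinespring representations, exploiting that here a purification is literally a pure map $\rho\mapsto V\rho V^\dagger$. Write $p=\bigl(\rho\mapsto V\rho V^\dagger\bigr)$ and $p'=\bigl(\rho\mapsto V'\rho (V')^\dagger\bigr)$ for linear maps $V:\mathcal H_A\to\mathcal H_B\otimes\mathcal H_C$ and $V':\mathcal H_A\to\mathcal H_B\otimes\mathcal H_{C'}$ on the underlying Hilbert spaces. Fixing orthonormal bases $\{\ket c\}$ of $\mathcal H_C$ and $\{\ket{c'}\}$ of $\mathcal H_{C'}$, set $K_c:=(id_B\otimes\bra c)\,V$ and $K'_{c'}:=(id_B\otimes\bra{c'})\,V'$, so that $V=\sum_c K_c\otimes\ket c$ and $V'=\sum_{c'}K'_{c'}\otimes\ket{c'}$. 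The purification hypothesis $f=(id_B\otimes\ground_C)\circ p$ unfolds to $f(\rho)=\mathrm{Tr}_C(V\rho V^\dagger)=\sum_c K_c\rho K_c^\dagger$, and likewise $f(\rho)=\sum_{c'}K'_{c'}\rho (K'_{c'})^\dagger$: the two families are Kraus decompositions of the same $f$. It therefore suffices to produce a linear isometry $v:\mathcal H_C\to\mathcal H_{C'}$ (meaning $v^\dagger v=\mathrm{id}$) with $K'_{c'}=\sum_c\bra{c'}v\ket c\,K_c$, since then $V'=(id_B\otimes v)V$ and hence $p'=(id_B\otimes v)\circ p$, with $v$ an isometry in $\CPM$ (note the isometry acts on the ancilla leg, so the relevant map is $id_B\otimes v$).

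Next I would linearize the Kraus condition. Vectorizing operators by $\mathrm{vec}(K)=\sum_i\ket i\otimes K\ket i$, a linear isomorphism onto $\mathcal H_A\otimes\mathcal H_B$, the Choi operator of $f$ satisfies
\[ J=\sum_c\ket{\eta_c}\bra{\eta_c}=\sum_{c'}\ket{\eta'_{c'}}\bra{\eta'_{c'}}, \qquad \ket{\eta_c}=\mathrm{vec}(K_c),\ \ket{\eta'_{c'}}=\mathrm{vec}(K'_{c'}), \]
because the Choi correspondence is a bijection and both Kraus families realise $f$. The whole problem is now the linear-algebra claim that two finite ensembles of vectors representing the same positive operator $J$ are related by an isometry of the smaller index space into the larger; de-vectorizing that isometry returns the desired $v$.

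For this ensemble lemma, assume without loss of generality $d:=\dim\mathcal H_C\le\dim\mathcal H_{C'}=:d'$; the reverse inequality yields the second disjunct by exchanging $p$ and $p'$, and the two cases are exhaustive. Diagonalize $J=\sum_{k=1}^r\lambda_k\ket{e_k}\bra{e_k}$ with $\lambda_k>0$ and put $\ket{f_k}=\sqrt{\lambda_k}\ket{e_k}$. Each $\ket{\eta_c}$ lies in $\mathrm{range}(J)=\mathrm{span}\{\ket{e_k}\}$, so $\ket{\eta_c}=\sum_k U_{ck}\ket{f_k}$; comparing with $J=\sum_c\ket{\eta_c}\bra{\eta_c}$ forces $U^\dagger U=\mathrm{id}_r$, i.e. $U$ is a $d\times r$ isometry, and similarly one gets a $d'\times r$ isometry $U'$ for the primed ensemble (here $r\le d\le d'$). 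Padding $U$ with $d'-d$ zero rows and completing both $U$ and $U'$ to $d'\times d'$ unitaries $W,W'$, the matrix $u=W'W^\dagger$ is unitary and satisfies $\ket{\eta'_{c'}}=\sum_c u_{c'c}\ket{\eta_c}$; its first $d$ columns form the sought isometry $v$, since $v^\dagger v=\mathrm{id}_d$ follows from $u$ being unitary. Devectorizing gives $K'_{c'}=\sum_c v_{c'c}K_c$, hence $p'=(id_B\otimes v)\circ p$, and the swapped argument gives $p=(id_B\otimes v')\circ p'$ when $d'\le d$.

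The main obstacle is precisely this completion/padding step. The naive candidate $U'U^\dagger$ is only a partial isometry (its $v^\dagger v$ is the projector onto the column space of $U$, not the identity, unless the dilation is minimal), and it is the completion to unitaries on the larger index space that upgrades it to a genuine isometry. This is exactly why the conclusion must be disjunctive: one can always embed the smaller dilation space isometrically into the larger, but an isometry in the chosen direction exists only when the corresponding ancilla is no larger than the other.
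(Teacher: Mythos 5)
Your proof is correct, but there is nothing in the paper to compare it against: the paper does not prove this statement at all. Theorem \ref{thm:stinespring} is quoted as a classical result with a citation to Stinespring's 1955 paper, and is then used as a black box (notably inside the paper's proof of Lemma \ref{cppure}). Your argument is the standard finite-dimensional proof via unitary freedom of Kraus decompositions: purifications are pure maps, so they unfold into Kraus families $\{K_c\}$, $\{K'_{c'}\}$ of the same $f$; both families vectorize to ensembles representing the same Choi operator $J$; and the ensemble lemma (two ensembles with the same Gram sum are related, after zero-padding the smaller one, by a unitary of the larger index space) is proved correctly --- $U^\dagger U = \mathrm{id}_r$ follows from linear independence of the $\ket{f_k}\bra{f_l}$, and $u = W'W^\dagger$ satisfies $u\tilde{U} = U'$ because $W^\dagger \tilde{U} = \bigl(\begin{smallmatrix} I_r \\ 0 \end{smallmatrix}\bigr)$, so its first $d$ columns give the desired isometry $v$ with $V' = (\mathrm{id}_B \otimes v)V$. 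The case analysis on $\dim C \lessgtr \dim C'$ correctly accounts for the disjunctive form of the conclusion, and your closing remark about why $U'U^\dagger$ alone fails (it is only a partial isometry unless the dilation is minimal) is accurate. Two small points worth making explicit: (i) the statement as printed contains a typo, $id_A \otimes v$ should be $id_B \otimes v$ since $p : A \to B \otimes C$ and $v : C \to C'$, which your proof silently corrects; (ii) the theorem lives in $\CPM$, so one should add the one-line check that a Hilbert-space isometry $v$ induces the pure map $\sigma \mapsto v\sigma v^\dagger$, which is an isometry in the paper's sense ($f^\dagger \circ f = id$), and similarly that $\eta_c \in \mathrm{range}(J)$ because $J \geq 0$ forces every $\eta_c$ to be orthogonal to $\ker J$; both are immediate but are asserted rather than argued in your write-up.
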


\subsection{Quantum Gates}

We represent the maps of $\UCPM$ as gates in circuits. This is an example of graphical language that will be formally defined in the next section. The composition corresponds to plugging gates and the tensor product to putting them side by side. Note that usually quantum circuits cannot represent $\CPM$ in full generality. But other graphical languages like the ZX-calculus have been designed for this. We only present a few gates, taken both from the quantum circuits and ZX-calculus, that we will use in examples.

\begingroup
\setlength{\tabcolsep}{15pt} 
 \[\begin{array}{rclcrcl}
		\tikzfig{ket0}&= & \ket{0}\bra{0} &\qquad  & \tikzfig{terre}&=& \rho \mapsto Tr(\rho)\\[0.3cm]
		\tikzfig{mix-}&=&\frac{1}{2}\scalebox{0.8}{$\begin{pmatrix}
		 1&\!\!0\\
		 0&\!\!1
		 \end{pmatrix}$}&&\tikzfig{bra0}&=&\rho \mapsto \bra{0}\rho\ket{0}
	\end{array}\]
\endgroup

The first gate takes no input and produces the pure state $\ket{0}\bra{0}$. The second is the discard map and the third produces the \textbf{maximally mixed} 1-qubit state. The fourth gate is the post-selected measurement selecting the outcome $\ket{0}\bra{0}$. We also have gates acting on more than one qubit:

\[\tikzfig{bellp-2} ~=~ \frac{1}{2}\scalebox{0.8}{$\begin{pmatrix}
			1&0&0&1\\
			0&0&0&0\\
			0&0&0&0\\
			1&0&0&1
			\end{pmatrix}$}\qquad\quad \tikzfig{cswap} ~=~ \rho \otimes \nu  \mapsto \nu \otimes \rho \]
			
			\[\tikzfig{cnot}~=~\rho\mapsto \scalebox{0.8}{$\begin{pmatrix}
		1&0&0&0\\
		0&1&0&0\\
		0&0&0&1\\
		0&0&1&0
		\end{pmatrix}$}\rho \scalebox{0.8}{$\begin{pmatrix}
		1&0&0&0\\
		0&1&0&0\\
		0&0&0&1\\
		0&0&1&0
		\end{pmatrix}$}\]

The first state is a \textbf{Bell pair}. This state is  entangled, meaning it cannot be written as the tensor product of two one-qubit states. We can even say that it is the maximally entangled state in the sense that discarding one qubit of the pair turns the other into the maximally mixed state. The swap exchanges two qubits and the CNot gate is a pure map acting as $\ket{x}\otimes \ket{y}\mapsto\ket{x}\otimes \ket{x\oplus y}$ on the computational basis. We are now ready to provide concrete examples of quantum computation with memory.  

\subsection{Quantum Computation with Memory}

In order to go beyond quantum computation acting on a finite register of qubits, it is natural to consider streams of qubits: we consider a global clock, such that at each clock tick some qubits are input. For allowing interactions across clock ticks, like applying a CNot on two qubits input at distinct clock ticks, a memory mechanics is required to store a qubit and intuitively wait for another qubit to be available. Quantum channels with memory, introduced by Kretschmann and Werner \cite{kretschmann2005quantum},  can be informally depicted as follows:\footnote{In \cite{kretschmann2005quantum}, the authors mainly consider the case of a clock without initialization, \emph{i.e.}, clock ticks in $\mathbb Z$ rather than $\NN_{\geq 1}$}

~\\

\tikzfig{QCwithMem}

~\\

Thus the behavior of the computer at clock tick $k>0$ is a quantum process $f_k: A_k \otimes M_{k-1} \to B_k \otimes M_k$, with $M_0\df \mathbb{C}$. Following the terminology for such processes in the  classical case \cite{sprunger2019differentiable}, we call such collection of processes 
a \textbf{stateful morphism sequence}. We give the example of a cascade of CNots gates (see Fig. \ref{fig:cnot}). At first clock tick the memory is initialized with $\ket{0}\bra{0}$. At each subsequent clock tick, a CNot is applied, the control qubit being the memory qubit and the target being the input qubit. Finally, the memory qubit is output and the input qubit is stored in the memory. 
The corresponding stateful morphism sequence is:

\begin{center}	
	\tikzfig{cnc}
\end{center}

In practice, one cannot access the whole infinite computation at once, but only what has been computed up to some clock tick $k$. To stop the computation of a stateful morphism at clock tick $k$, we discard the memory system $M_k$ and obtain, by plugging the memories, a process $\bigotimes\limits_{i=1}^{k} A_i \to \bigotimes\limits_{i=1}^{k} B_i$ called the \textbf{finite approximation} at clock tick $k$. For the cascade of CNot the sequence of finite approximations is:

\vspace{0.2cm}

\centerline{
	\begin{tabular}{cccc}
		\tikzfig{cnc2}&\hspace{-0.3cm}\tikzfig{cnc3}&\hspace{-0.3cm}\tikzfig{cnc4}&\hspace{-0.3cm}\tikzfig{cnc5}\\[1cm]
		$1$&\hspace{-0.5cm}$2$&\hspace{-0.5cm}$3$&\hspace{-0.5cm}$k$
	\end{tabular}
}

\vspace{0.2cm}

A stateful morphism sequence leads to a unique sequence of finite approximations. However, two different stateful morphism sequences can have the same sequence of finite approximations, they are then said to be \textbf{observationally equivalent}. In Section IV, we characterize  the observationally equivalent stateful morphism sequences.

\subsection{Finite Approximations and Causality}\label{sec:FA}
\label{sec:monotone}
Another important question is to characterize the sequence of finite approximations that can be produced by a stateful morphism sequence. A first guess is that there are exactly the sequences for which the behavior at clock tick $k$ does not depend on what happens at the clock ticks $k'>k$. In other words, the present only depends on the past and not on the future. 
More formally, given a sequence of finite approximations $(f_k)_{k>0}$ with $f_k:A_1\otimes \ldots \otimes A_k\to B_1\otimes \ldots \otimes B_k$, the condition is, for any $k>0$, 

\[\tikzfig{monotone_q} \quad = \quad \tikzfig{monotone_q2}\]

This condition is called \emph{causality} in the classical setting \cite{sprunger2019differentiable}, and \emph{one-way signaling} in the context of categorical quantum mechanics \cite{kissinger2017categorical} since causality has a different meaning (see section \ref{sec:discard}). This condition is also at the heart of the quantum channels with memory in \cite{kretschmann2005quantum}. 

This one-way signaling condition characterizes the sequences of finite approximations produced by sequences of \emph{causal} stateful morphisms. In particular, this notion is well adapted to $\UCPM$ where all morphisms are  causal. However, we aim at considering sequences of non-causal  stateful morphisms, like in $\CPM$. As a consequence, we need to introduce a weaker monotonicity condition to cover the non-causal case.   Useful examples of non-causal evolutions are the post-selected quantum evolutions. 
 
 In post-selected quantum mechanics the present can  depend on the future in a very specific way. It might happen that waiting gives us more information on a given state, for example turning a mixed state into a pure one. An example is given by the following post-selection protocol (see Example \ref{ex:Bell} in Section \ref{sec:delayed} for a pictorial description using a delayed trace): at each clock tick, a post-selected measurement of the memory is performed (except at the very first clock tick), then a new Bell pair is produced. One of the qubit of the pair is directly output while the other one is stored in the memory.
 The corresponding stateful morphism sequence is:

\begin{center}
	\tikzfig{bellb}
\end{center}

Notice that we have  $\tikzfig{bell-meas-1}=\tikzfig{bell-meas}$, indeed the post-selected measurement is actually implemented by a linear map, hence the scalar $\tikzfig{scal-}=1/2$ which is witnessing the fact that an actual measurement produces this particular outcome with probability $1/2$. 

The finite approximations of this protocol are:

\begin{center}
	\begin{tabular}{cccc}
		\tikzfig{bell1-}&\tikzfig{bell2-}&\tikzfig{bell3-}&\tikzfig{bell4-}\\
		$1$&$2$&$3$&$k$
	\end{tabular}
\end{center}

We see 
that if we stop at tick $k$, we have no information on the $k^\text{th}$ output, which is the maximally mixed state. However at tick $k+1$, the post-selection will force the $k^\text{th}$ output to be $\ket{0}\bra{0}$. In a sense, the future can refine the present. To formalize this we use the \textbf{Loewner order} defined on density matrices as $\rho \sqsubseteq \nu$ if $\nu - \rho$ is positive semi-definite. It can be extended naturally to maps by $f\sqsubseteq g$ if $g-f$ is completely positive. The Loewner order characterizes what it means for a state to be more precise than another. 
Since the morphisms of $\CPM$ can also be trace increasing, we consider a lax-version of the Loewner order: 
$f\preceq g$ if $\exists \lambda > 0$ such that $f\sqsubseteq \lambda g$. We will show in Section \ref{sec:sem} that the \textbf{monotone} sequences of finite approximations, \emph{i.e.}, such that $\forall k>0$, 
\[\tikzfig{monotone_q} \quad  \preceq \quad \tikzfig{monotone_q2}\]
 are exactly the ones that approximate the stateful morphism sequences in $\CPM$.

\subsection{Regularity and Finite Memory}

Among all the possible stateful morphism sequences, we  focus on the regular ones, \emph{i.e.}, those that are eventually constant: a stateful morphism sequence $(f_k)_{k>0}$ is regular if $\exists n, \forall k>n, f_k=f_n$. Notice in particular that regular stateful morphism sequences use a bounded amount of memory. Hence, it represents quantum computations acting on an unbound number of inputs but with a finite memory. This model is not a universal model of quantum computation, but it is an interesting  fragment to explore, and an intermediate scale model potentially easier to implement than a universal quantum computer using technologies which will be available in a near future. 

Intuitively,  regular  stateful morphism sequences can be finitely described as they are eventually constant. In the next section, we introduce a graphical construction which turns any graphical language equipped with a discard, and acting on finite registers, into a graphical language for regular   stateful morphism sequences, \emph{i.e.}, in the quantum case, a language for representing and reasoning about finite-memory quantum stream computation.  

\section{The Colored Graphical Language $\Stream$}

In this section we define the syntax of our language. We first set the string diagram notation and then, starting with a graphical language $\G$ describing computations we define a colored prop $\Stream$ which depicts computation with memory.

\subsection{String Diagrams}

We use the framework of props. A \textbf{prop} is a small symmetric strict monoidal category whose monoid of objects is freely spanned by one element denoted $1$. Representing the tensor additively, every object is then of the form $1+\cdots+1$ and therefore denoted $n$. The unit of the tensor is $0$. The categories $\UCPM$ and $\CPM$ from the previous section are equivalent to props\footnote{$\UCPM$ and $\CPM$ can be turned into props by defining the objects to be natural numbers and the morphisms $n\to m$ to be the linear maps $\mathcal M_{2^n\times 2^n}(\mathbb C) \to\mathcal M_{2^m\times 2^m}(\mathbb C)$.}. A \textbf{colored prop} is a small symmetric strict monoidal category whose monoid of objects is freely spanned by a set $C$ of colors. Each object can then be written as a list $c_1 + \cdots +c_k$ of colors. The string diagram notation represents arrows as boxes with colored wires as inputs and outputs. Given a $\{\blue,\red,\green,\orange\}$-colored prop $\textbf{P}$ with the arrows $F:\blue+\red \to \green+\orange$, $G:\blue\to\red$, $H:\red\to \green$ and the swap map $\sigma_{\blue,\red}: \blue+\red \to \red +\blue$ we write:

\begin{center}
	\begin{tabular}{cccc}
		$\tikzfig{ord0}$ & $\tikzfig{comp}$ & $\tikzfig{tens}$ & $\tikzfig{swap}$\\[0.5cm]
		$F$ & $H\circ G$ & $G\otimes H$ & $\sigma_{\blue,\red}$
	\end{tabular}
\end{center}

We say that a colored prop has a \textbf{compact structure} when for each color $\red$ there are two arrows $\tikzfig{cup} :0\to \red+\red$ and $\tikzfig{cap}: \red+\red \to 0$, satisfying the following equations:

\begin{center}
	$\tikzfig{capsym0}=\tikzfig{cap}\qquad\tikzfig{snake0}=\tikzfig{snake1}=\tikzfig{snake2}\qquad\tikzfig{cupsym0}=\tikzfig{cup}$
\end{center}

The monochromatic prop $\CPM$ admits a compact structure but not $\UCPM$. Referring to string diagrams we will often say \textbf{graphical language} for a monochromatic prop and \textbf{colored graphical language} for a colored prop. We write $\Gamma \vdash D=K$ when we can rewrite the diagram $D$ into the diagram $K$ using the rewriting rules $\Gamma$.

To define the finite approximations, we need a way to express the loss of the data stored in the memory when we stop the computation, in other words we need a discard map.

\begin{definition}[Discard]
	A \textbf{discard} prop is a prop $\textbf{P}$ where we fix a \textbf{discard} map $\Terre{1}:1\to 0$. We denote $\Terre{0}\df id_0$ and $\Terre{a+b}\df \Terre{a}\otimes\Terre{b}$.
\end{definition}

In a discard prop a morphism $C:a\to b$ is said to be \textbf{causal} if: $\tikzfig{cdist0}= \tikzfig{cdist1}$.

In what follows we consider only monochromatic discard props. $\UCPM$ and $\CPM$ are both discard. In $\UCPM$ all maps are causal while in $\CPM$ the causal maps are exactly the ones from $\UCPM$. Let $\G$ be a monochromatic discard graphical language defined by generators and equations. We build a colored graphical language $\Stream$ representing stream transformers.

\subsection{Type System}

The colors of the colored prop $\Stream$ are given by:

\begin{center}
	$C \df ~1~ | ~\omega~ | ~\D C$
\end{center}

A way to interpret the types is to consider a global clock that starts at the beginning of the computation. The $1$ type represents the basic data processed by $\G$. We keep in $\Stream$ all the generators and equations of $\G$. This yields an inclusion functor $\iota:\G\to \Stream$ that we will keep implicit most of the time. A wire of type $1$ sends one unit of data at tick $0$ and nothing after. The tensor unit is denoted $0$. 

The $\omega$ type represents a stream of basic data. A wire of type $\omega$ sends one unit of data at each tick. We write $n\omega$ for a stream of $n$-tuples of data, \emph{i.e.}, $\omega+\ldots +\omega$ $n$-times, with the convention $1\omega\df \omega$ and $0\omega\df 0$. For each generator $G:n\to m$ in $\G$ we define a generator $\omega g: n\omega \to m\omega$ in $\Stream$. This gives a functor $\omega: \G \to \Stream$.

The \textbf{delay modality} $\D$ can be applied to any color to produce a delayed color. We write $\D^n C$ for the color $C$ delayed $n$ times: $\D^0 C = C$, $\D^n 0\df 0$ and $\D^{n+1} C = \D(\D^n C)$. A wire of type $\D^n 1$ sends one unit of data at tick $n$ but nothing before or after that tick. A wire of type $\D^n \omega$ sends nothing until tick $n+1$ and then sends one unit of data at each tick. The delay modality is extended to tensors of colors by setting $\D (S+T)=\D S + \D T$. For each generator $g:a\to b$ we have a delayed generator $\D g: \D a \to \D b$. This then extends to an endofunctor $\D: \Stream \to \Stream$.

We see that $\G$ is a subcategory of $\Stream$ in various ways given by the functors $\D^n \circ \iota :\G\to \Stream$ and $\D^n \circ \omega :\G\to \Stream$.

As an example of how the type system works, a finite stream of size $3$ sending two units of data at each tick for the first $3$  ticks would have type: $1+1+\D 1 + \D 1 + \D^2 1 + \D^2 1= 2+ \D 2 + \D^2 2$. Here we see that the tensor is used to encode at the same time spatial and temporal juxtaposition. 

\subsection{Initialization and Derivative}

To manipulate streams we add to $\Stream$ two dual operators:

\begin{center}
	\begin{tabular}{cc}
		$\tikzfig{diff}:\omega \to 1+\D\omega\quad$&$\quad\tikzfig{init}:1+\D\omega \to \omega$\\[0.5cm]
		\textbf{stream derivative} & \textbf{stream initialization}
	\end{tabular}
\end{center}

The derivative decomposes a stream into one data at first tick and a delayed stream which corresponds to the usual stream derivative. The initialization takes a delayed stream and adds a bit of data at the beginning to make it undelayed. They interact according to:

\begin{center}
	$\tikzfig{elim0}\stackrel{\left(\triangleright\triangleleft\right)}{=}\tikzfig{elim1}$ and $\tikzfig{exp0}\stackrel{\left(\triangleleft\triangleright\right)}{=}\tikzfig{exp1}$.
\end{center}

They also satisfy a distribution rule with all the delayed omega generators:

\begin{center}
	$\tikzfig{dist0}\quad\stackrel{\left(\blacktriangleright\right)}{=}\quad\tikzfig{dist1}$
\end{center}

The dual equation $\left(\blacktriangleleft\right)$ for derivatives is also true and follows from $\left(\triangleleft\triangleright\right)$, $\left(\triangleright\triangleleft\right)$ and $\left(\blacktriangleright\right)$. Like all generators in $\Stream$, initialization and derivative admit delayed versions of types $\D^n\omega \to \D^n 1+\D^{n+1}\omega$ and $\D^n 1+\D^{n+1}\omega \to \D^n \omega$ satisfying the same equations. Those generators and equations are similar to the ones used in the scalable notations of \cite{carette2019szx}. There, such triangles were used to represent finite spatial juxtaposition of data while our generators deal with infinite temporal juxtaposition. 

\begin{example}\label{ex:scal} Example of quantum circuit stream diagram with no input/output. This diagram has side effects. At each clock tick the scalar $\tikzfig{scal-}= 1/2$ is produced. 
\[\tikzfig{exa1}~\stackrel{\left(\triangleleft\triangleright\right)}{=}~\tikzfig{exa2}~\stackrel{\left(\blacktriangleleft\right)}{=}~\tikzfig{exa3}~\stackrel{\left(\blacktriangleright\right)}{=}~\tikzfig{exa4}\]

\end{example}

\subsection{Delayed Trace}\label{sec:delayed}

An important ingredient to the construction is the \textbf{delayed trace}. It is not strictly speaking a new generator but a constructor similar to the trace in traced monoidal categories. Given a map $D:a+ \D c \to b+c$ we can trace it to construct a new map $\Dtr{c}{a,b}{D}: a\to b$ represented by:

\begin{center}
	$\tikzfig{trace}$
\end{center}

It allows a process to take as an input at tick $k+1$ one of its own outputs at tick $k$. In other words it allows to represent the memories of a stateful morphism sequence. It satisfies the following trace-like axioms:

\begin{center}
	\begin{tabular}{cc}
		$\tikzfig{tracenat0}=\tikzfig{tracenat1}$ & $\tikzfig{tracenat2}=\tikzfig{tracenat3}$\\[0.75cm]
		
		$\tikzfig{tracenat4}=\tikzfig{tracenat5}$ & $\tikzfig{tracenat6}=\tikzfig{tracenat7}$
		
	\end{tabular}
\end{center}

The axiom for the tensor unit is here a tautology: $\Dtr{0}{a,b}{D}=D$. For typing reason we cannot obtain the identity if we trace the swap. Instead we get the \textbf{delay}: $\tikzfig{delay1}: c\to \D c$.

\begin{center}
	$\tikzfig{delay1}\df \tikzfig{delay0}$
\end{center}

The delay holds its input at tick $k$ and releases it at tick $k+1$. When $\G$ has a symmetric compact structure we can recover the delayed trace from the delay:

\begin{center}
	\scalebox{0.7}{$\tikzfig{comptrace0}=\tikzfig{comptrace5}=\tikzfig{comptrace1}=\tikzfig{comptrace2}=\tikzfig{comptrace3}=\tikzfig{comptrace4}$}
\end{center}

\begin{example}\label{ex:Bell} The protocol described in Section \ref{sec:FA} can be depicted as follows.  At each clock tick, a post-selected measurement of the memory is performed and a new Bell pair is produced. One of the qubit of the pair is directly output while the other one is stored in the memory. 
\[\tikzfig{Bell-stream}=\tikzfig{Bell-stream-2}=\tikzfig{Bell-stream-3}\]
\end{example}

\subsection{Causal Maps}

Another axiom of the trace that we do not have is the analog of dinaturality. In fact this would imply that everything commutes with the delay:

\begin{center}
	\scalebox{0.8}{$\tikzfig{kom_3}=\tikzfig{kom_2} \Rightarrow \tikzfig{kom_1}=\tikzfig{kom_0}$}
\end{center}

Intuitively, it does not make any difference if we apply $K$ on the memory data 
at tick $k$ or at tick $k+1$. We want this to be true only if $K$ has no side effect, otherwise, we would identify processes that are not observationally equivalent, \emph{e.g.}, a process that diverges at tick 3 with a process that diverges at tick 4. Thus we require this from swaps, derivatives and initializations:

\begin{center}
	\scalebox{0.8}{$\tikzfig{delaydist0}\stackrel{\left(\triangleleft\right)}{=} \tikzfig{delaydist1}\qquad \tikzfig{delaydist2}\stackrel{\left(\triangleright\right)}{=}\tikzfig{delaydist3}$}
\end{center}
\begin{center}
	\scalebox{0.8}{$\tikzfig{sidist0}\stackrel{\left(\sigma\right)}{=} \tikzfig{sidist1}$}
\end{center}

and from all causal maps $C$ of $\G$:

\begin{center}

	\scalebox{0.8}{$\tikzfig{cdist0}= \tikzfig{cdist1}\quad \Rightarrow\quad \tikzfig{dcom2}=\tikzfig{dcom3}$} $\quad\left(\ground\right)$.\\

\end{center}

Note that the swaps coming from $\G$ are causal by definition and thus $(\sigma)$ follows from $(\ground)$. However, some swaps like the one between $1$ and $\omega$ do not come from morphisms of $\G$, hence the need for the axiom $(\sigma)$. Using the upcoming rule $(\D_{\mathbb{N}})$, we are able to derive the same equation with $\D^n \omega C$ instead of $\D^n \iota C$.

\begin{example}The rules defined so far can be used to unfold the first step of the cascade of CNots: 
\[\tikzfig{CNot-casc-1} \stackrel{\left(\triangleleft\triangleright\right)}{=} \tikzfig{CNot-casc-2} \stackrel{\left(\blacktriangleright\right)}{=} \tikzfig{CNot-casc-3}\]\[\stackrel{\UCPM}{=} \tikzfig{CNot-casc-4}\stackrel{\left(\triangleright\right)}{=}\tikzfig{CNot-casc-5}\]
\end{example}

\subsection{Idempotents}

Another identification we want to make with respect to observational equivalence concerns idempotents. Indeed, if an idempotent is applied on the memory data at tick $k$ then applying it again at tick $k+1$ has no effect at all. So it is possible to send a copy of an idempotent through the delay.

\begin{center}
	\scalebox{0.75}{$\tikzfig{pidist0}= \tikzfig{pidist1} \Rightarrow \tikzfig{pdcom2}=\tikzfig{pdcom3}$} $\quad\left(\pi\right)$\\
\end{center}

Using the upcoming rule $(\D_{\mathbb{N}})$, we are able to derive the same equation with $\D^n \omega \pi$ instead of $\D^n \iota \pi$. We note that the axiom $(\pi)$ implies:

\begin{center}
	 $\tikzfig{pidist2}=\tikzfig{pidist3}$.
\end{center}

Causal and idempotent morphisms are not the only two kind of morphisms that interact with the delay in some way. However, because of our completeness result (\cref{thm:completeness}), we know that we can derive all those interactions from the present axioms.

\subsection{Coinduction}

We now introduce the final elements of the streamed prop construction. We denote by $\AX$ all the axioms that have been presented so far: $\left(\triangleright\right)$, $\left(\triangleleft\right)$, $\left(\blacktriangleright\right)$, $\left(\blacktriangleleft\right)$, $\left(\triangleright\triangleleft\right)$, $\left(\triangleleft\triangleright\right)$, $\left(\sigma\right)$, $\left({\ground}\right)$, $\left(\pi\right)$, and the four trace-like axioms.

 To obtain the final axiomatization of $\Stream$ we quotient by a coinduction meta rule. Given two sequences of diagrams $\left(S_i\right)_{i\in\mathbb{N}}$ and $\left(T_j\right)_{j\in\mathbb{N}}$ in $\Stream$:
 
\begin{center}
	
	$\begin{prooftree}
		\hypo{\forall n\in \mathbb{N}\qquad \AX, \left[\D S_{n+1}= \D T_{n+1}\right] \vdash S_n =T_n}
		\infer1[$\left(\D_\mathbb{N}\right)$]{\AX  \vdash S_0=T_0}
	\end{prooftree}$
	
\end{center}

In practice we will use a weaker form which corresponds to the constant case where for all $i,j\in \mathbb{N}$, $S_i=S$ and $T_j =T$:

\begin{center}
	
	$\begin{prooftree}
	\hypo{\AX, \left[\D S= \D T\right] \vdash S =T}
	\infer1[$\left(\D\right)$]{\AX  \vdash S=T}
	\end{prooftree}$
	
\end{center}

Note that the weak form is also equivalent to the strong one in the case of eventually constant sequences of diagrams.
We provide an example of the coinduction principle in action. 

\begin{example}
	We can show that discarding is the same as storing forever in a memory:
	
	\begin{center}
		$\tikzfig{forever0}=\tikzfig{forever1}$
	\end{center}

	By applying $\left(\D\right)$ to show the equality, we can assume that its delayed version holds.
	
	\begin{center}
		$\tikzfig{forever0}\stackrel{\left(\blacktriangleright\right)}{=}\tikzfig{forever2}\stackrel{\left(\D\right)}{=}\tikzfig{forever4}\stackrel{\left({\ground}\right)}{=}\tikzfig{forever1}$
	\end{center}
	Note that we use the fact that the discard map is causal in the last step.
\end{example}

 With a similar proof we can also derive the rules $\left({\ground}\right)$ and $\left(\pi\right)$ for the $\omega$ generators. Our definition of $\Stream$ is now complete.

\section{$\Stream$ and Stateful Morphism Sequences}
 
We aim to describe every diagram $D\in\Stream$ with stateful morphism sequences, \emph{i.e.}, a sequence of diagrams of $\G$, the k$^\text{th}$ diagram of the sequence representing the behavior of $D$ at clock tick $k$. 

\subsection{Stateful Morphism Sequences}

We start by defining the stateful morphism sequences over $\G$.

\begin{definition}
	A \textbf{stateful morphism sequence} $f$ over a prop $\G$ is given by three sequences of objects $\left(a_i\right)_{1\leq i}$, $\left(b_i\right)_{1\leq i}$ and $\left(m_i\right)_{0\leq i}$ with $m_0\df 0$ together with a sequence of maps $f_i: a_i \otimes m_{i-1} \to b_i\otimes m_i$ with $1\leq i$. $f_i$ is called the $i$th \textbf{layer} of $f$.
\end{definition}

{A {stateful morphism sequence} $f$ can be (informally) depicted as follows: Layers are separated by dash lines and we connect in red the memory of consecutive layers.}
{\[\tikzfig{gmap0}\]}

Following \cite{sprunger2019differentiable} we define a category $\textup{St}(\G)$ of stateful morphism sequences over $\G$. The objects are the sequences $\left(a_i\right)_{1\leq i}$ of objects in $\G$. When the types match, $g\circ f$ and $f \otimes g$ are defined as:
\begin{center}
	$(g\circ f)_k= \tikzfig{compseq-}\qquad(f\otimes  g)_k= \tikzfig{comptens-}$
\end{center}

We define a delay operator on stateful sequence morphism as $\left(\D f\right)_1 \df id_0$ and $\left(\D f\right)_k \df f_{k-1}$.

Clearly, all diagrams in $\Stream$ being finite we cannot represent arbitrary sequences. In fact we will see that we can only represent   $\RegStSeq{\G}$, the subcategory of $\textup{St}\left(\G\right)$  restricted to  \textbf{regular} stateful morphism sequences in which we consider only eventually constant sequences.

\subsection{Finite Approximations}

	A \textbf{finite aproximation sequence} over a discard prop $\G$ is given by two sequences of objects $\left(a_i\right)_{1\leq i}$ and $\left(b_i\right)_{1\leq i}$, together with a sequence of maps $f_k: \bigotimes\limits_{i=1}^k a_i \to \bigotimes\limits_{i=1}^k b_i$. $f_k$ is called the $k^\text{th}$ \textbf{approximation} of $f$. We postpone to Definition \ref{def:fa} the precise conditions that those sequences must satisfy.
Given a stateful morphism sequence we define its finite approximation sequence $\textup{FA}(f)$ by:

\begin{center}
	$\textup{FA}\left(\scalebox{0.7}{\tikzfig{gmap0}}\right)_k\df\scalebox{0.7}{\tikzfig{fadef}}$
\end{center}

There is also a category $\textup{FA}(\G)$ of finite approximation sequences over $\G$ where composition and tensor are defined approximation-wise. We have a symmetric monoidal functor $\textup{FA}:\textup{St}(\G) \to \textup{FA}(\G)$. Two stateful sequences with the same image by $\textup{FA}$ are said to be observationally equivalent. We design $\Stream$ towards the goal to be universal and complete for $\textup{FA}(\G)$. we will see that while we do not achieve this goal in full generality, we can get close enough in some particular cases.

\subsection{Stratified Types}

There are more types in $\Stream$ than in $\RegStSeq{\G}$. We need to quotient in order to obtain a correspondence. A type in $\Stream$ is a list of colors of two kinds: $\D^k 1$ and $\D^k \omega$. We call the \textbf{degree} of a type the highest delay appearing in it. A type $a$ of degree $d$ is said to be \textbf{stratified} if it is of the form: $a=\sum_{i=0}^{d} \D^i n_i + \D^{d} n_{d+1}\omega$, i.e., colors are sorted by increasing delays with at the end the $\omega$ colors, all having the largest delay. Note that some $n_i$ might be $0$. Given a stratified type $a$ of degree $d$ we define the \textbf{rising} $\delta^k a$ with $k\geq d$ as:

\begin{center}
	 $\delta^k a\df \sum_{i=0}^{d} \D^i n_i + \sum_{j=d}^{k} \D^{j} n_{d+1} + \D^{k} n_{d+1}\omega$.
\end{center}

Note that $\delta^d a = a$. $\delta^k a$ is a stratified type of degree $k$ if $a$ contains $\omega$ types and $d$ otherwise. We say that two types are \textbf{disjoint} if there is no tick when they both send data. Formally, we inductively define a disjointness symmetric relation $*$ on types as: $0*a\Leftrightarrow a\neq 0$, $\D^i 1 * \D^j 1 \Leftrightarrow i\neq j$ ,$\D^i 1 * \D^j \omega \Leftrightarrow i< j$ and $a*(b+c) \Leftrightarrow \left(a*b\right) \land \left(a*c\right)$. A \textbf{disjoint swap} is a swap $a+b\to b+a$ between disjoint types, i.e., such that $a*b$. 

The category of \textbf{disjoint maps} is defined as the wide subcategory of $\Stream$ spanned by initializations, derivatives and disjoint swaps. Note that the subcategory of disjoint maps is a groupoid since all generators are invertible and their inverses are also disjoint maps. The disjoint maps do not modify anything about the processing of data but are graphical bureaucracy that we need to quotient.

\begin{lemma}
	Given a type $a$ of degree $d$. There is always a disjoint map $a\to \overline{a}$, where $\overline{a}$ is stratified of degree $d$, as well as a disjoint map $\overline{a}\to \delta^k\overline{a}$ for each $k\geq d$. And if there is a disjoint map $a\to b$ then it is unique and there is a $k$ such that $\delta^k \overline{a}=\delta^k\overline{b}$.
\end{lemma}

We define the equivalence relation $\sim_s$ on types as $a\sim_s b$ whenever there exists a (necessarily unique) disjoint map $a\to b$. So if $a\sim_s c$ and $b\sim_s d$ then there is a natural bijection between $\Stream(a,b)$ and $\Stream(c,d)$ whose components are the disjoint maps $a\to c$ and $b\to d$. This implies that we have a well defined monoidal category $\quot{\Stream}{\sim_s}$. We write $D\sim_s D'$ whenever two diagrams are identified by this quotient.

For any type $a$ of degree $d$ the equivalence class of $a$ for $\sim_s$ is simply the set $\{b~|~ \exists k, \delta^k\overline{b} = \delta^k \overline{a} \}$. Those equivalence classes are in bijection with the ultimately constant sequences of types in $\G$. Thus, the objects of $\quot{\Stream}{\sim_s}$ and $\RegStSeq{\G}$ are in bijection.

We define the functor $G:\RegStSeq{\G}\to \quot{\Stream}{\sim_s}$ from the functor $\RegStSeq{\G}\to \Stream$ by mapping the sequence of objects in $\G$ eventually constant from tick $d$ to the corresponding stratified type of degree $d$ in $\Stream$. A stateful morphism sequence is then mapped to a diagram as follows: 

\begin{center}
	\scalebox{0.7}{$\tikzfig{gmap0}\mapsto \tikzfig{gmap1}$}
\end{center}

In fact all diagrams of $\quot{\Stream}{\sim_s}$ can be written in such form.

\begin{lemma}\label{gfull}
	$G:\RegStSeq{\G}\to \quot{\Stream}{\sim_s}$ is full.
\end{lemma}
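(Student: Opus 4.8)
The plan is to prove fullness by establishing the normal form announced just before the statement: every diagram $D$ of $\quot{\Stream}{\sim_s}$ between (stratified) types is $\AX$-equal, after normalizing types with the disjoint maps of the preceding lemma, to a staircase diagram $G(f)$ for some regular stateful morphism sequence $f$. Since the objects of $\RegStSeq{\G}$ and of $\quot{\Stream}{\sim_s}$ are already in bijection, such a normal form immediately yields surjectivity of $G$ on each hom-set, i.e.\ fullness. I would prove this normal form by structural induction on the way $D$ is built from the generators of $\Stream$ using the three operations at our disposal: composition $\circ$, tensor $\otimes$, and the delayed trace $\Dtr{c}{a,b}{-}$.

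For the base cases I exhibit, for each generator, an explicit regular stateful morphism sequence whose image under $G$ recovers the generator up to $\sim_s$. Identities and $\G$-swaps give constant sequences (possibly permuting memory); genuinely temporal swaps that are \emph{disjoint} are absorbed by the quotient $\sim_s$ through the disjoint-map lemma. A generator $\omega g$ is the constant sequence with every layer equal to $g$ and trivial memory; a delayed generator $\D^n\iota g$ is the sequence that is the identity except for a single layer carrying $g$, the delay being realized by memory wires; and the derivative and initialization are read directly off their types, using $\left(\triangleright\triangleleft\right)$ and $\left(\triangleleft\triangleright\right)$ to normalize. The inductive steps for $\circ$ and $\otimes$ are then essentially free: since $G$ is a symmetric monoidal functor, if $D_1=G(f)$ and $D_2=G(g)$ then $D_2\circ D_1=G(g\circ f)$ and $D_1\otimes D_2=G(f\otimes g)$ in $\quot{\Stream}{\sim_s}$, and composites and tensors of eventually-constant sequences are again eventually constant, so regularity is preserved.

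The delayed-trace step is the crux and the main obstacle. Given $D=\Dtr{c}{a,b}{D'}$ with, by induction hypothesis, $D'=G(h)$ a staircase for a regular sequence $h:a+\D c\to b+c$, I must show that $\Dtr{c}{a,b}{G(h)}$ is again a staircase. The intended picture is that the feedback wire routes the $c$-output of layer $k$ into the (delayed) input of layer $k+1$, so the traced diagram should be $G(h')$ where $h'$ is $h$ with its memory augmented by the $c$-component at each tick. To make this precise I would use the four trace-like axioms (the unit axiom, superposing, and sliding/naturality) to drag the feedback loop inside the layered decomposition, together with the delay-commutation axioms $(\sigma)$, $(\triangleleft)$, $(\triangleright)$, $(\pi)$ and $\left(\ground\right)$ to move the per-tick pieces of $c$ across the internal delays so that they line up with consecutive layers. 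Regularity of $h'$ follows because $h$ is eventually constant and the traced color $c$ is fixed, so the augmented layers stabilize from the same tick on.

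I expect the delicate bookkeeping to lie precisely in the interaction between the internal delay modalities occurring inside $c$ and $D'$ and the stratification of the types: lining up the feedback of $c$ with the right layer requires realigning types along the rising operator $\delta^k$ and invoking the disjoint-map lemma to absorb the resulting reshuffles into $\sim_s$. Where a residual tail equality remains after the finite transient prefix has been exposed, I would discharge it with the coinduction rule $(\D)$, which is exactly suited to identifying the two eventually-constant behaviours. Once all three inductive cases are established, every finite diagram of $\Stream$ reduces to staircase form, and hence lies in the image of $G$, proving the lemma.
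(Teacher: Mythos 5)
Your strategy---structural induction on how $D$ is assembled, with the delayed trace as the hard case---is genuinely different from the paper's proof, which is a single global normalization: all delayed traces are first pulled out (using the four trace-like axioms) into \emph{one} trace surrounding a trace-free diagram, then every type inside is raised to the maximal degree $d$ occurring in $D$ using $(\triangleleft\triangleright)$, $(\triangleright)$, $(\blacktriangleright)$, $(\blacktriangleleft)$, $(\triangleright\triangleleft)$ so that no interior derivative or initialization remains, and finally the surviving generators are grouped into layers, producing the staircase directly. Note what that route buys: it never needs $G$ to be a functor, let alone a monoidal one; it only needs that every diagram rewrites to \emph{some} staircase.

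That is exactly where your proof has a genuine gap. Your inductive steps for $\circ$ and $\otimes$ are declared ``essentially free'' by appeal to $G$ being a symmetric monoidal functor into $\quot{\Stream}{\sim_s}$. But that property is not available at this point: the paper never verifies that $G(g)\circ G(f)=G(g\circ f)$ or $G(f)\otimes G(g)=G(f\otimes g)$ hold in the quotient, and proving these is rewriting work of the same nature and magnitude as your trace case---composing two staircases requires cancelling the initialization/derivative interfaces via $(\triangleleft\triangleright)$ and $(\triangleright\triangleleft)$ and merging the memory loops via the vanishing/superposing axioms, and tensoring requires interleaving two staircases into a single one. So your induction silently defers essentially all of the real work either to an unproved lemma or to the trace case, which you only sketch. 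Two further points. First, your base cases omit the non-disjoint temporal swaps (e.g.\ the swap between $1$ and $\omega$, or between $\D^i\omega$ and $\D^j\omega$ with $i\neq j$): these are neither $\G$-swaps, nor $\omega$-swaps, nor disjoint, so they are not covered by any of your listed cases and must first be decomposed through derivatives and initializations into $\G$-swaps and disjoint swaps. Second, the four trace-like axioms do \emph{not} include sliding/dinaturality---the paper explicitly excludes it, since it would force everything to commute with the delay---so every manipulation of the feedback wire must pass through the restricted rules $(\sigma)$, $(\triangleleft)$, $(\triangleright)$, $(\ground)$, $(\pi)$; you do invoke these, but this makes ``merge the feedback into the staircase'' the entire mathematical content of the lemma rather than bookkeeping, and it is precisely the part your proposal leaves unproven.
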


\subsection{Correspondence with $\quot{\RegStSeq{\G}}{\equiv}$}

We add the following rewriting rules $\equiv$ to $\RegStSeq{\G}$ to match the rewriting rules of $\Stream$. We define $D \equiv D'$ as:
\[ \CM,\IM \vdash D = D' \]
with $\vdash$ admitting the usual deduction rules of a congruence\footnote{The rules for reflexivity, symmetry, and transitivity, together with the preservation under contexts of the form $(S \otimes -)$, $(- \otimes S)$, $(S \circ -)$ and $(- \circ S)$.} together with the coinduction rule:
\begin{center}
	
	$\begin{prooftree}
		\hypo{\forall n\in \mathbb{N}\qquad \Gamma, \left[\D S_{n+1}= \D T_{n+1}\right] \vdash S_n =T_n}
		\infer1[$\left(\D_\mathbb{N}\right)$]{\Gamma \vdash S_0=T_0}
	\end{prooftree}$
	
\end{center}
and with the axioms $\CM$ and $\IM$ being the following, for $c$ causal and $\pi$ idempotent:
\begin{center}
	$\CM \vdash \tikzfig{conggr0} = \tikzfig{conggr1}$\\[20pt]
	$\IM \vdash \tikzfig{congpi0} = \tikzfig{congpi1}$
\end{center}
We write $\quot{\RegStSeq{\G}}{\equiv}$ for the quotient $\RegStSeq{\G}$ by this congruence.

\begin{lemma}
	$f\equiv g \Leftrightarrow G(f)=G(g)$
\end{lemma}

This implies that $G$ factorizes into the projection $\RegStSeq{\G}\to \quot{\RegStSeq{\G}}{\equiv}$ followed by a full and faithful functor $\mathfrak{G}:\quot{\RegStSeq{\G}}{\equiv}\to\quot{\Stream}{\sim_s}$.

\begin{center}
	\tikzfig{factor}
\end{center}

It follows that we can completely characterize $\Stream$ by the $\equiv$ relation on stateful morphism sequences.

\begin{theorem}\label{thm:equivalence_cat}
	For any discard monochromatic prop $\G$:
	\begin{center}
		$\quot{\RegStSeq{\G}}{\equiv}\quad\simeq\quad\quot{\Stream}{\sim_s}$.
	\end{center}
\end{theorem}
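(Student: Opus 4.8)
The plan is to read Theorem~\ref{thm:equivalence_cat} as a formal consequence of the three preceding lemmas together with the standard characterization of equivalences: a functor is an equivalence of categories precisely when it is full, faithful, and essentially surjective on objects. I would therefore verify each of these three properties for the factorized functor $\mathfrak{G}\colon \quot{\RegStSeq{\G}}{\equiv}\to\quot{\Stream}{\sim_s}$ and then conclude.

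First I would make the factorization explicit. The forward direction of the lemma $f\equiv g \Leftrightarrow G(f)=G(g)$ says that $G$ sends $\equiv$-related morphisms to equal morphisms, so $G$ respects the congruence and factors uniquely as $G=\mathfrak{G}\circ \pi$, where $\pi\colon\RegStSeq{\G}\to \quot{\RegStSeq{\G}}{\equiv}$ is the quotient projection. Since $\equiv$ is a congruence on hom-sets only and does not identify objects, $\pi$ is the identity on objects, so $\mathfrak{G}$ and $G$ agree on objects. Faithfulness of $\mathfrak{G}$ is then exactly the converse implication $G(f)=G(g)\Rightarrow f\equiv g$: lifting two equal-image morphisms of the quotient to representatives $f,g$ gives $G(f)=G(g)$, hence $f\equiv g$, i.e. they were already identified in $\quot{\RegStSeq{\G}}{\equiv}$. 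Fullness is immediate from Lemma~\ref{gfull}: as $G$ is full, every morphism of $\quot{\Stream}{\sim_s}$ lies in the image of $G$, hence of $\mathfrak{G}$.

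For essential surjectivity I would invoke the discussion preceding Lemma~\ref{gfull}: the $\sim_s$-classes of $\Stream$-types are in bijection with the ultimately constant sequences of $\G$-objects, which are precisely the objects of $\RegStSeq{\G}$ and hence of $\quot{\RegStSeq{\G}}{\equiv}$. Crucially, this bijection is the one realized by $G$ on objects, namely sending an eventually-constant sequence to its stratified type of matching degree; so $\mathfrak{G}$ is not merely essentially surjective but bijective on objects. Assembling the three properties, $\mathfrak{G}$ is full, faithful, and bijective on objects, hence an equivalence (indeed an isomorphism) of categories, which is the desired $\quot{\RegStSeq{\G}}{\equiv}\simeq\quot{\Stream}{\sim_s}$.

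The theorem carries essentially no new difficulty beyond the lemmas, so the only points requiring care are bookkeeping ones. The main thing to check is that the object-level bijection from the earlier paragraph is genuinely the action of $\mathfrak{G}$ on objects, so that essential surjectivity is with respect to $\mathfrak{G}$ rather than being an unrelated bijection of object-sets; and, dually, that $\pi$ collapses no objects, so the factorization $G=\mathfrak{G}\circ\pi$ is legitimate and $\mathfrak{G}$ is well defined. Both follow at once once one observes that $\equiv$ is a congruence acting on morphisms alone, and that $G$ was defined on objects precisely via the stratified-type correspondence.
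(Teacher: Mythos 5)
Your proposal is correct and follows essentially the same route as the paper: the theorem is obtained exactly as you describe, by factorizing $G$ through the quotient (using the forward implication of the lemma $f\equiv g \Leftrightarrow G(f)=G(g)$), getting faithfulness from the converse implication, fullness from Lemma~\ref{gfull}, and the object-level correspondence from the stratified-type bijection, so that $\mathfrak{G}$ is fully faithful and bijective on objects. Your write-up merely makes explicit the bookkeeping (that $\pi$ collapses no objects and that the object bijection is the action of $G$) which the paper leaves implicit.
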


\section{The Semantics of $\Stream$}\label{sec:sem}

In this section, we define the semantics of our language $\Stream$. We assume given a semantics $\sem{-} : \G \to \C$ of $\G$, and will extend it into a semantics $\sembis{-}$ for diagrams of $\Stream$.  The semantics of $D \in \Stream$  will be its finite approximation sequences, \emph{i.e.}, a sequence of morphisms of $\C$, with the $k^\text{th}$ morphism corresponding to the computations done up until the $k^\text{th}$ tick of the clock. We prove that our semantics is sound. We also prove universality and completeness up to some  additional requirements on $\G$ and $\C$.

\subsection{Discard Category}
\label{sec:discard}
The only required assumption for the definition of $\Stream$ is that $\G$ is a \emph{discard prop}. We expect the category $\C$ to enjoy the same property. As a consequence, we introduce a straightforward extension of the notion of discard prop to the symmetric monoidal case:
\begin{definition}[Discard]
	A \textbf{discard} category is a symmetric monoidal category $(\C,\otimes,I)$ together with, for every object $A$, a \textbf{discard} map $\Terre{A} : A \to I$ such that $\Terre{I} = \id_{I}$ and $\Terre{A \otimes B} = \Terre{A} \otimes \Terre{B}$. 
\end{definition}
We write $\C_{\textup{causal}}$ its subcategory of \emph{causal morphisms}, \emph{i.e.}, morphisms $f$ such that $\Terre{B} \circ f = \Terre{A}$. We say that a monoidal functor $\mathcal{F}$ between discard categories is \emph{discard-preserving} if $\mathcal{F}(\Terre{A}) = \Terre{\mathcal{F}(A)}$. We say that it is \emph{discard-reflecting} if whenever $\mathcal{F}(f) = \Terre{\mathcal{F}(A)}$ we have $f = \Terre{A}$. Those properties are equivalent to $\mathcal{F}$ respectively preserving or reflecting causal morphisms.

From now on, we assume that $\C$ is a discard category and the functor $\sem{-}$ is monoidal and discard-preserving, as those properties are required for soundness. For completeness, we will additionally expect $\sem{-}$ to be discard-reflecting.

\subsection{Semantics}

To define the semantics of a diagram $D \in \Stream(a,b)$, we rely on the fact that $\quot{\Stream}{\sim_s}$ and $\quot{\RegStSeq{\G}}{\equiv}$ are equivalent categories (Theorem \ref{thm:equivalence_cat}), and write $\partial D$ the equivalence class of stateful morphism sequences associated to $\quot{D}{\sim_s}$. We then apply $\sem{-}$ to each layer:
we write $\sem{-}^{\textup{St}}$ for the functor from $\quot{\RegStSeq{\G}}{\equiv}$ to $\quot{\RegStSeq{\C}}{\equiv}$ which simply applies $\sem{-}$ to every layer of the sequence. This functor inherits all the properties of $\sem{-}$, and is in particular monoidal and discard-preserving.

We then collect all the operations happening up until the $k^\text{th}$ tick for $k \geq 1$. For $\alpha = (\alpha_n)_{n \geq 1} \in \RegStSeq{\C}(A,B)$ we define its $k$-th finite approximation $\fa{\alpha}_k \in \C(\fa{A}_k,\fa{B}_k)$ as follows:
\[ \fa{A}_k \df A_1 \otimes \dots \otimes A_k \qquad  \fa{\alpha}_k  \df \tikzfig{onion_from_strat_bis}\]
We note that while stateful morphism sequences of a diagram are defined up to the observational equivalence $\equiv$, $\fa{-}_k$ is sound with respect to this congruence:
\begin{lemma}[Soundness]\label{prop:sembis_is_sound}
	Whenever $\alpha \equiv \beta$, for every $k \geq 1$ we have $\fa{\alpha}_k = \fa{\beta}_k$. So $\fa{-}_k$ can be seen as a functor from $\quot{\RegStSeq{\C}}{\equiv}$ to $\C$.
\end{lemma}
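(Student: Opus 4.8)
The plan is to show that the relation defined by $\alpha \mathrel{R} \beta \df (\forall k \geq 1,\ \fa{\alpha}_k = \fa{\beta}_k)$ contains $\equiv$, by induction on the derivation of $\alpha \equiv \beta$. Since $\equiv$ is the least congruence generated by the axioms $\CM$ and $\IM$ and closed under the coinduction rule $(\D_\NN)$, it suffices to check that $R$ is reflexive, symmetric and transitive, is preserved by the four elementary contexts $(S\otimes -)$, $(-\otimes S)$, $(S\circ -)$, $(-\circ S)$, contains $\CM$ and $\IM$, and is closed under $(\D_\NN)$. Before any of this I would record the single structural fact that drives the whole argument, namely that the delay shifts finite approximations by one tick: setting $\fa{\alpha}_0 \df \id_I$, one reads off directly from $(\D f)_1 = \id_0$, $(\D f)_k = f_{k-1}$ and the definition of $\fa{-}_k$ that
\[ \fa{\D\alpha}_k = \fa{\alpha}_{k-1} \qquad \text{for all } k \geq 1. \]

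Reflexivity, symmetry and transitivity of $R$ are immediate from those of equality in $\C$, and preservation under the four contexts is exactly the statement that each $\fa{-}_k$ is a (monoidal) functor on $\RegStSeq{\C}$, which follows from the approximation-wise definition of composition and tensor (the analogue for $\C$ of the functor $\textup{FA}$ of Section IV). For the base case $\CM$, the two stateful sequences differ only in whether the causal map $c$ is applied on the output end of each memory layer or on the input end of the next one; after plugging the memory wires these placements coincide on every internal wire, leaving as sole discrepancy a single copy of $\sem{c}$ sitting on the last memory $M_k$ just before it is discarded. Since $c$ is causal and $\sem{-}$ is discard-preserving, $\Terre{}\circ\sem{c}=\Terre{}$, so this copy is absorbed by the discard and the two $k$th approximations agree. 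The case $\IM$ is analogous: the two sequences differ only by a duplication of the idempotent $\pi$ across the delay, and after plugging the memory wires every duplicated copy lands next to an existing $\pi$, so the two approximations coincide because $\sem{\pi}\circ\sem{\pi}=\sem{\pi}$. Both are direct diagrammatic checks.

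The real work is closure under $(\D_\NN)$, and this is where the delay-shift identity is essential. To handle it uniformly, including nested uses, I would not argue about $R$ directly but establish, by induction on derivations, a conditional soundness invariant $\Phi$ for derivations carrying guarded hypotheses: for every derivation of $X=Y$ from $\CM,\IM$ and a finite set of guarded hypotheses $\{\D U_i = \D V_i\}_i$, and every $k\geq 1$, if $\fa{\D U_i}_j = \fa{\D V_i}_j$ holds for all $i$ and all $1\leq j\leq k$, then $\fa{X}_k = \fa{Y}_k$. The axiom, congruence and context cases of this induction are as above, a guarded hypothesis being discharged by taking $j=k$. For the $(\D_\NN)$ case, whose premises derive $S_n = T_n$ from the hypotheses augmented by $\D S_{n+1} = \D T_{n+1}$, I fix $k$, assume the ambient hypotheses hold at every level up to $k$, and prove $\fa{S_n}_j = \fa{T_n}_j$ for all $n$ and all $j\leq k$ by an inner strong induction on $j$. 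The point is that invoking a premise derivation at level $j$ requires the augmented hypothesis only at levels $j'\leq j$, and by the delay-shift identity $\fa{\D S_{n+1}}_{j'} = \fa{S_{n+1}}_{j'-1}$ this amounts to equalities at ticks strictly below $j$, which the inner induction already supplies; the base $j=1$ is free because $\fa{-}_0 = \id_I$. Taking $n=0$ and $j=k$ yields $\fa{S_0}_k = \fa{T_0}_k$, the conclusion of $(\D_\NN)$.

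The main obstacle is getting this invariant right. The naive formulation (assume $\D S = \D T$ while proving $S=T$) is circular, and the circularity is broken only by observing that the guard $\D$ shifts every use of the hypothesis strictly back in time, so soundness at tick $k$ depends on the hypothesis at ticks $<k$ alone; charging the guarded hypotheses at all ticks up to $k$, rather than at a single tick, is precisely what lets the induction pass through nested coinductions. The identity $\fa{\D\alpha}_k = \fa{\alpha}_{k-1}$ is the one ingredient that turns the coinductive rule $(\D_\NN)$ into a well-founded strong induction on the approximation index.
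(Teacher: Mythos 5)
Your proof is correct and follows essentially the same route as the paper's: both strengthen the statement to a conditional soundness invariant for derivations carrying (guarded) hypotheses, prove it by induction on derivations with the same direct checks for $\CM$ (causal map absorbed by the final discard) and $\IM$ (idempotent duplication collapses), and break the circularity of $(\D_\mathbb{N})$ via the delay-shift identity $\fa{\D\alpha}_k = \fa{\alpha}_{k-1}$, turning coinduction into a well-founded induction on the tick index with the trivial base at tick $0$; your inner strong induction on $j$ is just a repackaging of the paper's chaining $\D\alpha^{(k+1)} \approx_n \D\beta^{(k+1)} \implies \alpha^{(0)} \approx_{n+k} \beta^{(0)}$ starting from the total relation $\approx_0$. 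The only blemish is cosmetic: since the lemma concerns sequences over $\C$ itself, the $\CM$ case needs only causality of $c$ in $\C$ (i.e., $\Terre{} \circ c = \Terre{}$), and your appeal to $\sem{-}$ being discard-preserving is out of place, though harmless.
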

We then collect all of the $(\fa{\alpha}_k)_{k \geq 1}$ into a morphism $\fa{\alpha}$ of the category of morphism sequences $\Seq$ defined below, which we will later refine into the category of finite approximation sequences $\Onion$.

\begin{definition}
	For $\C$ a discard category, we define the discard category of sequences of $\C$, written $\Seq$, as follows:
	\begin{itemize}
		\item Its objects are sequences $(A_k)_{k \geq 1}$ with $A_k\in \textup{Obj}(\C)$.
		\item Its morphisms are sequences  $(f_k)_{k \geq 1}$ such that $f_k \in \C(A_k,B_k)$
		\item The composition (resp. monoidal product, resp. discard) is the composition (resp. monoidal product, resp. discard) component-wise.
	\end{itemize} 
\end{definition}

Using Lemma \ref{prop:sembis_is_sound}, we can now define $\sembis{-}$ as the composition of previously defined functors:
\[ \sembis{-} \df \fa{\sem{\parsem[]{-}}^{\textup{St}}} : \Stream \to \Seq \]
It is a discard-preserving monoidal functor.

\subsection{Examples}

\begin{figure}
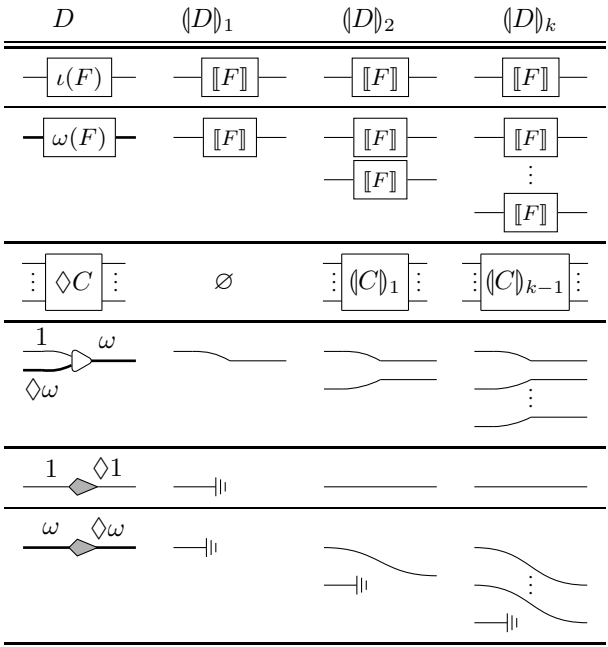

	\renewcommand{\arraystretch}{1.5}
	\begin{tabular}{c}
		$D \qquad\qquad \sembis{D}_1 \qquad\qquad \sembis{D}_2 \qquad\qquad \sembis{D}_k$\\\hline\hline
		$\tikzfig{sem_iotaF}$\\\hline
		$\tikzfig{sem_omegaF}$ \\\hline
		$\tikzfig{sem_diamand}$\\\hline
		$\tikzfig{sem_triangle}$ \\\hline
		$\tikzfig{sem_delai}$\\\hline
		$\tikzfig{sem_delai_omega}$\\\hline
	\end{tabular}
	\caption{Semantics of Generators}\label{fig:sem}
\end{figure}
To illustrate this semantics, we detail in Fig. \ref{fig:sem} the semantics of five special cases: the $\iota$-morphisms, the $\omega$-morphisms, the delayed morphisms, the stream initialization, the $\iota$-delay and the $\omega$-delay. The second case covers one of our first examples in $\CPM$ (see Example \ref{ex:scal}): the measure applied to a completely mixed state, seen as an $\omega$-morphism. At each tick of the clock, it will generate another instance of mixed stated followed by measure, which is the scalar $1/2$. At the $k$-th tick, we then have:
\begin{center} $\sembis{\tikzfig{exa1}}_k = \left(\frac{1}{2}\right)^k$ \end{center}
\subsection{Universality}

While this semantics is sound, $\Seq$ contains significantly more behaviors than $\Stream$, as $\Seq$ does not constrain any relation between the states of the computation at ticks $k$ and $k+1$. To obtain universality, we need to restrict $\Seq$ to a more realistic category. As presented previously in \cref{sec:monotone} in the quantum case, we start by adding a condition of \emph{monotonicity}:
\begin{itemize}
	\item An object $(A_k)_{k \geq 1}$ is monotone if for every $k \geq 1$, there exists an object $A'_k$ of $\C$ such that $A_{k+1} = A_k \otimes A'_k$.
	\item A morphism $(g_k)_{k \geq 1}$ is monotone if for every $k \geq 1$, we have
	\begin{center}
		$\tikzfig{monotone_q} \preceq \tikzfig{monotone_q2}$
	\end{center}
	for $\preceq$ defined below.
\end{itemize}
Decreasing along $\preceq$ means adding additional observable effects. In particular, whenever $\C$ is the category $\UCPM$, or any other category in which every morphism is causal, then $\preceq$ is nothing but  the equality. In the general case, we take $\preceq$ to be the following:
\begin{definition}
	For $f,g \in \C(A,B)$, we have $f \preceq g$ whenever there exist $g_0 \in \C(A,B \otimes X)$ and $f_0 \in \C(X,I)$ such that
	\begin{center}
		$f = \tikzfig{g0_f0} \qquad \qquad g = \tikzfig{g0_ground}$
	\end{center}
\end{definition}
In the case of $\C = \CPM$, $\preceq$ is tightly linked to the Loewner order, as we have
\[ f \preceq g \iff \exists \lambda > 0, (g - \lambda \cdot f) \in \CPM \]
We note that without additional restrictions on $\C$, $\preceq$ might not be transitive. To ensure transitivity, we require for $\C$ to have a notion of \emph{pseudo-purification}, which is a generalization of the notion of purification in the quantum case. From a programming point of view, each time the morphism is ``dumping'' some information through the use of a discard, we want to intercept this discard map and instead output that information on a secondary output so that they can be used for later computations.
\begin{definition}
	In a discard category $(\C,\otimes,I,\Terre{})$, a morphism $p \in \C(A,B \otimes X)$ is said to be a pseudo-purification of $f \in \C(A,B)$, and we write $p \in \Pure(f)$, if for every $g \in \C(A,B \otimes Y)$
	\[ f = \tikzfig{purification_G} \implies \exists c \text{ causal}, g = \tikzfig{p_c} \]	
\end{definition}
In particular, whenever $p \in \Pure(f)$ we always have
\[ f = \tikzfig{purification} \]
and given two pseudo-purifications $p_1,p_2 \in \Pure(F)$ there exist two causal morphisms $c_1$ and $c_2$ such that
\[ p_1 = \tikzfig{p2_c2} \qquad p_2 = \tikzfig{p1_c1} \]
The uniqueness of pseudo-purification up to a causal morphism is an analogue to Stinespring's dilation (Theorem \ref{thm:stinespring}), that we adapted to account for the possible absence of a notion of isometry.
\begin{definition}
	A discard category $(\C,\otimes,I,\Terre{})$ is said to be \emph{pseudo-purifiable} whenever every morphism has a pseudo-purification, and moreover for $f_1 \in \C(A_1,B_1 \otimes C)$, $f_2 \in \C(C \otimes A_2,B_2)$, $p_1 \in \Pure(f_1)$ and $ p_2 \in \Pure(f_2)$ we have:
		\[\tikzfig{purification_comp3} \in \Pure\left(\tikzfig{F_G_compo_tenseur}\right) \]
\end{definition}
In particular for $C = I$ we obtain that:
\[\tikzfig{purification_tensor2} \in \Pure(f_1 \otimes f_2) \]
\begin{lemma}
 	In a pseudo-purifiable category $(\C,\otimes,I,\Terre{})$, $\preceq$ is transitive, is preserved under composition and monoidal product, hence forms a pre-order enrichment of $\C$.
\end{lemma}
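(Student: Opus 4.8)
The plan is to establish the three claimed properties of $\preceq$ in order: transitivity, preservation under composition, and preservation under monoidal product. Throughout I would work directly with the definition: $f \preceq g$ means there exist $g_0 \in \C(A, B \otimes X)$ and $f_0 \in \C(X, I)$ with $f = (\id_B \otimes f_0) \circ g_0$ and $g = (\id_B \otimes \Terre{X}) \circ g_0$. The reflexivity is immediate by taking $X = I$, $g_0 = g$, and $f_0 = \id_I$, so the real content is transitivity, which is exactly where pseudo-purifiability is needed; I expect this to be the main obstacle.

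For transitivity, suppose $f \preceq g$ via $(g_0, f_0)$ with intermediate object $X$, and $g \preceq h$ via $(h_0, g_1)$ with intermediate object $Y$. The difficulty is that the witness $g_0$ for the first relation and the decomposition of $g$ coming from the second relation need not be compatible: $g$ is written two different ways, once as a discard applied to $g_0$ and once as a discard applied to $h_0$. The key idea is to pseudo-purify. First I would take a pseudo-purification $p \in \Pure(g)$. Since $g_0$ is a morphism $A \to B \otimes X$ whose composition with $\Terre{X}$ yields $g$, the defining property of pseudo-purification gives a causal $c$ with $g_0 = (\id_B \otimes c) \circ p$ where $c : Z \to X$ and $p \in \C(A, B \otimes Z)$. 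Then $f = (\id_B \otimes f_0) \circ (\id_B \otimes c) \circ p = (\id_B \otimes (f_0 \circ c)) \circ p$, so $f$ is obtained from $p$ by discarding through $f_0 \circ c$. Symmetrically, $h_0$ relates to $p$ via another causal morphism, letting me rewrite $h$ against the same $p$. Once both $f$ and $h$ are expressed as discards of the single common pseudo-purification $p$ of $g$, I can assemble a witness for $f \preceq h$ by pairing the two secondary outputs; here the compatibility condition built into the definition of pseudo-purifiable (that pseudo-purifications compose and tensor correctly, in particular $p_1 \otimes p_2 \in \Pure(f_1 \otimes f_2)$ for the $C = I$ case) is what guarantees the assembled morphism is again a legitimate witness with a causal discarding map.

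For preservation under composition, suppose $f \preceq f'$ and $g \preceq g'$ with composable types, witnessed by $(f_0', f_1)$ and $(g_0', g_1)$ respectively. I would form the composite $g_0' \circ f_0'$ and observe that discarding the combined secondary outputs $X_f \otimes X_g$ recovers $g' \circ f'$, while discarding through $f_1$ and $g_1$ recovers $g \circ f$; the bookkeeping is a matter of sliding the discard maps past each other using naturality of the symmetry and the fact that $\Terre{X \otimes Y} = \Terre{X} \otimes \Terre{Y}$. Preservation under monoidal product is analogous but more direct: given witnesses for $f \preceq f'$ and $g \preceq g'$, the morphism $f_0' \otimes g_0'$ together with $f_1 \otimes g_1$ and the combined discard witnesses $f \otimes g \preceq f' \otimes g'$, again using $\Terre{X \otimes Y} = \Terre{X} \otimes \Terre{Y}$ and the interchange law. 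Finally, combining reflexivity, transitivity, and the two preservation properties yields exactly the statement that $\preceq$ is a pre-order compatible with the monoidal structure, i.e.\ a pre-order enrichment of $\C$. The genuinely delicate point remains the transitivity argument, since it is the only place where one must reconcile two independent decompositions of the same morphism, and it is precisely the pseudo-purification axiom—uniqueness up to causal morphisms plus the composition/tensor compatibility—that makes this reconciliation possible.
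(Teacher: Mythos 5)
Your reflexivity, composition, and tensor arguments are fine (the paper dismisses the latter two as ``direct to prove''), and your opening move for transitivity is also correct: taking $p \in \Pure(g)$ and noting that $g = (\id_B \otimes \Terre{X}) \circ g_0$ lets you apply the defining property of pseudo-purification to $g_0$, yielding a causal $c$ with $g_0 = (\id_B \otimes c) \circ p$, hence $f = (\id_B \otimes (f_0 \circ c)) \circ p$. The gap is at the word ``Symmetrically''. The two hypotheses are \emph{not} symmetric in $g$: in $f \preceq g$ the witness $g_0$ is a discard-dilation of $g$, so $\Pure(g)$ has something to say about it; but in $g \preceq h$ the witness $h_0$ satisfies $g = (\id_B \otimes g_1) \circ h_0$ where $g_1$ is an \emph{arbitrary} effect, not the discard ($h_0$ is a discard-dilation of $h$, not of $g$). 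The defining property of $p \in \Pure(g)$ therefore gives no causal morphism relating $h_0$ to $p$. Worse, the final assembly cannot use $p$ as the common witness at all: a witness for $f \preceq h$ must discard to $h$, whereas $(\id_B \otimes \Terre{}) \circ p = g \neq h$ in general, so $h$ can never be ``expressed as a discard of $p$''; and ``pairing the two secondary outputs'' of two morphisms with the same source is a cartesian operation that does not exist in a general symmetric monoidal category.

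The repair is exactly where the paper spends its effort. Pseudo-purify the \emph{effect} $g_1$ into $g_1' \in \C(Y,V)$ (so $g_1 = \Terre{V} \circ g_1'$) and the witness $h_0$ into $h_0' \in \C(A, B \otimes Y \otimes W)$, then invoke the composition axiom of pseudo-purifiable categories to conclude that $q \df (\id_B \otimes g_1' \otimes \id_W) \circ h_0'$ is a pseudo-purification of $(\id_B \otimes g_1) \circ h_0 = g$. Now apply the defining property of $q \in \Pure(g)$ to your $p$ (or directly to $g_0$, which makes purifying $g$ unnecessary): since $g = (\id_B \otimes \Terre{X}) \circ g_0$, there is a causal $d$ with $g_0 = (\id_B \otimes d) \circ q$. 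Then
$f = (\id_B \otimes f_0) \circ g_0 = \bigl(\id_B \otimes \bigl(f_0 \circ d \circ (g_1' \otimes \id_W)\bigr)\bigr) \circ h_0'$,
while $h = (\id_B \otimes \Terre{Y}) \circ h_0 = (\id_B \otimes \Terre{Y \otimes W}) \circ h_0'$; the common dilation is $h_0'$, giving $f \preceq h$. So your instinct to invoke the composition axiom was right, but its role is to manufacture a pseudo-purification of $g$ that \emph{factors through a dilation of $h$} — the crucial direction your sketch reverses — not to justify pairing outputs over $p$.
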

\begin{lemma}\label{cppure}
The categories $\CPM$ and $\UCPM$ are pseudo-purifiable categories. Moreover, every purification (for the usual notion of purity, see Section \ref{sec:quantum}) is a pseudo-purification.
\end{lemma}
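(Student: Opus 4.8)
My plan is to reduce the whole statement to two elementary facts about $\CPM$ and $\UCPM$ together with the Stinespring uniqueness theorem. The facts are: pure maps $\rho \mapsto V\rho V^\dagger$ are closed under composition and monoidal product, and every isometry is causal (from $V^\dagger V = \id$ one gets $\ground \circ (\rho \mapsto V\rho V^\dagger) = (\rho \mapsto Tr(V^\dagger V \rho)) = \ground$). The crux is the \emph{moreover} clause, that every purification is a pseudo-purification; once this is established the three defining conditions of pseudo-purifiability follow almost formally.

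First I would prove the moreover clause. Let $p \in \C(A,B\otimes C)$ be a purification of $f \in \C(A,B)$, so $p$ is pure and $f = (\id_B \otimes \Terre{C}) \circ p$. Given $g \in \C(A,B\otimes Y)$ with $f = (\id_B \otimes \Terre{Y}) \circ g$, I pick a purification $q \in \C(A, B \otimes Y \otimes Z)$ of $g$. Discarding $Y\otimes Z$ and using $\Terre{Y\otimes Z} = \Terre{Y}\otimes\Terre{Z}$ shows $f = (\id_B \otimes \Terre{Y\otimes Z})\circ q$, so $q$ is also a purification of $f$, and Theorem~\ref{thm:stinespring} applies to the pair $p,q$. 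In Case~1 there is an isometry $v : C \to Y\otimes Z$ with $q = (\id_B \otimes v)\circ p$; setting $c := (\id_Y \otimes \Terre{Z})\circ v$ gives $g = (\id_{B\otimes Y}\otimes\Terre{Z})\circ q = (\id_B \otimes c)\circ p$, and $c$ is causal because isometries are. In Case~2 there is an isometry $v' : Y\otimes Z \to C$ with $p = (\id_B \otimes v')\circ q$; the candidate $c_0 := (\id_Y \otimes \Terre{Z})\circ v'^\dagger$ satisfies $g = (\id_B \otimes c_0)\circ p$ but is only trace non-increasing, so I repair it with a correction on $(\mathrm{range}\,v')^{\perp}$, taking $c := c_0 + \eta \circ e$ where $e := (\rho \mapsto Tr((\id_C - v'v'^\dagger)\rho))$ and $\eta$ is any fixed state. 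Then $c$ is causal, and since the $C$-marginal of the output of $p$ lies in $\mathrm{range}\,v'$ the correction term is annihilated after precomposition with $p$, so $g = (\id_B \otimes c)\circ p$ still holds.

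With the moreover clause in hand, the remaining conditions are short. Every completely positive map admits a purification by the Kraus/Stinespring construction, and this purification is a pseudo-purification by the moreover clause, so pseudo-purifications exist. For the composition condition, the evident composite of the purifications $p_1 \in \Pure(f_1)$ and $p_2 \in \Pure(f_2)$ — plug the $C$ wire of $p_1$ into $p_2$ and tensor the residual pure outputs — is again pure, and discarding those pure outputs recovers $(\id_{B_1} \otimes f_2)\circ (f_1 \otimes \id_{A_2})$ because the discard is monoidal and each $p_i$ purifies $f_i$. Hence this composite is a purification of the composite map, and therefore a pseudo-purification by the moreover clause, which is exactly the required condition.

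Finally, for $\UCPM$ the same argument runs verbatim once I observe that its morphisms are precisely the causal ones, that the purification of a \textup{CPTP} map is an isometry (pure and causal) and hence still a morphism of $\UCPM$, and that the morphism $c$ extracted in both cases is causal, thus \textup{CPTP}, thus a morphism of $\UCPM$. I expect the main obstacle to be exactly Case~2 above: converting the co-isometric candidate $c_0$ into a genuine causal morphism while certifying that the correction leaves the composite $(\id_B \otimes c)\circ p$ unchanged, which is where the support property of the environment marginal of $p$ is essential.
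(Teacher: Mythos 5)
Your treatment of the \emph{moreover} clause is correct, and your Case~2 is a genuinely different (and valid) repair compared to the paper's: where the paper decomposes the isometry $v'$ into a padding by $\ket{0^k}$ followed by a unitary $u$ and inverts it by the causal map $(\id \otimes \Terre{}) \circ u^\dagger$, you instead use the adjoint $v'^\dagger$ plus a correction state fed by the effect $\rho \mapsto Tr((\id - v'v'^\dagger)\rho)$, and you correctly argue that the correction term is annihilated on the range of $p$. Both constructions yield the required causal left inverse of $v'$, and your version has the mild advantage of working verbatim in $\UCPM$.

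However, there is a genuine gap in your verification of the composition condition. The definition of a pseudo-purifiable category quantifies over \emph{all} pseudo-purifications $p_1 \in \Pure(f_1)$ and $p_2 \in \Pure(f_2)$: the composite diagram must lie in $\Pure$ of the composite for \emph{every} such choice. Your argument silently assumes that $p_1$ and $p_2$ are \emph{purifications} (pure maps), so that their composite is again pure, hence a purification, hence a pseudo-purification by the moreover clause. This proves the condition only for the subclass of purifications, and the two classes do not coincide: the paper notes right after the lemma that every causal state, mixed or not, is a pseudo-purification of $\id_I$, while only isometric (pure) states are purifications of it. The full strength of the condition is genuinely needed downstream — the transitivity proof for $\preceq$ and the universality proof both compose arbitrarily chosen pseudo-purifications. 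The paper closes this gap by first characterizing $\Pure(f)$ exactly: $g \in \Pure(f)$ iff there is a purification $p$ of $f$ and causal morphisms $c,c'$ with $g = (\id \otimes c)\circ p$ and $p = (\id \otimes c')\circ g$ (the forward direction follows from your moreover clause together with Stinespring uniqueness). With that characterization, the general case reduces by a short diagram chase to the pure case you handled; without it, your proof establishes a strictly weaker statement than the lemma requires.
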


Not every pseudo-purification in $\CPM$, is a purification. Indeed, a purification of $\id_I$ is any isometric (so pure and causal) states $q \in \CPM(I,A)$, while every causal state (pure or not) is a valid pseudo-purification of $\id_I$.

\begin{lemma}\label{cartpure}
	Every cartesian category $(\C,\times,I)$ is pseudo-purifiable, and for every $f \in \C(A,B)$, the pairing $(f,\id_A) \in \C(A,B \times A)$ is one of the pseudo-purifications of $f$.
\end{lemma}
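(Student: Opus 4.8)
The plan is to endow the cartesian structure with its canonical discard structure and then exhibit the pairing explicitly as a pseudo-purification, using nothing beyond the universal property of the product. First I would observe that a cartesian category $(\C,\times,I)$ becomes a discard category when $I$ is taken to be the terminal object and $\Terre{A}:A\to I$ is the unique map into it. The equations $\Terre{I}=\id_I$ and $\Terre{A\times B}=\Terre{A}\times\Terre{B}$ are then immediate, since any two parallel maps into the terminal object coincide. The same uniqueness shows that for every $f:A\to B$ the composite $\Terre{B}\circ f$ is a map $A\to I$, hence equals $\Terre{A}$; thus every morphism is causal, i.e. $\C_{\textup{causal}}=\C$. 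This last fact is what makes the causality requirement on the witness $c$ in the definition of pseudo-purification automatic.

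Next I would set $X\df A$ and take $p\df (f,\id_A):A\to B\times A$, the pairing characterized by $\pi_B\circ p = f$ and $\pi_A\circ p = \id_A$. As a preliminary check I would verify the ``recover $f$ by discarding'' property $f=(\id_B\times\Terre{A})\circ p$: composing through the canonical iso $B\times I\cong B$ and using $\pi_B\circ p = f$ returns $f$, while the $A$-component is discarded by $\Terre{A}$.

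For the main clause, suppose $g\in\C(A,B\times Y)$ satisfies $f=(\id_B\times\Terre{Y})\circ g$. Projecting this equation onto $B$ forces $\pi_B\circ g = f$, because $\id_B$ acts as the identity on the $B$-component while terminality discards $Y$. I would then set $c\df \pi_Y\circ g:A\to Y$, which is causal as already noted, and compute the two components of $(\id_B\times c)\circ p$: its $B$-component is $\pi_B\circ p = f = \pi_B\circ g$, and its $Y$-component is $c\circ\pi_A\circ p = c\circ\id_A = \pi_Y\circ g$. By the uniqueness clause of the universal property this pairing equals $g$, so $g=(\id_B\times c)\circ p$ with $c$ causal, establishing $p\in\Pure(f)$ and hence that $\C$ is pseudo-purifiable.

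The whole argument is bookkeeping with the universal property, and I expect no genuine obstacle. The only points requiring minor care are keeping the monoidal-unit/terminal-object and discard/terminal-map identifications straight, and being precise that projecting the discard condition onto $B$ is exactly what pins down $\pi_B\circ g=f$, which in turn makes the second projection of $g$ the sought causal witness.
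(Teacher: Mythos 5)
Your verification that $(f,\id_A)$ satisfies the defining property of a pseudo-purification is correct, and it is in essence the paper's own argument specialized to the identity: the paper proves the stronger statement that $\Pure(f)$ is \emph{exactly} the set of pairings $(f,g)$ with $g$ left-invertible, and your computation $g=(\id_B\times c)\circ(f,\id_A)$ with $c\df\pi_Y\circ g$ is the instance $h=\id_A$ of the paper's identity $(f,g')=(\id_B\times(g'\circ h))\circ(f,g)$. Your preliminary observations (the terminal-object discard, and causality of every morphism) also match the paper's implicit setting.

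There is, however, a genuine gap at the final step, in the words ``and hence that $\C$ is pseudo-purifiable.'' In the paper's definition, being pseudo-purifiable is \emph{not} merely the existence of a pseudo-purification for every morphism: it additionally requires closure under composition, namely that for $f_1 \in \C(A_1,B_1 \times C)$, $f_2 \in \C(C \times A_2,B_2)$ and \emph{arbitrary} $p_1\in\Pure(f_1)$, $p_2\in\Pure(f_2)$, the composite obtained by plugging the $C$-output of $p_1$ into the $C$-input of $p_2$ (keeping both auxiliary outputs) lies in $\Pure$ of the corresponding composite of $f_1$ and $f_2$. This clause quantifies over all pseudo-purifications, not just the canonical ones $(f_i,\id)$, so your proof gives no handle on it; this is precisely why the paper establishes the full characterization of $\Pure(f)$, from which stability is direct (the composite of two such pairings is again a pairing whose second component is left-invertible, a left inverse being assembled from $h_1$, $h_2$ and a projection). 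Your argument can be patched: uniqueness of pseudo-purifications up to a causal morphism, played off against $(f,\id_A)$, shows that every $p\in\Pure(f)$ is of the form $(f,g)$ with $g$ left-invertible, recovering the paper's characterization. But as written, the closure condition is unaddressed and the concluding ``hence'' is a non sequitur.
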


With monotonicity defined, we are one step closer to universality. However, monotone sequences still contain behaviors that are not captured by our language. More precisely, monotone sequences contain behaviors that would be representable by an infinitely-sized diagram, but cannot be represented in our finitary language. We restrict ourselves to \emph{regular} monotone sequences, which we call \emph{finite approximations} and define as follows:
\begin{definition}\label{def:fa}
	For $\C$ a pseudo-purifiable category, we define the discard category of \textbf{finite approximation sequences} $\Onion$ as $\Seq$ restricted to  
	\begin{itemize}
		\item Objects $(A_k)_{k \geq 1}$ that are monotone (see above) and regular, \emph{i.e.}, there exists $n \geq 1$ and $A'$ such that for all $k \geq n$ we have $A_k = A_n \otimes A'^{\otimes (k-n)}$.
		\item Morphisms $(g_k)_{k \geq 1}$ that are monotone (see above) and regular, \emph{i.e.}, there exists $n \geq 1$, $g^{\textup{irreg}} \in \C(A_n,B_n \otimes M)$ and $g^{\textup{reg}} \in \C(M \otimes A',B' \otimes M)$ such that for all $k \geq n$ 
	\end{itemize}
\begin{center}$g_k = \tikzfig{regularity_g}$.
\end{center}
\end{definition}

\begin{proposition}
	For every $f \in \Stream$, $\sembis{f} \in \Onion$.
\end{proposition}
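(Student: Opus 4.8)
The plan is to unfold the definition $\sembis{f} = \fa{\sem{\partial f}^{\textup{St}}}$ and check separately the two defining conditions of $\Onion$ from Definition \ref{def:fa}, namely monotonicity and regularity, for both the object and the morphism sequence. By Theorem \ref{thm:equivalence_cat}, $\partial f$ is the $\equiv$-class of a \emph{regular} stateful morphism sequence $\alpha \in \RegStSeq{\G}$, and applying the layer-wise functor $\sem{-}^{\textup{St}}$ produces a regular stateful morphism sequence $\beta \df \sem{\alpha}^{\textup{St}}$ over $\C$, with layers $\beta_i : A_i \otimes M_{i-1} \to B_i \otimes M_i$ where $A_i = \sem{a_i}$, $B_i = \sem{b_i}$, $M_i = \sem{m_i}$ and $M_0 = I$. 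By the soundness Lemma \ref{prop:sembis_is_sound}, $\fa{\beta}$ does not depend on the chosen representative, so it is enough to argue with this $\beta$. I write $P_k : A_1 \otimes \dots \otimes A_k \to B_1 \otimes \dots \otimes B_k \otimes M_k$ for the composite of the first $k$ layers with memories connected and the last memory kept, so that $\fa{\beta}_k = (\id \otimes \Terre{M_k}) \circ P_k$ and $\fa{A}_k = A_1 \otimes \dots \otimes A_k$.

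Object monotonicity and regularity are immediate. We have $\fa{A}_{k+1} = \fa{A}_k \otimes A_{k+1}$, giving monotonicity with $A'_k = A_{k+1}$; and since $\beta$ is regular there is an $n$ with $A_k = A_n$ for all $k \geq n$, whence $\fa{A}_k = \fa{A}_n \otimes A_n^{\otimes(k-n)}$ for $k \geq n$, which is object regularity with $A' = A_n$.

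For morphism monotonicity, fix $k \geq 1$ and put $X \df M_k \otimes A_{k+1}$. I take $g_0 \df P_k \otimes \id_{A_{k+1}} : \fa{A}_{k+1} \to (B_1 \otimes \dots \otimes B_k) \otimes X$ and let $f_0 \df \Terre{B_{k+1} \otimes M_{k+1}} \circ \beta_{k+1} \circ \sigma : X \to I$, i.e.\ the $(k+1)$-th layer with \emph{all} its outputs discarded, precomposed with the symmetry $\sigma$ reordering $M_k \otimes A_{k+1}$ into the domain of $\beta_{k+1}$. Discarding the last output of $\fa{\beta}_{k+1}$ sends the first $k$ layers' memory $M_k$ and the carried wire $A_{k+1}$ into this fully-discarded last layer, so that $\Terre{B_{k+1}} \circ \fa{\beta}_{k+1} = (\id \otimes f_0) \circ g_0$; on the other hand $\fa{\beta}_k \otimes \Terre{A_{k+1}} = (\id \otimes \Terre{X}) \circ g_0$, since discarding $M_k$ and the separate input $A_{k+1}$ is exactly discarding $X$. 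By the definition of $\preceq$ this yields $\Terre{B_{k+1}} \circ \fa{\beta}_{k+1} \preceq \fa{\beta}_k \otimes \Terre{A_{k+1}}$, the desired inequality. I note that this step uses only the definition of $\preceq$ and not pseudo-purifiability. Morphism regularity then follows from eventual constancy: with $n$ as above, the layers $\beta_k$ for $k > n$ all equal a fixed $g^{\textup{reg}} \df \beta_n : M_n \otimes A_n \to B_n \otimes M_n$ (so $M = M_n$, $A' = A_n$, $B' = B_n$), and setting $g^{\textup{irreg}} \df P_n$, for every $k \geq n$ the map $\fa{\beta}_k$ is precisely $g^{\textup{irreg}}$ followed by $(k-n)$ copies of $g^{\textup{reg}}$ chained through $M$ with the terminal memory discarded, which is the shape demanded in Definition \ref{def:fa}. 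Combining the four checks gives $\sembis{f} = \fa{\beta} \in \Onion$.

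The main obstacle is the monotonicity step: one must guess the correct witness $g_0$ (carrying the would-be memory and next input as the auxiliary wire $X$) and verify the two diagrammatic identities, taking care of wire orderings through the symmetry $\sigma$; the regularity part is essentially bookkeeping once the eventual constancy of $\alpha$, hence of $\beta$, is invoked.
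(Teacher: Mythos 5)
Your proof is correct. Notably, the paper itself never spells out a proof of this particular proposition: the appendix proves soundness, universality (fullness of $\fa{-} : \RegStSeq{\C} \to \Onion$, i.e., the converse direction), and completeness, but leaves the membership claim $\sembis{f} \in \Onion$ as an implicit routine verification, so there is no official argument to compare yours against. Your verification is precisely the one the definitions are engineered for: object monotonicity and regularity fall out of the eventual constancy of the layer types of the regular stateful morphism sequence $\beta$; morphism regularity falls out of eventual constancy of the layers themselves, with $g^{\textup{irreg}} = P_n$ and $g^{\textup{reg}}$ the repeated layer (modulo the domain-reordering symmetry you correctly flag, since the paper's layers have type $A_n \otimes M_{n-1} \to B_n \otimes M_n$ while Definition \ref{def:fa} asks for $M \otimes A' \to B' \otimes M$); and morphism monotonicity is witnessed, exactly as the definition of $\preceq$ demands, by $g_0 = P_k \otimes \id_{A_{k+1}}$ and $f_0$ the fully-discarded $(k{+}1)$-th layer, and both required diagrammatic identities hold by functoriality of the tensor. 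Your remark that this membership check uses only the definition of $\preceq$ and not pseudo-purifiability is accurate and worth retaining: pseudo-purifiability is needed only so that $\preceq$ is transitive and stable under composition, i.e., so that $\Onion$ is a well-defined discard category in the first place, not for checking that a given sequence lies in it.
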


\begin{theorem}[Universality]\label{thm:universality}
	If $\C$ is pseudo-purifiable and the functor $\sem{-} : \G \to \C$ is full, then $\sembis{-} : \Stream \to \Onion$ is full too.
\end{theorem}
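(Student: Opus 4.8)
The plan is to prove fullness of $\sembis{-} : \Stream \to \Onion$ by exhibiting, for any object $\mathbf{A} = (A_k)_{k \geq 1}$ and $\mathbf{B} = (B_k)_{k \geq 1}$ of $\Onion$ and any morphism $\boldsymbol\alpha = (\alpha_k)_{k \geq 1} \in \Onion(\mathbf{A},\mathbf{B})$, a diagram $D \in \Stream$ with $\sembis{D} = \boldsymbol\alpha$. Using Theorem \ref{thm:equivalence_cat} and Lemma \ref{gfull}, it suffices to construct a \emph{regular} stateful morphism sequence $f$ over $\G$ such that $\fa{\sem{f}^{\textup{St}}} = \boldsymbol\alpha$; the diagram $D$ is then recovered via the equivalence $\quot{\RegStSeq{\G}}{\equiv} \simeq \quot{\Stream}{\sim_s}$. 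Because $\boldsymbol\alpha$ is a finite approximation sequence, it is by \cref{def:fa} regular and monotone, so both its object sequences eventually grow by tensoring with a fixed $A'$ (resp.\ $B'$) and the morphism is eventually generated by a single regular layer $\alpha^{\textup{reg}}$. This regularity is precisely what lets us hope for a \emph{finite} preimage.

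The technical heart is to reconstruct, tick by tick, the \emph{layers} $f_k : A_k \otimes M_{k-1} \to B_k \otimes M_k$ of a stateful sequence whose finite approximations match $\boldsymbol\alpha$. The idea is to compute each layer as the ``difference'' between consecutive approximations $\alpha_k$ and $\alpha_{k+1}$, storing in the memory $M_k$ exactly the information that $\alpha_k$ discards but $\alpha_{k+1}$ retains. First I would set $M_k$ to be the secondary output $X$ appearing in a pseudo-purification of the relevant comparison map. Concretely, monotonicity gives $\alpha_k \preceq (\id \otimes \Terre{}) \circ \alpha_{k+1}$, i.e.\ $\alpha_{k+1}$ carries strictly more observable information than $\alpha_k$; by \cref{def:fa} this witness is itself pseudo-purifiable. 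I would take a pseudo-purification $p_k$ of $\alpha_k$ (which exists since $\C$ is pseudo-purifiable) and use the defining universal property of pseudo-purification to factor $\alpha_{k+1}$ through $p_k$ via a causal correction map; that causal map, suitably reorganised, becomes the regular layer. The fullness hypothesis on $\sem{-}$ is what finally lets me lift each of these $\C$-morphisms back to a $\G$-diagram, so that the sequence of layers lives in $\RegStSeq{\G}$ rather than merely in $\RegStSeq{\C}$.

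The hard part — and where I would spend most of the argument — is ensuring the reconstructed layers form a \emph{regular} (eventually constant) sequence, and that they compose back to give exactly $\boldsymbol\alpha_k$ at every tick rather than merely $\preceq$ it. Regularity of $\boldsymbol\alpha$ gives, for $k \geq n$, a single $\alpha^{\textup{reg}} \in \C(M \otimes A', B' \otimes M)$; I would argue that the pseudo-purification construction can be carried out uniformly for these ticks, using the compositionality axiom of pseudo-purifiability (the condition that pseudo-purifications are stable under the composite-tensor pattern) to guarantee that the layer extracted at tick $k+1$ equals the one extracted at tick $k$ once $k \geq n$. This is exactly the role of the ``moreover'' clause in the definition of pseudo-purifiable: it ensures the purification of a composite is built functorially from the purifications of the pieces, so the per-tick factorisations glue into one repeated layer. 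Matching the initial irregular segment is a finite bookkeeping task handled by $g^{\textup{irreg}}$.

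The final step is verification: I would check that $\fa{\sem{f}^{\textup{St}}}_k$, obtained by plugging the memories of $f_1,\dots,f_k$ together and discarding $M_k$ (as in the definition of $\fa{-}_k$), reduces back to $\alpha_k$. This uses the identity $f = (\id \otimes \Terre{}) \circ p$ valid for any pseudo-purification $p \in \Pure(f)$, applied telescopically: discarding $M_k$ collapses the stored secondary outputs exactly into the discards that $\alpha_k$ originally performed, recovering $\alpha_k$ on the nose. I expect the main obstacle to be precisely this telescoping coherence — showing that the successive causal correction maps do not accumulate spurious observable effects, which is guaranteed only because $\preceq$ is transitive and preserved under composition in a pseudo-purifiable category (established in the preceding lemma). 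That transitivity lemma is therefore the linchpin that makes the whole inductive reconstruction well-defined.
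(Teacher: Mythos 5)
Your overall skeleton is indeed the paper's: reduce to fullness of $\fa{-} : \RegStSeq{\C} \to \Onion$ (the rest follows from Theorem \ref{thm:equivalence_cat} and fullness of $\sem{-}$ lifted layer-wise), then rebuild a stateful sequence whose memory at tick $k$ is the auxiliary output of a pseudo-purification $p(\alpha_k)$, keeping as invariant that the open composite of the first $k$ layers equals $p(\alpha_k)$. But your inductive step has a genuine gap. You propose to ``use the defining universal property of pseudo-purification to factor $\alpha_{k+1}$ through $p_k$ via a causal correction map''. That universal property applies only when you have an \emph{equality} $f = (\id \otimes \Terre{}) \circ g$, whereas monotonicity gives only the \emph{inequality} $(\id\otimes\Terre{})\circ\alpha_{k+1} \preceq \alpha_k \otimes \Terre{}$ (note also that you wrote the inequality in the wrong direction). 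Your one-shot factorisation therefore works only when the sequence is one-way signaling, i.e.\ exactly the causal case that the monotonicity condition was introduced to go beyond. In the genuinely monotone case one must first unfold the definition of $\preceq$ to obtain a common refinement $g$ and a witness $h$ with $(\id\otimes\Terre{})\circ\alpha_{k+1} = (\id\otimes h)\circ g$ and $\alpha_k\otimes\Terre{} = (\id\otimes\Terre{})\circ g$, pseudo-purify \emph{both} $g$ and $h$, and apply uniqueness-up-to-causal-morphism \emph{twice} (once between $p(g)$ and $p(\alpha_k)\otimes\id$, both pseudo-purifications of $\alpha_k\otimes\Terre{}$, and once between $(\id\otimes p(h))\circ p(g)$ and $p(\alpha_{k+1})$, using the compositionality clause to see the former is a pseudo-purification of $(\id\otimes\Terre{})\circ\alpha_{k+1}$); the layer is then $c \circ (\id \otimes p(h)) \circ d$. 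The witness $h$ --- the retroactive effect that tick $k+1$ applies to earlier outputs --- must sit inside the layer, and your construction has no slot for it.

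Second, your plan for the regular tail would not go through: pseudo-purifications are unique only up to causal morphisms, so no appeal to the compositionality axiom can force ``the layer extracted at tick $k+1$'' to literally equal the one extracted at tick $k$; you would only get equality up to a causal correction, which is not eventual constancy. The paper sidesteps this entirely: for $k \geq n$ it extracts nothing and simply takes the layer to be the morphism $g^{\textup{reg}}$ that Definition \ref{def:fa} already provides, so only the finite irregular prefix needs the pseudo-purification machinery. Relatedly, your ``linchpin'' is misattributed: the transitivity lemma for $\preceq$ is not invoked in the paper's universality argument; what makes the final verification immediate is the invariant above together with the identity $f = (\id\otimes\Terre{})\circ p$ valid for any $p \in \Pure(f)$.
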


\subsection{Completeness}

The discard allows us to discard ``future computation'' and only keep what happens before a given tick. However, in the proof of completeness, we will need to act dually and have a way to undo the first computations to only keep what happens in later ticks. Ideally, we would want morphisms to be surjections (or epimorphisms), as a surjection $f$ satisfies the following:
\[g \circ f = h\circ f \iff g = h\]
Unfortunately, the category $\CPM$ contains non-surjective morphisms, and even contains morphisms that cannot be decomposed into a surjection followed by an injection\footnote{Though if one allows every Hilbert space instead of restricting ourselves to only Hilbert spaces of dimension a power of two, every morphism becomes decomposable.}. However, for $f \in \CPM(A,B)$ a morphism corresponding to a pure quantum computation, there always is an idempotent morphism $\pi$ (the projector over the image of $f$) such that: 
	\[g \circ f = h \circ f \iff g \circ \pi = h \circ \pi\]
We formalize a slight generalization of this property in the concept of morphism with shadows.
\begin{definition}
	In a symmetric monoidal category $(\C,\otimes,I)$ a morphism with shadows is a morphism $f \in \C(A,B)$ such that for every isomorphism $\iota: B \cong B_0 \otimes B_1$, there exists an idempotent morphism $\pi : B_1 \to B_1$, called shadow of $f$, such that:
	\begin{center}
		$\forall g,h,\quad\begin{matrix} \tikzfig{shadow_f_g} = \tikzfig{shadow_f_h} \\ \iff \\ \tikzfig{shadow_pi_g} = \tikzfig{shadow_pi_h}  \end{matrix}$
	\end{center}
	A pseudo-purifiable category is said to be a shadow pseudo-purifiable category if every morphism has at least one pseudo-purification which is a morphism with shadows.
\end{definition}

The idea of using idempotent morphisms to characterize the image of morphisms is already present in range categories (see \cite{Cocket2012range}), but we chose here a definition much more tailored to our needs than the strict structure of range categories.
\begin{lemma}\label{cpshadow}
	The categories $\CPM$ and $\UCPM$ are shadow pseudo-purifiable. In fact, all the morphisms that are pure quantum computations are morphisms with shadows.
\end{lemma}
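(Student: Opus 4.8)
The goal is to show that $\CPM$ and $\UCPM$ are shadow pseudo-purifiable, i.e.\ that every morphism admits a pseudo-purification that is a \emph{morphism with shadows}. By Lemma \ref{cpshadow}'s second assertion, it suffices to prove that every pure quantum computation $f : \rho \mapsto V\rho V^\dagger$ is a morphism with shadows, since Lemma \ref{cppure} already guarantees that every morphism has a pseudo-purification which is pure (being a purification in the usual Stinespring sense). The plan is therefore: first reduce the shadow condition to a statement about the pure map $f$, then construct the shadow explicitly as the projector onto the image of $V$, and finally verify the required biconditional.

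First I would fix a pure $f \in \CPM(A,B)$ with $f(\rho) = V\rho V^\dagger$ and an isomorphism $\iota : B \cong B_0 \otimes B_1$. Composing $V$ with $\iota$ lets me assume $V : A \to B_0 \otimes B_1$ acts into the decomposed space. The candidate shadow is the projector $\pi : B_1 \to B_1$ onto the subspace of $B_1$ spanned by the "$B_1$-component" of the range of $V$; concretely, writing $W = (\id_{B_0}\otimes\,\ground_{B_0})$-style contractions, one takes the support projector associated to the reduced structure on $B_1$ obtained after discarding $B_0$. Because post-composition of a pure map by a discard yields a completely positive map whose image is governed by a well-defined support projection, $\pi$ is a genuine idempotent (a projector, hence $\pi = \pi \circ \pi$) in $\CPM$. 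The key linear-algebraic fact I would invoke is that for pure maps, $g \circ f = h \circ f$ as completely positive maps is equivalent to $g$ and $h$ agreeing on the range of $V$ (equivalently, on the support of $VV^\dagger$), and that this range, after the $B_0/B_1$ split, is detected exactly by $\pi$ on the $B_1$ factor.

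The verification of the biconditional
\[\begin{matrix} \tikzfig{shadow_f_g} = \tikzfig{shadow_f_h} \\ \iff \\ \tikzfig{shadow_pi_g} = \tikzfig{shadow_pi_h} \end{matrix}\]
then splits into two directions. The $(\Leftarrow)$ direction is the easy one: since $\pi$ is the support projector of the $B_1$-part of the image of $f$, the map $f$ factors through $\pi$ (i.e.\ $f = (\id_{B_0}\otimes \pi)\circ f$ up to the iso $\iota$), so agreement of $g,h$ after $\pi$ immediately forces agreement after $f$. The $(\Rightarrow)$ direction is the substantive one: from $g \circ f = h \circ f$ I must recover $g \circ \pi = h \circ \pi$. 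Here I would use that $V$ surjects onto its range, so $g$ and $h$ must coincide on all of $\mathrm{range}(V)$; the projector $\pi$ is built precisely so that its range on $B_1$ equals this image component, and a purification/spanning argument upgrades "equal on the image of $f$" to "equal after composing with $\pi$."

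I expect the main obstacle to be the bookkeeping around the arbitrary isomorphism $\iota : B \cong B_0 \otimes B_1$: the shadow must be defined for \emph{every} such factorization, not just the trivial one, so I must show the support projector transforms correctly and remains idempotent under any tensor splitting of the codomain. The delicate point is that $\CPM$ contains non-surjective maps and maps with no epi-mono factorization (as the completeness discussion notes), so I cannot simply invoke surjectivity of $f$ itself; the argument genuinely relies on $f$ being \emph{pure}, where the underlying linear operator $V$ does have a clean image and the projector onto $\mathrm{range}(V)$ (suitably restricted to the $B_1$ factor after discarding $B_0$) gives a completely positive idempotent witnessing exactly the kernel of "post-composition by $f$." Once this projector is correctly identified for a general $\iota$, both implications follow from standard support-projection reasoning, and the $\UCPM$ case is inherited since its morphisms are a subclass of those of $\CPM$.
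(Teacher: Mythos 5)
Your plan for $\CPM$ alone is viable, and in fact more elementary than the paper's argument: reading the shadow condition correctly (the $B_0$ factor is carried along by $\id_{B_0}$, not discarded), conjugation $\rho \mapsto P_S \rho P_S$ by the orthogonal projector onto $S = \mathrm{im}_{B_1}(V)$ does satisfy both implications. The backward one is your factorization $f = (\id_{B_0}\otimes\pi)\circ f$, and the forward one follows from a purely linear-algebraic spanning argument: sandwiching the $B_0$ factor with $\langle\phi|\cdot|\psi\rangle$ shows that agreement of $\id_{B_0}\otimes g$ and $\id_{B_0}\otimes h$ on all $VXV^\dagger$ forces $g=h$ on all of $\mathcal{L}(S)$, with no need for Stinespring. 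This is a genuinely different route from the paper, which never uses orthogonal projectors and instead runs the forward implication through pure pseudo-purifications of $g,h$ combined with Stinespring dilation (Theorem \ref{thm:stinespring}).

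However, there is a genuine gap: the lemma also claims $\UCPM$ is shadow pseudo-purifiable, and your construction cannot deliver this. A shadow must be an idempotent \emph{morphism of the category at hand}; in $\UCPM$ every morphism is trace-preserving, and $\rho\mapsto P_S\rho P_S$ preserves trace only when $P_S=\id$. Your closing claim that ``the $\UCPM$ case is inherited since its morphisms are a subclass of those of $\CPM$'' is backwards: shadow pseudo-purifiability does not pass to subcategories, since the witnesses (the shadow $\pi$ and the pseudo-purifications) must be found \emph{inside} the smaller category, which makes the task harder, not easier. This is exactly the obstruction the paper flags at the start of its proof (``we cannot take the shadows $\pi$ to be orthonormal projectors, as orthonormal projectors are not trace preserving''), and why it builds a different idempotent: complete an orthonormal basis $(\beta_1,\dots,\beta_k)$ of $S$ to a basis $(\beta_1,\dots,\beta_n)$ of $B_1$, take the isometry sending $\beta_i\mapsto\beta_i\otimes\beta_1$ for $i\le k$ and $\beta_i\mapsto\beta_1\otimes\beta_i$ for $i>k$, then discard the second register. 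The result is a trace-preserving idempotent channel acting as the identity on the image part and collapsing its complement onto $\beta_1$; the forward implication is then recovered via Stinespring applied to pure pseudo-purifications of $g$ and $h$ rather than by your spanning argument. Without such a replacement (or a separate argument for $\UCPM$), your proof establishes only the $\CPM$ half of the statement.
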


\begin{theorem}[Completeness]\label{thm:completeness}
	If $\C$ is a shadow pseudo-purifiable category and the functor $\sem{-} : \G \to \C$ is faithful and discard-reflecting, then $\sembis{-}$ is faithful too.
\end{theorem}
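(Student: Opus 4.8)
The plan is to establish faithfulness of $\sembis{-} : \Stream \to \Seq$ (restricting in the image to $\Onion$) by proving the contrapositive at the level of stateful morphism sequences: if two diagrams $D, D' \in \Stream$ have equal semantics $\sembis{D} = \sembis{D'}$, then $D \equiv D'$, whence $D \sim_s D'$ by \cref{thm:equivalence_cat}. By \cref{gfull} and the equivalence of categories, it suffices to work with regular stateful morphism sequences $f, g \in \RegStSeq{\G}$ and show that if their finite approximations agree after applying $\sem{-}$, i.e. $\fa{\sem{f}^{\textup{St}}}_k = \fa{\sem{g}^{\textup{St}}}_k$ for all $k$, then $f \equiv g$. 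Since $\sem{-}$ is assumed faithful, equality of the interpreted finite approximations reduces to equality of the $\G$-level finite approximations $\fa{f}_k = \fa{g}_k$ in $\C$; the discard-reflecting hypothesis will be needed to transfer causality/discard facts back from $\C$ to $\G$ when we reconstruct the memory.

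First I would set up the inductive reconstruction of the layers. Writing $f_k : a_k \otimes m_{k-1} \to b_k \otimes m_k$ and similarly for $g$, the key observation is that $\fa{f}_{k}$ is obtained from $f_1, \dots, f_k$ by discarding the final memory $m_k$, and $\fa{f}_{k+1}$ exposes $f_{k+1}$ acting on that memory. The strategy is to use the shadow and pseudo-purification structure to peel off layers one at a time: from $\fa{f}_k = \fa{g}_k$ for all $k$, I want to deduce, layer by layer, that $f_k$ and $g_k$ agree modulo a causal transformation on the memory wire, which is exactly what the congruence rules $\CM$ (for causal $c$) and $\IM$ (for idempotents $\pi$) allow us to absorb. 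Concretely, I would pseudo-purify each layer (using shadow pseudo-purifiability to choose the purification to be a morphism with shadows), so that the ``dumped'' information flowing into the discard on $m_k$ is instead exposed; the defining property of a morphism with shadows then yields an idempotent $\pi$ on the memory witnessing that postcomposing the two candidate continuations with the shadow is enough to detect equality. This is precisely the algebraic content that lets us conclude $f_{k} = g_{k}$ up to the memory-manipulating moves permitted by $\equiv$.

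The main obstacle, which I expect to be the technical heart of the argument, is the inductive step in which one compares the two sequences not at a fixed finite stage but coherently across all ticks, so as to invoke the coinduction rule $(\D_{\mathbb{N}})$. The difficulty is that $\fa{f}_k = \fa{g}_k$ only constrains the composite of the first $k$ layers after discarding the memory, so the memory objects $m_k$ and the internal factorizations are a priori different for $f$ and $g$; one cannot directly read off $f_k = g_k$. The shadow property supplies an idempotent characterizing exactly what is observable through the discard, and pseudo-purification (with its uniqueness up to a causal morphism, the analogue of Stinespring's \cref{thm:stinespring}) supplies the causal correction relating the two internal memory representations. Stitching these together into a single coinductive derivation — showing that the hypothesis $\D S_{n+1} = \D T_{n+1}$ suffices to derive $S_n = T_n$ at each layer, where $S_n, T_n$ are the tail sequences of $f, g$ — is where the regularity (eventual constancy) of the sequences is essential, since it bounds the memory and guarantees the induction stabilizes.

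Finally I would assemble the pieces: the per-layer equalities, each established up to a causal/idempotent memory move absorbable by $\CM$ and $\IM$, combine through the congruence deduction rules and the coinduction principle $(\D_{\mathbb{N}})$ to give $f \equiv g$. Transporting this along the equivalence $\quot{\RegStSeq{\G}}{\equiv} \simeq \quot{\Stream}{\sim_s}$ of \cref{thm:equivalence_cat} yields $\quot{D}{\sim_s} = \quot{D'}{\sim_s}$, and since $\sim_s$ is itself derivable within $\Stream$ via the disjoint (hence invertible) maps, we obtain $\AX \vdash D = D'$, establishing faithfulness of $\sembis{-}$.
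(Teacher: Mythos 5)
Your proposal correctly identifies the key machinery of the paper's proof — pseudo-purification of the first layer, uniqueness of pseudo-purifications up to a causal morphism (the Stinespring analogue), the shadow idempotent to detect equality through a discard, and a coinductive chaining of the peeling step via $(\D_\mathbb{N})$ — but the architecture in which you deploy it has a genuine gap. After reducing, via faithfulness of $\sem{-}$, to the claim ``$\fa{f}_k = \fa{g}_k$ in $\G$ for all $k$ implies $f \equiv g$ in $\RegStSeq{\G}$'' (the reduction itself is sound, since $\sem{-}$ is monoidal and discard-preserving, hence commutes with $\fa{-}_k$), you propose to derive $f \equiv g$ by pseudo-purifying layers and absorbing the resulting causal correction $c$ and shadow idempotent $\pi$ with the rules $\CM$ and $\IM$. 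But shadow pseudo-purifiability is a hypothesis on $\C$, not on $\G$: the morphisms $c$ and $\pi$ are produced by peeling $\sem{f_1}$ and are morphisms of $\C$, and since the theorem does not assume $\sem{-}$ to be full, there is no reason they lie in the image of $\sem{-}$. A $\CM$ or $\IM$ move in $\RegStSeq{\G}$ can only use causal or idempotent morphisms of $\G$, so the derivation of $f \equiv g$ cannot be carried out where you need it. Discard-reflection does not repair this: it only says that a $\G$-morphism whose image is causal is itself causal; it does not produce $\G$-preimages of $\C$-morphisms.

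The paper's proof is factored precisely to avoid this trap: it writes $\sembis{-}$ as the composite $\Stream \xrightarrow{\partial} \quot{\RegStSeq{\G}}{\equiv} \xrightarrow{\sem{-}^{\textup{St}}} \quot{\RegStSeq{\C}}{\equiv} \xrightarrow{\fa{-}} \Onion$ and carries out all of the shadow/pseudo-purification peeling and the coinduction \emph{entirely inside} $\C$, as the faithfulness of $\fa{-}:\quot{\RegStSeq{\C}}{\equiv}\to\Onion$ (Lemma~\ref{app:lem:complete} and Proposition~\ref{app:prop:complete_proof}); the passage between $\G$ and $\C$ is isolated in the faithfulness of the quotiented functor $\sem{-}^{\textup{St}}$, which is exactly where the hypotheses of faithfulness and discard-reflection of $\sem{-}$ enter, and where no pseudo-purification is invoked. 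A secondary misconception: regularity is not what makes the coinduction ``stabilize'' — the rule $(\D_\mathbb{N})$ accepts an arbitrary sequence of judgments, and the chaining in Proposition~\ref{app:prop:complete_proof} never needs the peeled sequences to become constant; regularity is only needed so that the peeled sequences remain objects of $\RegStSeq{\C}$.
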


This follows from the faithfulness of 
\[ \Stream \xrightarrow{\partial} \quot{\RegStSeq{\G}}{\equiv} \xrightarrow{\sem{-}^{\textup{St}}} \quot{\RegStSeq{\C}}{\equiv} \xrightarrow{\fa{-}} \Onion \]

\section{Applications}
\subsection{The ZX-calculus}

The ZX-calculus is known to be universal and complete for various fragments of quantum mechanics \cite{pi_2-complete,jeandel2018complete,HNW}, including the most general case: the ZX-calculus -- equipped with a discard map -- is  universal and complete for $\CPM$ \cite{carette2019completeness}. Furthermore Lemmas \ref{cppure} and \ref{cpshadow} give us that $\CPM$ is a pseudo-purifiable shadow category. As a consequence ZX$^\omega$ is universal and complete for monotone regular sequences over $\CPM$. Similar results hold for variants of the ZX-calculus like the ZW- or the ZH-calculi which are also universal and complete for $\CPM$ \cite{carette2019completeness}. On the other hand, quantum circuits, equipped with a discard map, are universal for $\UCPM$ but no axiomatization is known to be complete.

\subsection{The Fragment $\Stream_0$ of Initialized Delays}

Using delayed trace for process with memory is not a new idea. It has appeared in various context \cite{bonchi2014categorical, sprunger2019differentiable,ghica2017diagrammatic,roman2020comb}, usually in the cartesian case and with an initialized delayed trace restricting the expressivity to sequences which are regular from the beginning. In this subsection we present the fragment of our language that corresponds to this situation and then in the following subsections compare our results to selected examples from the literature providing an overview of the generality of our construction and its limitations.

In $\Stream$, the need for delayed types $\D^{n+1} \omega$ arises from the fact that the delay morphism on streams takes as an input a stream but outputs a stream with an undefined behavior for the first tick. However, adding delayed types $\D^{n+1} \omega$ is not the only solution to this problem. Indeed, most preexisting works instead chose to use an \emph{initialized delay}. Those initialized delays can be encoded in $\Stream$ as follows: given $a$ a type of $\G$ and $F \in \G(0,a)$ we define the delay initialized by $F$ as:
\[ \tikzfig{memcell} \df \tikzfig{memcell_def} \]
We note that in the cartesian case, every morphism $F \in \C(0,a)$ can be decomposed into a product of morphisms on $\C(0,1)$, so one only needs to define the initialized delayed trace on single wires rather than collection of those.

We consider $\Stream_0$ the sublanguage of $\Stream$ where every delay has to be initialized:
\begin{itemize}
	\item Most of the objects of $\Stream$ are unnecessary, we only take 
	\[\textup{Obj}(\Stream_0) \df \{ n \cdot \omega ~|~n \in \NN\} = \omega(\textup{Obj}(\G))\]
	\item $\Stream_0$ is a prop, with for generators the morphisms $\omega(D)$ for $D$ a morphism of $\G$, and the delayed trace initialized by $F \in \G(0,c)$: 
	\[\Dtri{F}{\omega(c)}{a,b}{D} \in \Stream_0(a,b) \text{ for } D \in \Stream_0(a+\omega(c),b+\omega(c)) \]
\end{itemize}

The stateful morphism sequences associated to $\Stream_0$ are sequences $(f_k)_{k \geq 1}$ for which the regularity condition starts at the second tick: $\forall k \geq 2, f_k = f_2$.
Similarly, on the semantics side, the corresponding finite approximations $(f_k)_{k \geq 1}$ are regular starting from the first tick:
\[ f_k = \tikzfig{regularity_imm} \]

Note that all diagrams in $\Stream_0$ can be rewritten into the form:

\begin{center}
	$\tikzfig{normal}$
\end{center}

We don't claim any universality or completness result for this fragment.

\subsection{Quantum Channels with Memory}
\label{sec:werner}

The first inspiration of this work was the quantum channels with memory of \cite{kretschmann2005quantum}. This corresponds to the case where we take $\G$ to be the category of quantum circuits with discard. There are still no known complete finite axiomatizations of the full language. Assuming a free axiomatization matching the semantics in $\UCPM$ we can construct $\Stream$. $\UCPM$ is a pseudo-purifiable discard shadow category thus $\Stream_0$ is complete and universal for monotone finite approximations regular from tick $2$. $\UCPM$ is also a {semi-cartesian} category, meaning that all its morphisms are causal. In such situation, we can significantly simplify  the definition of $\Stream$ by fusing the $\left({\ground}\right)$ and $\left(\pi\right)$ rules. Moreover, the order $\preceq$ then collapses to the identity, leading to a clear interpretation of the monotone sequences as processes where the present does not depend on the future.

Applying quantum mechanics to those stream transformers usually requires infinite-dimensionalal Hilbert spaces. In \cite{kretschmann2005quantum}, quasi local algebras are used to represent those processes in the Heisenberg picture, a quantum analog of the predicate transformer point of view. The authors require from their channels a causality condition that matches precisely our monotonicity requirement. The main difference with our work is that they consider streams on $\mathbb{Z}$ (so an infinity of ticks happened before the tick $1$) while we consider streams on $\mathbb{N}_{\geq 1}$.

However, given a fixed quantum state for the ticks $(-\infty,0]$, their condition of translational invariance corresponds to our condition of regularity on finite approximations. They obtain a structure theorem that corresponds precisely to the general form of diagrams in $\Stream_0$.

\subsection{The Cartesian Case}

The stateful morphism sequences were first defined by \cite{sprunger2019differentiable} in the cartesian case. A cartesian category is always pseudo-purifiable and semi-cartesian.  

In \cite{sprunger2019differentiable}, the authors build a category very similar to $\Stream_0$ by quotienting stateful morphism sequences by an observational equivalence relation corresponding to ours in the cartesian case. They define the exact same initialized delayed trace and study in more details the category of stateful sequences of morphisms. However they do not show any universality or completeness results, focusing instead on differentiability.

Note that taking $\G$ to be boolean circuits with semantics in \textbf{Set} (which is a cartesian shadow category) we can deduce from $\cite{sprunger2019differentiable}$ that $\Stream_0$ has the expressive power of Mealy machines. In this direction, further work will focus on understanding exactly which kind of synchronous circuits can be represented by our construction in connection to the work of \cite{ghica2017diagrammatic}.  

\subsection{Signal Flow Graphs}\label{sec:lin}

Another work  similar to ours is the work of \cite{bonchi2014categorical} on signal flow graphs. Similarly to \cite{kretschmann2005quantum}, they consider streams on $\mathbb{Z}$ while we consider streams on $\mathbb{N}_{\geq 1}$. There is however another major difference in approach: they represent streams and their operations as a whole (using power series) rather than through their finite approximations. This leads to a set of axioms incompatible to ours, in particular the rule (S5) of their Definition 3 would translate to the following:
\[ \tikzfig{unsound_rule} \stackrel{\text{(S5)}}{=} \tikzfig{identity}\]
which is unsound for finite approximations. Indeed, the left hand side is interpreted by us as follows: the information $i$ received at the tick $n$ is not immediately output, instead the system generates a ``blank'' output (the transposed of $\Terre{}$), which retroactively changed to be equal to $i$ at tick $n+1$. This behavior is fundamentally non-causal, but is expected as the co-unit of a compact closure is not a causal morphism.

However, when considering the fragment $\mathbb{SF}$ of circuits that only contain initialized guarded traces (which they call feedbacks), we recover a correspondence. In fact their calculus seems to be exactly $\Stream_0$ when we take $\G$ to be the graphical language $\mathbb{HA}$.

Taking the same example as in their paper, we can describe the Fibonacci sequence as a morphism of $\Stream_0$, with $\G$ being the prop of linear operations on tuples of integers:
\begin{center}
	$\tikzfig{fibbo}$
\end{center}
where $\bfup{0} \in \G(0,1)$ being the integer zero, white dots representing the addition and black dots representing the copy.
On the input stream $1,0,0$, \emph{etc} this circuit will output the Fibonacci sequence $0,1,1,2,3$, \emph{etc}.

Another interesting connection with this line of work is to take $\G$ to be $\mathbb{IH}$, a graphical calculus which have been shown to be complete for linear relations \cite{zanasi2018interacting}. The order relation $\preceq$ then coincides with the subspace relation for vector spaces. 

More work has still to be done along this line to unravel all the connections between the two formalisms.

\section*{Acknowledgment}
This work is funded by ANR-17-CE25-0009 SoftQPro, ANR-17-CE24-0035 VanQuTe,
PIA-GDN/Quantex, and LUE/UOQ. This research is also supported by the project NEASQC funded from the European Union’s Horizon 2020
research and innovation programme (grant agreement No 951821).

\bibliographystyle{IEEEtran}
\bibliography{synchlics}

\newpage
\appendix

\label{proofs}

\subsection{Equivalence of $\Stream$ and $\RegStSeq{\G}$}
\setcounter{storedvalue}{\thelemma}
\setcounter{lemma}{0}

\begin{lemma}
	Given a type $a$ of degree $d$. There is always a disjoint map $a\to \overline{a}$, where $\overline{a}$ is stratified of degree $d$, as well as a disjoint map $\overline{a}\to \delta^k\overline{a}$ for each $k\geq d$. And if there is a disjoint map $a\to b$ then it is unique and there is a $k$ such that $\delta^k \overline{a}=\delta^k\overline{b}$.
\end{lemma}

\begin{proof}
	Given any type $a$ of degree $d$ we can always use derivatives to construct a map $a\to a'$ where all $\omega$ colors in $a'$ are delayed $d$ times. Then we can use disjoint swaps to reorder each colors and construct a disjoint map $a'\to \overline{a}$ where $\overline{a}$ is stratified of degree $d$. Furthermore using derivatives we can construct a disjoint map $\overline{a}\to \delta^k\overline{a}$ for any $k\geq d$.

	Let $a$ and $b$ be respectively of degree $l_a$ and $l_b$ and let $k\df \max(l_a,l_b)$. We assume there is a disjoint map $f:a\to b$. We now there are disjoint maps of types $a\to \delta^k\overline{a}$ and $b\to \delta^k\overline{b}$. Those maps being invertible, there is a bijection between disjoint maps $a\to b$ and disjoint maps $\delta^k\overline{a}\to \delta^k\overline{b}$. So we just have to show that the map $f':\delta^k\overline{a}\to \delta^k\overline{b}$ is unique. Using the equations $(\triangleleft\triangleright)$, $(\triangleright\triangleleft)$ and the naturality of the swaps we can rewrite $f'=i\circ \sigma \circ d$ where $i$ is made only of initializations, $d$ only made of derivatives and $\sigma$ is a disjoint permutation. Since $i$ and $d$ cannot change the order of the colors and since the colors are already well ordered in $\delta^k\overline{a}$ and $\delta^k\overline{b}$ the disjoint permutation $\sigma$ can only be the identity. $\delta^k\overline{a}$ and $\delta^k\overline{b}$ having the same degree $k$ we deduce that $i=d^{-1}$. So $\delta^k\overline{a}=\delta^k\overline{b}$ and $f'=id_{\delta^k\overline{a}}$. It follows that $f$ is the unique disjoint map $a\to b$.
\end{proof}

\begin{lemma}
	$G:\textup{RegSt}\left(\G\right)\to \quot{\Stream}{\sim_s}$ is full.
\end{lemma}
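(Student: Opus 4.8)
The plan is to exhibit the image of $G$, taken modulo $\sim_s$, as a symmetric monoidal subcategory of $\quot{\Stream}{\sim_s}$ that contains every generator of $\Stream$ and is closed under the delayed trace constructor. Since every diagram of $\Stream$ is obtained from the generators by composition, monoidal product, and delayed trace, this forces the image to be all of $\quot{\Stream}{\sim_s}$, which is precisely fullness. Throughout I work modulo $\sim_s$: by the preceding lemma I may replace any type by its stratification and rise two types to a common degree, so the object map of $G$ is a bijection and I may assume the source and target of any diagram are stratified.

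The closure of the image under composition and monoidal product is immediate. The category $\RegStSeq{\G}$ carries layerwise composition and tensor, $G$ is the $\sim_s$-projection of a monoidal functor, and both $g \circ f$ and $f \otimes g$ remain regular whenever $f$ and $g$ are, so $G(g)\circ G(f) \sim_s G(g\circ f)$ and $G(f)\otimes G(g) \sim_s G(f\otimes g)$. Likewise the image is closed under the delay endofunctor, since $\D$ on sequences merely shifts indices and is intertwined by $G$.

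The substantive step is closure under the delayed trace. Given a regular stateful morphism sequence $f$ with $G(f)$ of type $a + \D c \to b + c$, I build a regular sequence realizing $\Dtr{c}{a,b}{G(f)}$ by absorbing the traced wire into the memory: the datum produced on the $c$-output at a given layer is exactly the $\D c$ consumed by the next layer, so enlarging each memory object by the component carrying $c$ turns the feedback loop into an ordinary memory wire, and regularity is preserved because the enlarged layers are again eventually constant. Establishing that this sequence-level operation is intertwined by $G$, i.e. that $\Dtr{c}{a,b}{G(f)} \sim_s G(\text{the enlarged sequence})$, amounts to sliding the single global delayed trace of $\Stream$ past the layer decomposition using the four trace-like axioms of $\AX$. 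This reconciliation of the one global feedback of $\Stream$ with the per-layer memory bookkeeping of $\RegStSeq{\G}$ — together with checking that the regularity data survive — is the technical heart of the argument, and the step I expect to be the main obstacle.

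It then remains to check that every generator of $\Stream$ lies in the image, up to $\sim_s$. The maps $\D^n \iota g$ are single-layer sequences performing $g$ at the appropriate tick and carrying no memory; the maps $\D^n \omega g$ are sequences whose constant layer is $g$; the stream derivative and initialization, together with their delayed versions, correspond to identity sequences up to the reindexing that $\sim_s$ already absorbs, using $(\triangleright\triangleleft)$ and $(\triangleleft\triangleright)$; and the swaps are realized by the evident permutation sequences, the disjoint ones being killed by $\sim_s$ and the remaining ones (such as the swap of $1$ with $\omega$) by layerwise symmetries. With the generators in the image and the image closed under $\circ$, $\otimes$, $\D$, and $\Dtr{c}{a,b}{-}$, a structural induction on diagrams yields that every morphism of $\quot{\Stream}{\sim_s}$ equals $G(f)$ for some regular $f$, establishing fullness.
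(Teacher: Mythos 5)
Your high-level strategy---showing that the image of $G$ contains every generator of $\Stream$ and is closed under composition, monoidal product, and delayed trace, then concluding by structural induction on diagrams---is a legitimate reorganization of the argument, and differs from the paper, which instead takes an arbitrary diagram and rewrites it directly into a layered normal form. Your treatment of the generators and of closure under $\circ$ and $\otimes$ (indeed free once functoriality of $G$ is granted, as the paper grants it) is fine, and the memory-absorption construction is the right sequence-level move: if $D \sim_s G(f)$ with $D : a + \D c \to b + c$, the layers $f_i : a_i \otimes c_{i-1} \otimes m_{i-1} \to b_i \otimes c_i \otimes m_i$ can be rebracketed into a regular sequence $f'$ with memories $c_i \otimes m_i$.

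However, there is a genuine gap exactly where you flag ``the main obstacle'': the claim $\Dtr{c}{a,b}{G(f)} \sim_s G(f')$ is asserted but never proved, and this claim is not a routine verification---it is essentially the entire content of the lemma. In particular, it cannot be established ``using the four trace-like axioms'' alone: dinaturality is deliberately excluded from $\Stream$, so nothing slides around a feedback loop for free, and the four trace-like axioms only provide tightening, superposing and vanishing. Merging the new trace on $c$ with the global trace already present in $G(f)$, and---when $c$ contains $\omega$-colors---converting a stream-typed feedback into per-layer memory wires, requires precisely the rewriting that constitutes the paper's proof: raising all wires to a common maximal degree $d$ with $(\triangleleft\triangleright)$ and $(\triangleright)$, redistributing the derivative/initialization bookkeeping with $(\blacktriangleright)$, $(\blacktriangleleft)$ and $(\triangleright\triangleleft)$, and using $(\sigma)$, $(\triangleleft)$, $(\triangleright)$ to move the disjoint-map structure through delays. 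The paper does this work once, for an arbitrary diagram (pull all traces out, normalize the trace-free body into layers of the form $\D^k\iota g$ and $\D^d\omega g$, regroup); your induction would have to do the same work inside the trace case. So the proposal is a sound plan whose crucial step is missing, not a complete proof.
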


\begin{proof}
	Let $D:a\to b$ be a diagram of $\Stream$ where $a$ and $b$ are stratified types. Let $d$ be the biggest integer such that there is a wire of type $\D^d 1$ or $\D^d \omega$ appearing in $D$. The degree of $a$ and $b$ must be less than $d$ so there is a diagram $D':\delta^d a\to \delta^d b$ such that $D\sim_s D'$. 
	
	Using the axioms of delayed trace we take them out of the diagrams to obtain something of the form:
	
	\begin{center}
		$\tikzfig{nf0}$
	\end{center}
	
	We use $\left(\triangleleft \triangleright\right)$ and $\left(\triangleright\right)$ until all the delayed trace on types $\D^{k+1}\omega$ and $\D^{k}\omega$ becomes on types 
	$\D^{d+1}\omega$ and $\D^{d}\omega$.
	
	\begin{center}
		$\tikzfig{nf0}\stackrel{\left(\triangleleft \triangleright\right)}{=}\tikzfig{nf1}\stackrel{\left(\triangleright\right)}{=}\tikzfig{nf2}$
	\end{center}
	
	Then since $d$ was the biggest delay appearing in the diagram before we know that all the delayed trace on types $\D^{k}\omega$ are in the situation:
	
	\begin{center}
		$\tikzfig{nf3}$
	\end{center}
	
	Now using $\left(\triangleleft \triangleright\right)$ on all wires of type $\D^k\omega$ with $k<d$, and then using $\left(\blacktriangleright\right)$ and $\left(\blacktriangleleft\right)$ we can ensure that the only generators of the form $\D^k\omega G$ are in fact of the form $\D^d \omega G$.
	
	\begin{center}
		\scalebox{0.8}{$\tikzfig{nf4}\stackrel{\left(\triangleleft \triangleright\right)}{=}\tikzfig{nf5}\stackrel{\left(\blacktriangleright\right)}{=}\tikzfig{nf6}$}
	\end{center}
	
	Then applying $\left(\triangleright \triangleleft\right)$ we remove all the remaining wires of type $\D^k\omega $ with $k<d$. Now the only $\D^k\omega $ wires in the diagrams are the $\D^{d}\omega$ and the $\D^{d+1}\omega$ in the delayed trace, so there are no derivations left in the diagram and the only initialisations left are the one connected to delayed traces. We can now group together the generators $\D^k \iota g$ and $\D^d \omega g$ of same type. This gives a diagram of the form: 
	
	\begin{center}
		\scalebox{0.7}{$\tikzfig{nf7}=\tikzfig{nf8}$}
	\end{center}
	
	Which is the image of a regular stateful morphism sequence. So for each diagram $D$ there is a diagram $D'$ such that $D\sim_s D'$ and $D'$ is the image of a stateful morphism sequence. In other words, $G:\textup{RegSt}\left(\G\right)\to \quot{\Stream}{\sim_s}$ is full.
\end{proof}

\begin{lemma}
	$f\equiv g \Leftrightarrow G(f)=G(g)$
\end{lemma}

\begin{proof}
	We want to prove
	\[ \AX \vdash G(f) = G(g) \iff \CM,\IM \vdash f = g\]
	For that, we simply match the set of derivation trees on the left hand side to the set of derivation trees on the right hand side.
	We first look at the deduction rules, which are the rules of a congruence (reflexivity, symmetry, transitivity, composition, monoidal product) plus the coinduction rule. There is a one-to-one correspondence between the rules on both side, relying on $G(\D f) = \D G(f)$ for the correspondence between the two coinduction rules. We just need to match the axioms:
	\begin{itemize}
		\item All the axioms of $\AX$ but $(\ground),(\pi),(\sigma)$ correspond to tautologies.
		\item $(\ground)$ is exactly matched to $\CM$.
		\item $(\pi)$ is exactly matched to $\IM$.
		\item $(\sigma)$ correspond to either a tautology or $\CM$ depending on whether the permuted types are disjoints or not.
	\end{itemize}
\end{proof}
\subsection{Pseudo-Purifiable and Shadow Categories}

\setcounter{lemma}{4}

\begin{lemma}
	In a pseudo-purifiable category $(\C,\otimes,I,\Terre{})$, $\preceq$ is transitive, is preserved under composition and monoidal product, hence forms a pre-order enrichment of $\C$.
\end{lemma}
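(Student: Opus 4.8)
The plan is to first isolate a characterization of $\preceq$ in terms of a single pseudo-purification, and then derive all three properties from it together with the composition axiom of pseudo-purifiability. The key lemma I would establish is: if $p \in \Pure(g)$ with $p \in \C(A, B \otimes W)$, then for any $f \in \C(A,B)$ we have $f \preceq g$ iff there exists $\phi \in \C(W, I)$ with $f = (\id_B \otimes \phi) \circ p$. The reverse direction is immediate, since $g = (\id_B \otimes \Terre{W}) \circ p$, so $(p,\phi)$ witnesses $f \preceq g$. For the forward direction, unfolding $f \preceq g$ yields $g_0 \in \C(A, B \otimes X)$ and $f_0 \in \C(X,I)$ with $f = (\id_B \otimes f_0) \circ g_0$ and $g = (\id_B \otimes \Terre{X}) \circ g_0$; applying the universal property of $p \in \Pure(g)$ to $g_0$ gives a causal $c : W \to X$ with $g_0 = (\id_B \otimes c) \circ p$, whence $\phi := f_0 \circ c$ works. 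Reflexivity then follows by taking $\phi = \Terre{W}$.

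Next I would handle preservation under composition and monoidal product, which are the routine parts once the characterization is available. Given $f \preceq f'$ and $g \preceq g'$, I would fix pseudo-purifications $p \in \Pure(f')$ and $r \in \Pure(g')$, write $f = (\id \otimes \phi) \circ p$ and $g = (\id \otimes \zeta) \circ r$ by the characterization, and invoke the composition axiom of pseudo-purifiability to obtain a pseudo-purification $P$ of $g' \circ f'$ (resp. of $f' \otimes g'$) built from $p$ and $r$. A direct computation using bifunctoriality of $\otimes$ — the effects $\phi$ and $\zeta$ act on disjoint environment wires, so their order relative to the main morphisms is irrelevant — shows that $g \circ f = (\id \otimes (\zeta \otimes \phi)) \circ P$ (resp. $f \otimes g = (\id \otimes (\phi \otimes \zeta)) \circ P$). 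Applying the characterization to $P$ then yields $g \circ f \preceq g' \circ f'$ (resp. $f \otimes g \preceq f' \otimes g'$).

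The hardest part is transitivity, and it is where the composition axiom really does the work. Suppose $f \preceq g \preceq h$, and fix $q \in \Pure(h)$, say $q \in \C(A, B \otimes V)$. The characterization gives $\psi \in \C(V,I)$ with $g = (\id_B \otimes \psi) \circ q$. I would then take $p_1 \in \Pure(q)$, say $p_1 \in \C(A, B \otimes V \otimes S)$ with $q = (\id_{B \otimes V} \otimes \Terre{S}) \circ p_1$, together with a pseudo-purification $\rho \in \Pure(\psi)$; the composition axiom applied to $f_1 = q$ and $f_2 = \psi$ produces a pseudo-purification $P = (\id_B \otimes \rho \otimes \id_S) \circ p_1$ of $g$. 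The characterization then rewrites $f = (\id_B \otimes \phi) \circ P = (\id_B \otimes \theta) \circ p_1$ for a suitable effect $\theta \in \C(V \otimes S, I)$. The crucial final move is to observe that $p_1$ is itself a dilation of $h$ — indeed $h = (\id_B \otimes \Terre{V}) \circ q = (\id_B \otimes \Terre{V \otimes S}) \circ p_1$ — so the universal property of $q \in \Pure(h)$ supplies a causal $c : V \to V \otimes S$ with $p_1 = (\id_B \otimes c) \circ q$. Substituting gives $f = (\id_B \otimes (\theta \circ c)) \circ q$, and the characterization for $q \in \Pure(h)$ concludes $f \preceq h$.

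The main obstacle, and the reason pseudo-purifiability is assumed rather than a bare notion of purification, is precisely this reconciliation step: one must turn the purification $P$ of $g$ into something expressible through the fixed purification $q$ of $h$, which runs "against" the one-directional universal property of purifications. Routing everything through a pseudo-purification $p_1$ of $q$ and then re-factoring $p_1$ through $q$ (using that $p_1$ dilates $h$) is what makes this possible, and it is exactly the composition axiom that guarantees the resulting $P$ is a genuine pseudo-purification of $g$. Throughout, the only real care needed is the bookkeeping of the environment objects and the repeated appeals to bifunctoriality; no step beyond the stated axioms is required.
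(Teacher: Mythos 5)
Your proof is correct, but it is organized differently from the paper's. The paper anchors its transitivity argument at the middle term $g$: it unfolds both inequalities, pseudo-purifies all four witnesses, observes that this produces two pseudo-purifications of $g$ --- one, $g_0'$, coming from the $f \preceq g$ witness, and one, $(\id_B \otimes g_1') \circ h_1'$, assembled by the composition axiom from the $g \preceq h$ witness --- and reconciles them by uniqueness up to a causal morphism; since the latter is just an effect away from $h_1'$, which dilates $h$, the conclusion $f \preceq h$ is immediate from the definition. You instead isolate a reusable characterization lemma ($f \preceq g$ iff $f$ factors through any $p \in \Pure(g)$ by an effect on the environment), anchor at a pseudo-purification $q \in \Pure(h)$, and work with a purification of a purification, $p_1 \in \Pure(q)$. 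Both arguments use exactly the same toolbox (existence of pseudo-purifications, their universal property, and the composition axiom), and your characterization lemma packages the universal-property step in a clean, reusable form. However, your final ``crucial move'' is redundant: once you have $f = (\id_B \otimes \theta) \circ p_1$, observe that $p_1$ itself dilates $h$ (indeed $h = (\id_B \otimes \Terre{V \otimes S}) \circ p_1$, as you note), so the pair $(p_1,\theta)$ already witnesses $f \preceq h$ by the bare definition of $\preceq$; no re-factoring through $q$ and no second appeal to the universal property is needed. Consequently your diagnosis of where pseudo-purifiability ``really does the work'' is slightly off: it is needed to manufacture the pseudo-purification $P \in \Pure(g)$ so that the universal property can be applied at $g$, not for the landing at $h$. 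Finally, for preservation under composition and monoidal product, the paper's remark that these are ``direct to prove'' refers to the fact that one can simply compose/tensor the two witnessing dilations and the two effects from the raw definition of $\preceq$, with no pseudo-purification at all; your detour through the characterization and the composition axiom there is valid but heavier than necessary.
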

\begin{proof}
	The preservation under composition and monoidal product are direct to prove. For the transitivity, we assume that $f \preceq g \preceq h$. Using the definition of $\preceq$, we obtain $f_0,g_0,g_1,h_1$ such that:
	\begin{center}
		$f = \tikzfig{g0_f0} \qquad \qquad g = \tikzfig{g0_ground}$\\
		$g = \tikzfig{h1_g1} \qquad \qquad h = \tikzfig{h1_ground}$
	\end{center}
	We then take $f_0',g_0',g_1',h_1'$ pseudo-purifications of $f_0,f_0,g_1,h_1$. In particular, we have $g_0' \in \Pure(g)$. Since $\C$ is pseudo-purifiable, we know that we also have
	\[ \tikzfig{h1_g1_pure} \in \Pure(g) \]
	Using uniqueness of pseudo-purifications up to a causal morphism, we know that there exists $c$ causal such that:
	\[ g_0' = \tikzfig{h1_g1_pure_c} \]
	It follows that
	\[ f = \tikzfig{h1_g1_pure_c_f0} \]
	Hence $f \preceq h$.
\end{proof}

\begin{lemma}
	The categories $\CPM$ and $\UCPM$ are pseudo-purifiable categories. Moreover, every purification (for the usual notion of purity, see Section \ref{sec:quantum}) is a pseudo-purification.
\end{lemma}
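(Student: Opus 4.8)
The plan is to verify the two pseudo-purifiability conditions for $\CPM$ (the case of $\UCPM$ following by restriction, noting that causal maps coincide in both), and then to check that any purification in the classical sense is a pseudo-purification. First I would recall that for $f \in \CPM(A,B)$, the usual Stinespring/purification theorem (Theorem \ref{thm:stinespring}) gives a pure map $p : A \to B \otimes C$ with $f = (\id_B \otimes \Terre{C}) \circ p$. I would then argue that this $p$ is in fact a pseudo-purification in the sense of the definition. Concretely, suppose $g \in \CPM(A, B \otimes Y)$ also satisfies $f = (\id_B \otimes \Terre{Y}) \circ g$. Taking a purification $q$ of $g$, one obtains a purification $A \to B \otimes Y \otimes Y'$ of $f$, and by uniqueness of purifications up to isometry (Theorem \ref{thm:stinespring}), $p$ and this purification of $f$ are related by an isometry on their respective purifying systems. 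Composing that isometry with the discarding of $Y'$ and whatever extra system, I would extract a causal map $c$ with $g = (\id_B \otimes c) \circ p$; the isometry condition $v^\dagger v = \id$ translated through the discard is exactly what makes $c$ trace-preserving, i.e.\ causal. This simultaneously establishes the required factorization and shows that $p$ is a pseudo-purification.

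Next I would verify the compositionality axiom: for $f_1 \in \CPM(A_1, B_1 \otimes C)$, $f_2 \in \CPM(C \otimes A_2, B_2)$ with pseudo-purifications $p_1, p_2$, the composite diagram $\tikzfig{purification_comp3}$ is a pseudo-purification of $\tikzfig{F_G_compo_tenseur}$. The natural approach is to show that this composite is a genuine purification (a pure map followed by a single discard of the combined environment), since the composition of pure maps is pure and the composition/tensoring of discards is a discard. Given that it is a purification, the first part of the argument already shows it is a pseudo-purification. So the compositional axiom reduces to the bookkeeping observation that purity is stable under composition and tensor product, and that discarding the two purifying systems together realizes the discard required by the definition of $\Pure\!\left(-\right)$.

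The main obstacle I expect is the extraction of the \emph{causal} map $c$ from the Stinespring isometry, because the definition of pseudo-purification quantifies over \emph{all} $g$ with the given factorization, not merely pure ones. The subtlety is that $g$ need not be pure, so I would first purify $g$, apply Theorem \ref{thm:stinespring} to the two purifications of $f$ (the given $p$ and the one built from $g$), and then carefully check that discarding the correct subsystem of the comparison isometry yields a trace-preserving — hence causal — map rather than merely completely positive. The direction of the isometry in Theorem \ref{thm:stinespring} is stated disjunctively (either $v : C \to C'$ or $v' : C' \to C$), so I would need to confirm that in either branch the resulting $c$ comes out causal, using that $\Terre{} \circ (\text{isometry}) = \Terre{}$ precisely encodes trace preservation. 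Finally, for the closing remark about $\UCPM$, I would note that all the maps involved are automatically causal there, so the pseudo-purifications are just the trace-preserving purifications, and every purification is again a pseudo-purification.
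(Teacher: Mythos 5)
Your route for the second claim---purify $g$, observe that both $p$ and the purification of $g$ are purifications of $f$, invoke Theorem~\ref{thm:stinespring}, and extract a causal $c$---is exactly the paper's. But the mechanism you propose for the ``wrong-direction'' branch of the Stinespring disjunction does not work. When the isometry points into $p$'s environment, \emph{i.e.} $v' : Y \otimes Y' \to C$ with $p = (\id \otimes v') \circ q$, you cannot obtain $c$ by ``discarding the correct subsystem of the comparison isometry'': you need to travel \emph{from} $p$ back \emph{to} $q$, hence you need a \emph{causal left inverse} of $v'$. The fact you invoke, $\Terre{} \circ v' = \Terre{}$, only says that $v'$ itself is causal, which settles the easy branch ($c = (\id \otimes \Terre{Y'}) \circ v$); and the linear-algebraic left inverse $v'^\dagger$ is not causal, since $\rho \mapsto v'^\dagger \rho v'$ is trace-decreasing off the image of $v'$. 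The paper supplies the missing construction: decompose $v' = u \circ (\id \otimes \ket{0^k})$ with $u$ unitary, so that $(\id \otimes \Terre{\CC^{2^k}}) \circ u^\dagger$ is a causal retraction of $v'$, and compose it with the discard of $Y'$ to get $c$.

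The more serious gap is in the compositional axiom. You propose to show that the composite of $p_1 \in \Pure(f_1)$ and $p_2 \in \Pure(f_2)$ is a genuine purification, ``since the composition of pure maps is pure''. But $p_1$ and $p_2$ are arbitrary pseudo-purifications, and pseudo-purifications in $\CPM$ need not be pure---the paper stresses this immediately after the lemma: every causal state, mixed or not, is a pseudo-purification of $\id_I$, whereas its purifications are the isometric states. So stability of purity under composition and tensor says nothing about $p_1$ and $p_2$ themselves, and your reduction collapses. The paper closes this gap with a two-way characterization: $g$ is a pseudo-purification of $f$ if and only if $g$ is linked to a genuine purification $p$ of $f$ by causal morphisms in \emph{both} directions, $g = (\id \otimes c) \circ p$ and $p = (\id \otimes c') \circ g$. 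Since causal morphisms are closed under composition and tensor, these links transport the problem to the purifications, where your purity argument does apply. Without this characterization (or an equivalent device), your plan only proves the axiom in the special case where both pseudo-purifications happen to be purifications.
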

\begin{proof}
	We start by showing the second property. We consider a purification of $f$, which is a decomposition $f = (\id \otimes \Terre{}) \circ p$ with $p$ a pure quantum computation, and want to prove that $p \in \Pure(f)$. In other words, for every $g$ such that $f = (\id \otimes \Terre{}) \circ g$, we have to find a causal morphism $c$ such that $g = (\id \otimes c) \circ p$.
	
	To do so, we purify $g$ into $g = (\id \otimes \Terre{}) \circ q$, with $q$ a pure quantum computation. We note that both $p$ and $q$ are purifications of $f$, so using \cref{thm:stinespring}, we know that there exists an isometry $v$ such that $q = (\id \otimes v) \circ p$ or and isometry $v'$ such that $p = (\id \otimes v') \circ q$.
	
	In the former case, we have $g = (\id \otimes \Terre{}) \circ q = (\id \otimes \Terre{}) \circ  (\id \otimes v) \circ p$, so we can take $c = \Terre{} \circ v$, which is causal.
	
	In the latter case, we use the fact that every isometry can be seen as an injection followed by a change of basis, and decompose the isometry $v' \in \CPM(A,B)$ with $\dim(A) = 2^n$ and $\dim(B) = 2^{n+k}$ into $v' = u \circ (\id \otimes \ket{0^k})$ for some unitary $u$. So we have
	$(\id_A \otimes \Terre{\CC^{2^k}}) \circ u^\dagger \circ v' = \id_A$, and it follows that:
	\[ g = (\id \otimes \Terre{}) \circ (\id_A \otimes \Terre{\CC^{2^k}}) \circ u^\dagger \circ p\] 
	We can take $c = (\id \otimes \Terre{}) \circ (\id_A \otimes \Terre{\CC^{2^k}}) \circ u^\dagger$, which is causal.
	
	We now want to prove that $\CPM$ and $\UCPM$ are pseudo-purifiable. Since every morphism has a purification, and every purification has a pseudo-purification, then every morphism has a pseudo-purification. In fact, the set of pseudo-purifications of $f$ is exactly the set of morphisms $g$ that are linked to a purification $p$ in both following ways:
	\begin{itemize}
		\item $g$ can be decomposed into $g = (\id \otimes c) \circ p$ with $c$ a causal morphism
		\item $p$ can be decomposed into $p = (\id \otimes c') \circ g$ with $c'$ a causal morphism
	\end{itemize}
	Using this characterisation, and the fact that purity is preserved under composition and tensor, the preservation of pseudo-purification under composition is direct, hence $\CPM$ and $\UCPM$ are pseudo-purifiable categories.
\end{proof}

\begin{lemma}
	Every cartesian category $(\C,\times,I)$ is pseudo-purifiable, and for every $f \in \C(A,B)$, the pairing $(f,\id_A) \in \C(A,B \times A)$ is one of the pseudo-purifications of $f$.
\end{lemma}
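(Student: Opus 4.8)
The plan is to exploit the universal property of the cartesian product together with the fact that in a cartesian category the discard maps $\Terre{A}$ are exactly the unique terminal maps $A \to I$; in particular the category is semi-cartesian, so \emph{every} morphism is causal. First I would record this, and note that any $p \in \C(A, B\times X)$ is determined by its components as the pairing $p = \langle \pi_B\circ p, \pi_X\circ p\rangle$, where $\pi_B,\pi_X$ are the product projections.

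The key step is a clean characterisation of pseudo-purifications in the cartesian setting: for $f\in\C(A,B)$ and $p\in\C(A,B\times X)$, I claim that $p\in\Pure(f)$ if and only if $\pi_B\circ p = f$ and $s := \pi_X\circ p$ is a split monomorphism. For the ``if'' direction, given any $g\in\C(A, B\times Y)$ with $\pi_B\circ g = f$, I would write $g = \langle f, \pi_Y\circ g\rangle$ and, using a retraction $r$ of $s$, set $c := (\pi_Y\circ g)\circ r$; this $c$ is causal (everything is), and $(\id_B\times c)\circ p = \langle f, c\circ s\rangle = \langle f, \pi_Y\circ g\rangle = g$ since $c\circ s = \pi_Y\circ g$. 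For the ``only if'' direction, I would instantiate the defining property of $p\in\Pure(f)$ at $g := \langle f, \id_A\rangle$: the resulting causal $c$ satisfies $c\circ s = \id_A$, exhibiting $s$ as a split mono, while $\pi_B\circ p = f$ is the ``in particular'' clause already recorded for every pseudo-purification.

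From this characterisation both halves of the statement follow. Since $\pi_A\circ \langle f, \id_A\rangle = \id_A$ is trivially a split mono, the pairing $(f,\id_A) = \langle f, \id_A\rangle$ lies in $\Pure(f)$, so in particular every morphism has a pseudo-purification. It then remains to verify the closure condition of pseudo-purifiability. Given $p_1\in\Pure(f_1)$ and $p_2\in\Pure(f_2)$ with $f_1\in\C(A_1,B_1\times C)$ and $f_2\in\C(C\times A_2,B_2)$, I would first check by a direct projection computation that discarding the extra wires of the interleaved composite $P$ recovers the composite $F = (\id_{B_1}\times f_2)\circ(f_1\times\id_{A_2})$. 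The content is then to show that the combined extra wire $S := \pi_{X_1\times X_2}\circ P : A_1\times A_2 \to X_1\times X_2$ is again a split mono. Writing $r_1$ for a retraction of $s_1 = \pi_{X_1}\circ p_1$ and $r_2$ for a retraction of $s_2 = \pi_{X_2}\circ p_2 : C\times A_2\to X_2$, I would set $R := \langle r_1\circ\pi_{X_1},\ \pi_{A_2}\circ r_2\circ\pi_{X_2}\rangle$ and verify $R\circ S = \id_{A_1\times A_2}$ projection-wise: the first component collapses via $r_1\circ s_1 = \id_{A_1}$, and the second via $r_2\circ s_2 = \id_{C\times A_2}$ followed by $\pi_{A_2}$, using that the $A_2$ input is fed unchanged into $p_2$. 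By the characterisation this gives $P\in\Pure(F)$, establishing pseudo-purifiability.

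The hard part is the closure step. Unlike the single-morphism case it concerns arbitrary pseudo-purifications $p_1,p_2$ rather than the canonical ones, and the crux is seeing that split monomorphisms recombine through the interleaved composite: $p_1$ retains enough of $A_1$ to recompute the intermediate wire $C$, while $p_2$ retains all of $C\times A_2$, so the pair jointly retains $A_1\times A_2$. Phrasing the resulting retraction purely through projections and pairings, rather than elementwise, is the only place that requires genuine care.
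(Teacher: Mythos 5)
Your proof is correct and takes essentially the same route as the paper's: both hinge on characterizing pseudo-purifications in a cartesian category as pairings $\langle f, s\rangle$ where $s$ is a split monomorphism (left-invertible), from which the case $s = \id_A$ gives the stated pseudo-purification. You are in fact more thorough than the paper, which proves only the direction ``split mono implies pseudo-purification'' and dismisses both the converse and the closure-under-composition condition as direct, whereas your instantiation at $g = \langle f, \id_A\rangle$ and your explicit retraction $R = \langle r_1\circ\pi_{X_1},\ \pi_{A_2}\circ r_2\circ\pi_{X_2}\rangle$ fill exactly those gaps.
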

\begin{proof}
	We will show that for $f \in \C(A,B)$, $\Pure(f)$ is exactly the set of $(f,g)$ with $g \in \C(A,C)$ left-invertible, \emph{i.e.}, there exists $h \in \C(C,A)$ such that $h \circ g = \id_A$. We take $f,g,h$ as such.
	
	We consider a morphism $p \in \C(A,B \times X)$ such that $(A \times \Terre{X}) \circ p = f$. Since $\C$ is a cartesian category, we know that there exists $f',g'$ such that $p = (f',g')$, and we immediately have $f = f'$. We have
	\[ p = (f,g') = (B \times (g' \circ h)) \circ (f,g) \]
	Since very morphism is causal, $g' \circ h$ is causal, and it follows that $(f,g)$ is indeed a pseudo-purification of $f$. The stability of pseudo-purifications under composition is direct.	
\end{proof}

\begin{lemma}
	The categories $\CPM$ and $\UCPM$ are shadow pseudo-purifiable. In fact, all the morphisms that are pure quantum computations are shadowful morphisms.
\end{lemma}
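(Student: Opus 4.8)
The plan is to reduce the statement to its second sentence and then argue concretely with matrices. By \cref{cppure}, in both $\CPM$ and $\UCPM$ every morphism admits a purification, every purification is a pseudo-purification, and purifications are pure quantum computations. Hence, to conclude that these categories are \emph{shadow pseudo-purifiable}, it suffices to prove the second assertion, namely that every pure quantum computation is a morphism with shadows: the required pseudo-purification that is a morphism with shadows is then just any purification.

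So fix a pure $f = (\rho \mapsto V\rho V^\dagger) \in \CPM(A,B)$ and an isomorphism $\iota : B \cong B_0 \otimes B_1$. Composing with $\iota$ preserves purity (it is a conjugation by an invertible matrix), so after absorbing $\iota$ into $V$ I may assume $V : \CC^{A} \to \CC^{B_0} \otimes \CC^{B_1}$. The shadow diagrams compare, for CP maps $g,h : B_1 \to C$, the composites $(\id_{B_0} \otimes g)\circ f$ and $(\id_{B_0}\otimes h)\circ f$ on the one hand, and $g\circ\pi$, $h\circ\pi$ on the other. First I would pin down the linear subspace of $\mathcal M_{B_1}$ on which the left-hand equation really tests $g$ and $h$. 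Decomposing any $X$ supported on $R := \operatorname{Range}(V)$ in the matrix-unit basis $E_{kl}$ of $B_0$ as $X = \sum_{kl} E_{kl}\otimes X_{kl}$, the equality $(\id_{B_0}\otimes g)(X) = (\id_{B_0}\otimes h)(X)$ holds for all such $X$ iff $g$ and $h$ agree on the span $S$ of all blocks $X_{kl}$; since $\operatorname{im}(f) = \mathcal M_R$, this span is exactly the testing space. A short computation with $X = vw^\dagger$ for $v,w \in R$ then gives $S = \mathcal M_W$, the operators supported on $W := \operatorname{span}\{(\bra{k}\otimes \id_{B_1})v : v\in R,\ k\} \subseteq \CC^{B_1}$. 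In the motivating case $B_0 = I$ this recovers $W = R$ and $\pi$ the projector onto the image of $f$.

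It remains to exhibit an idempotent $\pi : B_1 \to B_1$ whose image, as a linear subspace of $\mathcal M_{B_1}$, is exactly $\mathcal M_W$, since for an idempotent one has $g\circ\pi = h\circ\pi$ iff $g$ and $h$ agree on $\operatorname{im}(\pi)$. In $\CPM$ I would take $\pi : \sigma \mapsto \Pi_W \sigma \Pi_W$, conjugation by the orthogonal projector $\Pi_W$ onto $W$: it is a \emph{pure} idempotent with image $\mathcal M_W$, which settles $\CPM$ and shows the shadow may itself be taken pure. The main obstacle is $\UCPM$, where $\pi$ must be trace preserving and the bare projector is not. Here I would instead use $\pi : \sigma \mapsto \Pi_W\sigma\Pi_W + \operatorname{Tr}\!\big((\id - \Pi_W)\sigma\big)\,\tau$ for a fixed state $\tau$ supported on $W$; this map is completely positive, trace preserving, fixes every operator supported on $W$, and sends everything into $\mathcal M_W$, hence is idempotent with image $\mathcal M_W$. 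In either case $g\circ\pi = h\circ\pi \iff g|_{\mathcal M_W} = h|_{\mathcal M_W} \iff (\id_{B_0}\otimes g)\circ f = (\id_{B_0}\otimes h)\circ f$, so $\pi$ is a shadow of $f$. The only delicate points are the block computation identifying $S = \mathcal M_W$ and, for $\UCPM$, the trace-correcting modification of the projector.
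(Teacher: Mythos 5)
Your proof is correct, and while it sets up the same reductions as the paper, its core argument is genuinely different. Like the paper, you reduce the first sentence to the second via \cref{cppure} and absorb the isomorphism $\iota$ into $V$; moreover, the idempotent you build for $\UCPM$ is essentially the paper's: the paper defines $\pi$ by an isometry $P$ (sending $\beta_i \mapsto \beta_i \otimes \beta_1$ on an orthonormal basis of $W$ and $\beta_i \mapsto \beta_1 \otimes \beta_i$ on its completion) followed by a partial trace, and unfolding that composite gives exactly your map $\sigma \mapsto \Pi_W \sigma \Pi_W + \operatorname{Tr}\bigl((\id-\Pi_W)\sigma\bigr)\,\tau$ with $\tau = \beta_1\beta_1^\dagger$. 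The genuine divergence is in how the equivalence is proved. For the nontrivial direction, the paper pseudo-purifies $g$ and $h$, invokes Stinespring's dilation (\cref{thm:stinespring}) to relate $g'\circ f$ and $h'\circ f$ by an isometry, and then runs a chain of implications through $\bfup{Hilb}$ on the basis vectors $\beta_i$. You instead compute the exact test space: by linearity, $(\id_{B_0}\otimes g)\circ f = (\id_{B_0}\otimes h)\circ f$ holds iff $g$ and $h$ agree on the span of the $B_0$-blocks of operators supported on $\operatorname{Range}(V)$, and your block computation identifies this span as $\mathcal{M}_W$; the shadow property then follows from the one-line fact that $g\circ\pi = h\circ\pi$ iff $g$ and $h$ agree on the range of the idempotent $\pi$, provided that range is exactly $\mathcal{M}_W$. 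This route is more elementary (no Stinespring, no pseudo-purification of $g$ and $h$), it isolates precisely what the shadow condition tests, and it yields as a bonus that in $\CPM$ the shadow can be chosen pure (the bare projector conjugation), whereas the paper deliberately avoids projectors so as to have a single construction valid in both categories. One small point you leave implicit: the paper treats the zero morphism separately because its construction needs $k \geq 1$ to pick $\beta_1$; in your setup the zero morphism (pure in $\CPM$, impossible in $\UCPM$ since trace-preserving pure maps are isometric conjugations) is covered by the bare projector onto $W = \{0\}$, but your $\UCPM$ idempotent does require $W \neq 0$ for a state $\tau$ supported on $W$ to exist --- this holds automatically for the same isometry reason, and it deserves an explicit sentence.
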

\begin{proof}
	While a simpler proof exists for $\CPM$, we present here a proof that holds for both $\CPM$ and $\UCPM$. This means that we cannot take the shadows $\pi$ to be orthonormal projectors, as orthonormal projectors are not trace preserving (except for the identity).
	
	We start by treating the special case of the zero morphism $0_{A,B} \in \CPM(A,B)$. The morphism $0_{B_1,B_1} \in \CPM(B_1,B_1)$ is always a valid shadow, hence $0_{A,B}$ is shadowful.
	
	As isomorphisms are pure quantum computation, to prove that all the pure quantum computations are shadowful, it is enough to prove the following: for $f \in \CPM(A,B_0 \otimes B_1)$ which is a pure quantum computation and not the zero morphism, there exists $\pi \in \UCPM(B_1,B_1)$ such that:	
	\begin{center}
		$\forall g,h,\quad\begin{matrix} \tikzfig{shadow_simple_f_g} = \tikzfig{shadow_simple_f_h} \\ \iff \\ \tikzfig{shadow_simple_pi_g} = \tikzfig{shadow_simple_pi_h}  \end{matrix}$
	\end{center}
	To do so, we proceed as follows.
	Since $f$ is a pure quantum computation, we can consider its corresponding morphism $F \in \bfup{Hilb}(A,B_0 \otimes B_1)$. We consider its image $\textup{im}_{B_1}(F)$ projected on $B_1$, which is a subspace of $B_1$, and a non-trivial one as $f$ is not the zero morphism. We choose an orthonormal basis $(\beta_1,\dots,\beta_k)$ of $\textup{im}_{B_1}(F)$ (with $k \geq 1$), and complete it into an orthonormal basis $(\beta_1,\dots,\beta_n)$ of $B_1$.
	We define the morphism $P \in \bfup{Hilb}(B_1,B_1 \otimes B_1)$ as the unique linear operator such that
	\[ \forall 1 \leq i \leq k, P(\beta_i) = \beta_i \otimes \beta_1 \]
	\[ \forall k+1 \leq i \leq n, P(\beta_i) = \beta_1 \otimes \beta_i \]
	Since $P$ is isometric, it corresponds to a morphism $p \in \UCPM(B_1,B_1 \otimes B_1)$, we now define $\pi := (B_1 \otimes \Terre{B_1}) \circ p$. Intuitively, $\pi$ projects everything which is not in the image of $f$ to $\beta_1$, while being the identity on the image of $f$. Formally, each of the $\beta_i$ corresponds to a pure morphism $b_i \in \UCPM(I,B_1)$ and we have
	\[\forall 1 \leq i \leq k, \pi \circ b_i = b_i \]
	\[\forall  k+1 \leq i \leq n, \pi \circ b_i = b_1 \]
	
	The morphism $\pi$ is idempotent, and since $\pi \circ f = f$, the upward direction of our equivalence is immediate. Proving the downward direction is more complex. 
	
	We consider $g,h$ morphisms of $\CPM$ such that
	\[ \tikzfig{shadow_simple_f_g} = \tikzfig{shadow_simple_f_h}\]
	Using \cref{cppure}, we know that we can find pseudo-purifications which are pure quantum computations, so let $g',h'$ be such pseudo-purifications. Since $f$ is already pure, we can use \cref{thm:stinespring} on $g' \circ f$ and $h' \circ f$, and find an isometry $v$ such that $(\id \otimes v) \circ g' \circ f = h' \circ f$, or an isometry $v'$ such that $g' \circ f = (\id \otimes v) \circ h' \circ f$. Without loss of generality, we assume we have the former.
	
	We write $F,G',H',V$ for the morphisms of $\bfup{Hilb}$ corresponding to $f,g',h'$ and $v$ respectively. We have:\\
	
	\scalebox{0.9}{$\begin{matrix}
		&  \tikzfig{shadow_simple_f_g}& =& \tikzfig{shadow_simple_f_h} \\&&\Downarrow&\\
	& \tikzfig{shadow_simple_f_g_prime} &=& \tikzfig{shadow_simple_f_h_prime}  \\&&\Downarrow&\\
	& \tikzfig{shadow_simple_f_g_prime_Hilb} &=& \tikzfig{shadow_simple_f_h_prime_Hilb}  \\&&\Downarrow&\\
	\forall 1 \leq i \leq k,&  \tikzfig{shadow_simple_beta_g} &=& \tikzfig{shadow_simple_beta_h} \\&&\Downarrow&\\
		\forall 1 \leq i \leq k,& \tikzfig{shadow_simple_beta_pi_g} &=& \tikzfig{shadow_simple_beta_pi_h} \\&&\Downarrow&\\
		\forall 1 \leq i \leq n,& \tikzfig{shadow_simple_beta_pi_g} &=& \tikzfig{shadow_simple_beta_pi_h} \\&&\Downarrow&\\
		& \tikzfig{shadow_simple_p_g_Hilb} &=& \tikzfig{shadow_simple_p_h_Hilb} \\&&\Downarrow&\\
		 & \tikzfig{shadow_simple_p_g} &=& \tikzfig{shadow_simple_p_h} \\&&\Downarrow&\\
		 & \tikzfig{shadow_simple_pi_g} &=& \tikzfig{shadow_simple_pi_h} \\
	\end{matrix}$}
\end{proof}

\subsection{Soundness}

We note that $\CM,\IM \vdash \alpha = \beta$ means by definition $\alpha \equiv \beta$.
\setcounter{lemma}{3}
\begin{lemma}[Soundness]\label{app:lem:soundness_proof}
	Whenever $\CM,\IM \vdash \alpha = \beta$, for every $k \geq 1$ we have $\fa{\alpha}_k = \fa{\beta}_k$
\end{lemma}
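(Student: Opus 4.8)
The plan is to induct on the derivation of $\CM,\IM \vdash \alpha = \beta$, after recording two structural facts that render most of the induction routine. First, before any quotient, $\fa{-}_k : \RegStSeq{\C}\to\C$ is a monoidal discard-preserving functor: it plugs the first $k$ layers along their memory wires and discards the last memory $m_k$, and (as for the functor $\textup{FA}$ itself) this operation commutes with layer-wise composition and monoidal product. Consequently reflexivity, symmetry, transitivity and the contextual rules $(S\circ-),(-\circ S),(S\otimes-),(-\otimes S)$ are preserved automatically, since $\fa{C[\alpha]}_k$ depends on $\alpha$ only through $\fa{\alpha}_k$. Second, I would establish the key identity linking $\fa{-}$ with the delay: from $(\D\alpha)_1=\id_0$ and $(\D\alpha)_j=\alpha_{j-1}$ one gets $\fa{\D\alpha}_1=\id_I$ and $\fa{\D\alpha}_{k+1}=\fa{\alpha}_k$ for all $k\geq1$ (the prepended trivial layer contributing only $\id_I$). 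This is precisely the feature that converts a delayed hypothesis at tick $k$ into a statement about strictly earlier ticks.

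For the base cases, both $\CM$ and $\IM$ equate two stateful sequences differing only in how a map is placed along a memory wire across a layer boundary: a causal $c$ pushed past the delay for $\CM$, an idempotent $\pi$ duplicated across the delay for $\IM$. Applying $\fa{-}_k$, any memory wire internal to the first $k$ layers is fully plugged, so the equality there reduces to functoriality, together with $\sem{\pi}\circ\sem{\pi}=\sem{\pi}$ for $\IM$ (and being immediate for $\CM$, where the map merely slides along a connected wire). The only wire reaching the discard is $m_k$; there the two sides present the same composite to $\ground$, using causality $\ground\circ\sem{c}=\ground$ for $\CM$, while for $\IM$ the duplicated copy of $\pi$ sits at tick $k+1$ and is invisible to $\fa{-}_k$. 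Hence $\fa{\alpha}_k=\fa{\beta}_k$ for every $k$.

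The main obstacle is the coinduction rule $(\D_\mathbb{N})$, whose soundness cannot be read off one tick at a time in isolation. I would instead prove a strengthened statement by induction on derivations: for every derivation of $\Gamma,H\vdash\alpha=\beta$, where $H$ is the finite set of delayed hypotheses $\D S=\D T$ accumulated by applications of $(\D_\mathbb{N})$, and for every $k\geq1$, if $\fa{\D S}_j=\fa{\D T}_j$ holds for all $j\leq k$ and all $(\D S=\D T)\in H$, then $\fa{\alpha}_k=\fa{\beta}_k$. Axioms, hypotheses (where $\alpha=\D S$, $\beta=\D T$ and one invokes the assumption at $j=k$) and the congruence rules go through as above. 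In the coinduction case, with conclusion $S_0=T_0$ and premises $\Gamma,H,[\D S_{n+1}=\D T_{n+1}]\vdash S_n=T_n$ for all $n$, I would prove $\fa{S_n}_k=\fa{T_n}_k$ for all $n$ by an inner induction on $k$. To treat tick $k$, apply the outer induction hypothesis to the $n$-th premise at tick $k$; this requires the enlarged hypothesis set to hold at every $j\leq k$. For $H$ this is given; for $\D S_{n+1}=\D T_{n+1}$ the delay identity reduces it at $j=1$ to $\id_I=\id_I$ and at $2\leq j\leq k$ to $\fa{S_{n+1}}_{j-1}=\fa{T_{n+1}}_{j-1}$ with $j-1<k$, which is exactly the inner induction hypothesis. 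The base $k=1$ is unconditional, the delayed hypothesis holding trivially at tick $1$. Taking $H=\emptyset$ then recovers the lemma.

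The crux is therefore the interaction between coinduction and the delay identity: the strengthening to \emph{hypotheses valid up to tick $k$} is what makes the nested induction on $k$ well-founded, the delay shifting each premise's dependence to strictly earlier ticks. Beyond this, I expect only the discard-boundary bookkeeping for $\CM$ and $\IM$, and the verification that no rule of the system shifts time inside a context (so that $\fa{-}_k$ really is tick-local), to require care.
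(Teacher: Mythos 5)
Your proposal is correct and takes essentially the same approach as the paper: the paper likewise proves soundness of $\CM$ and $\IM$ via causality/idempotence at the discarded memory wire, dispatches the congruence rules by functoriality of $\fa{-}_k$, and handles the coinduction rule through a strengthened statement about derivations under hypothesis sets (phrased there with relations $\approx_n$, meaning ``equal up to tick $n$'') combined with the delay-shift identity $\fa{\D\alpha}_{k+1}=\fa{\alpha}_k$; your inner induction on $k$ is exactly the paper's ``chaining'' bootstrap from the trivially valid tick $0$. The only slip is notational: since the congruence already lives over $\RegStSeq{\C}$, the morphisms $c$ and $\pi$ are morphisms of $\C$ and should not be written $\sem{c}$, $\sem{\pi}$.
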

\setcounter{lemma}{\thestoredvalue}
\begin{proof}
	For $n \geq 0$ with define the congruence ``equal up until the $n$-th tick'' $\approx_n$ on $\RegStSeq{\C}$ as follows:
	\[ \alpha \approx_n \beta \iff \forall k \leq n, \fa{\alpha}_k = \fa{\beta}_k \]
	In particular we always have $\alpha \approx_0 \beta$. The property we are trying to prove is equivalent to: 
	\[\forall n \geq 0, \left[\CM,\IM \vdash \alpha = \beta\right] \implies \alpha \approx_n \beta \]
	
	The start by showing that the axioms $\CM$ and $\IM$ are indeed sound:
	\begin{itemize}
		\item \CM~Moving a causal from the $k$-th layer to the $(k+1)$-th layer trivially preserve all the finite approximations but the $k$-th one. Looking at the $k$-th finite approximation, and we observe the following:
		\begin{center}
			$\begin{matrix} \fa{\alpha}_k &=& \tikzfig{onion_from_strat_bis_plus_causal}\\
				&= & \tikzfig{onion_from_strat_bis} & = & \fa{\beta}_k \end{matrix}$
		\end{center}
		\item \IM~Duplicating an idempotent from the $k$-th layer to the $(k+1)$-th layer does not change any of the associated finite approximations.
	\end{itemize}
	We now show that whenever $\Gamma \vdash \alpha = \beta$, if for all $\left[ \alpha' = \beta'\right] \in \Gamma$ we have $\alpha' \approx_{n} \beta'$ then we have $\alpha \approx_{n} \beta$.
	
	We take a derivation sequence of $\Gamma \vdash \alpha = \beta$, and proceed by induction on this derivation sequence. 
	\begin{itemize}
		\item The initialisation is the axiom rule $\Gamma \vdash \alpha = \beta$ with $\left[ \alpha = \beta\right] \in \Gamma$. The result is immediate.
		\item The reflexivity, symmetry, transitivity rules are trivial, and the composition and tensor rules correspond to the fact that $\fa{-}_k$ is a monoidal functor.
		\item We now consider the coinduction rule
		\begin{center}
			\begin{tabular}{c}
				$(\forall k \geq 0) \qquad  \Gamma,\left[\D \alpha^{(k+1)} = \D \beta^{(k+1)}\right] \vdash \alpha^{(k)} = \beta^{(k)}$ \\\hline
				$\Gamma \vdash \alpha^{(0)} = \beta^{(0)}$
			\end{tabular}
		\end{center}		
	\end{itemize}
	We want to show that if for every $\left[ \alpha' = \beta'\right] \in \Gamma$ we have $\alpha' \approx_{n} \beta'$, then we have $\alpha^{(0)} \approx_n \beta^{(0)}$ for all $n \geq 0$.
	
	We assume that we indeed have for every $\left[ \alpha' = \beta'\right] \in \Gamma$ we have $\alpha' \approx_{n} \beta'$.
	By induction hypothesis we know that if  $\D \alpha^{(k+1)} \approx_{n} \D \beta^{(k+1)}$ for some $k \geq 0$ and $n \geq 0$, then $\alpha^{(k)} \approx_{n} \beta^{(k)}$, which implies $\D \alpha^{(k)} \approx_{n+1} \D \beta^{(k)}$. Hence by chaining this property, we obtain that for any $k \geq 0$, $n \geq 0$: \[ \D \alpha^{(k+1)} \approx_{n} \D \beta^{(k+1)} \implies  \alpha^{(0)} \approx_{n+k} \beta^{(0)} \]
	Applying it with $n = 0$, and using the fact that $\approx_{0}$ is the total relation, we obtain for all $k \geq 0$
	\[ \alpha^{(0)} \approx_{k} \beta^{(0)} \]
\end{proof}

\subsection{Universality}

In this subsection, we prove the fullness of $\fa{-}$.

\begin{proposition}\label{app:prop:universal_proof}
	If $\C$ is pseudo-purifiable, then the functor $\fa{-} : \RegStSeq{\C} \to \Onion$ is full.
\end{proposition}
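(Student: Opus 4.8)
The plan is to prove fullness by an explicit tick-by-tick construction of a stateful morphism sequence realizing a given finite approximation sequence. Fix objects $(A_i)_{i\geq 1},(B_i)_{i\geq 1}$ of $\RegStSeq{\C}$ and a morphism $(g_k)_{k\geq 1}\in\Onion(\fa{(A_i)},\fa{(B_i)})$, writing $B_{\leq k}:=B_1\otimes\cdots\otimes B_k$ and similarly for $A$, so that each $g_k\in\C(A_{\leq k},B_{\leq k})$ and $(g_k)$ is monotone and regular. I would build layers $\alpha_k\in\C(A_k\otimes M_{k-1},B_k\otimes M_k)$ (with $M_0=I$) by induction on $k$, maintaining the invariant that the \emph{open} composite $P_k$, obtained by plugging the memories of $\alpha_1,\dots,\alpha_k$ but \emph{not} discarding $M_k$, is a pseudo-purification of $g_k$, i.e. $P_k\in\Pure(g_k)$. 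Since every $p\in\Pure(f)$ satisfies $f=(\id\otimes\Terre{})\circ p$, this invariant forces $\fa{\alpha}_k=(\id\otimes\Terre{M_k})\circ P_k=g_k$, which is exactly what fullness demands.

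For the base case I set $\alpha_1:=p_1$ for some $p_1\in\Pure(g_1)$, which exists by pseudo-purifiability. For the inductive step, suppose $P_k=p_k\in\Pure(g_k)$ with memory $M_k$. First I would note that $\id_{A}\in\Pure(\Terre{A})$ for every $A$ (a competitor $g$ of $\Terre{A}$ is causal and satisfies $g=(\id\otimes g)\circ\id_A$ trivially), so the $C=I$ case of the composition axiom gives $p_k\otimes\id_{A_{k+1}}\in\Pure\big(g_k\circ(\id\otimes\Terre{A_{k+1}})\big)$, a pseudo-purification with extra output $M_k\otimes A_{k+1}$ that retains the fresh input $A_{k+1}$ inside the memory. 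Monotonicity of $(g_k)$ yields $(\id\otimes\Terre{B_{k+1}})\circ g_{k+1}\preceq g_k\circ(\id\otimes\Terre{A_{k+1}})$; as in the proof that $\preceq$ is transitive, I would normalize the witness of $\preceq$ through $p_k\otimes\id_{A_{k+1}}$ using its universal property, obtaining a (non-causal) map $r\colon M_k\otimes A_{k+1}\to I$ with $(\id\otimes\Terre{B_{k+1}})\circ g_{k+1}=(\id\otimes r)\circ(p_k\otimes\id_{A_{k+1}})$.

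The crux is to upgrade the refinement $r$ into a genuine layer that also emits the fresh output $B_{k+1}$ and a new memory. Concretely, I would show that some $q_{k+1}\in\Pure(g_{k+1})$ factors as $q_{k+1}=(\id_{B_{\leq k}}\otimes\alpha_{k+1})\circ(p_k\otimes\id_{A_{k+1}})$ for a suitable $\alpha_{k+1}\colon M_k\otimes A_{k+1}\to B_{k+1}\otimes M_{k+1}$; taking $M_{k+1}$ to be its extra output and applying the composition axiom once more gives $P_{k+1}=q_{k+1}\in\Pure(g_{k+1})$, closing the induction. To produce this factorization I would pseudo-purify $r$, route it through the extra output of $p_k\otimes\id_{A_{k+1}}$, and invoke uniqueness of pseudo-purifications up to a causal morphism, the same mechanism used for transitivity of $\preceq$. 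I expect this to be the main obstacle: recovering the \emph{refinement} of the past outputs is easy (that is the role of $r$, acting on the memory and propagating through the correlations stored in $p_k$), but one must check that the \emph{newly created} output $B_{k+1}$ is likewise reconstructible from the memory. This is exactly the point where pseudo-purifiability, rather than a bare notion of purification, is needed, and where the composition axiom pins down the interaction between old memory and fresh input.

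Finally I would upgrade the construction to a regular sequence so that $\alpha\in\RegStSeq{\C}$. Using regularity of $(g_k)$, fix $n$, $g^{\textup{irreg}}$ and $g^{\textup{reg}}$ as in Definition \ref{def:fa}. For $k>n$ I would simply take $\alpha_k:=g^{\textup{reg}}$, whose memory is the given object $M$, so that the tail of $\alpha$ is constant and $\fa{\alpha}_k=g_k$ holds automatically from the regular form of $g_k$. I would run the inductive construction above only on the finite initial segment $k\leq n$; since both $g^{\textup{irreg}}$ and the obtained $p_n$ are sources of $g_n$, the universal property of $p_n$ supplies a causal $c$ with $g^{\textup{irreg}}=(\id\otimes c)\circ p_n$, and post-composing the memory of the last initial layer with $c$ makes the two halves agree ($M_n=M$ and $P_n=g^{\textup{irreg}}$) without disturbing any $\fa{\alpha}_k$. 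The resulting $\alpha$ is eventually constant, hence lies in $\RegStSeq{\C}$, and satisfies $\fa{\alpha}=(g_k)$, proving that $\fa{-}$ is full.
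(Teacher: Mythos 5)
Your overall skeleton is the same as the paper's: build the layers $\alpha_k$ inductively while maintaining the invariant that the open composite $P_k$ (memory not yet discarded) is a pseudo-purification of $g_k$, then splice in the regular tail. Several of your steps are correct and even sharpen the paper: the observation that $\id_{A}\in\Pure(\Terre{A})$, hence $p_k\otimes\id_{A_{k+1}}\in\Pure(g_k\otimes\Terre{A_{k+1}})$ by the $C=I$ case of the composition axiom (the paper uses this implicitly); the normalization of the monotonicity witness through $p_k\otimes\id_{A_{k+1}}$ to a single map $r$ with $(\id\otimes\Terre{B_{k+1}})\circ g_{k+1}=(\id\otimes r)\circ(p_k\otimes\id_{A_{k+1}})$ (this is a valid use of the universal property, since the witness of $\preceq$ is a competitor of $g_k\otimes\Terre{A_{k+1}}$); and your junction between the irregular segment and the regular tail, using the universal property of $p_n$ to get a causal $c$ with $g^{\textup{irreg}}=(\id\otimes c)\circ p_n$, which is more careful than what the paper itself writes for its ``regular part''.

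The crux step, however, has a genuine gap. To ``invoke uniqueness of pseudo-purifications'' between your composite $S:=(\id\otimes p(r))\circ(p_k\otimes\id_{A_{k+1}})$ and some $q_{k+1}\in\Pure(g_{k+1})$, you must first know that $S$ itself lies in $\Pure\bigl((\id\otimes\Terre{B_{k+1}})\circ g_{k+1}\bigr)$: uniqueness only compares two pseudo-purifications of the same morphism, and it is this membership that produces the causal $c$ with $q_{k+1}=(\id\otimes c)\circ S$, i.e.\ the layer $\alpha_{k+1}=c\circ p(r)$. The composition axiom does not deliver this membership: it only licenses plugging $p_2\in\Pure(f_2)$ into a wire $C$ that is a genuine output of the base morphism $f_1$, whereas your $p(r)$ is plugged into $M_k\otimes A_{k+1}$, the ancillary output of the pseudo-purification $p_k\otimes\id_{A_{k+1}}$, which is not an output of $g_k\otimes\Terre{A_{k+1}}$ at all. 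Making your collapse-first order work would require something like ``every pseudo-purification is a pseudo-purification of itself'', which is neither an axiom nor established by your argument (and is not provable from the given axioms in general; in $\CPM$ it holds for honest purifications only by a purity argument). The paper's proof is organized precisely to avoid this trap, and in the opposite order to yours: it pseudo-purifies the two monotonicity witnesses $g\in\C(\fa{A}_{k+1},\fa{B}_k\otimes Y)$ and $h\in\C(Y,I)$ \emph{before} collapsing anything, so that the composition axiom applies along the genuine output $Y$ of $g$, giving $(\id\otimes p(h)\otimes\id)\circ p(g)\in\Pure\bigl((\id\otimes\Terre{B_{k+1}})\circ g_{k+1}\bigr)$; then two applications of uniqueness (one pulling $p(g)$ back through $p_k\otimes\id_{A_{k+1}}$, both being pseudo-purifications of $g_k\otimes\Terre{A_{k+1}}$, and one comparing the resulting composite with $q_{k+1}$) yield causal maps $d$ and $c$ and the layer $\alpha_{k+1}=c\circ(p(h)\otimes\id)\circ d$. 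Note that the ``mechanism used for transitivity of $\preceq$'' that you appeal to also purifies the witnesses first and plugs along genuine outputs, so it supports the paper's ordering rather than yours.
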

\begin{proof}
	We take $(f_k)_{k \geq 1} \in \Onion(\fa{A},\fa{B})$. By regularity, there is a $n$ and a $f^{\textup{ref}}$ and $f^{\textup{irreg}}$ such that for every $k \geq n$ we have 
	\begin{center}$f_k = \tikzfig{regularity}$
	\end{center}
	Since $\C$ is pseudo-purifiable, we can pseudo-purify its morphisms, so for every $k \geq 1$ we write $p(f_k)$ for an arbitrarily chosen pseudo-purification of $f_k$.
	
	We want to build $\alpha \in \RegStSeq{\C}(A,B)$ such that $\fa{\alpha}=(f_k)_{k \geq 1}$. We build $\alpha$ inductively, ensuring that at all rank $k < n$ we have that
	\[ p(f_k)= \tikzfig{open_onion_from_strat}  \]
	This ensures that we always have
	\[ f_k = \tikzfig{onion_from_strat_bis} \]	
	\begin{itemize}
		\item \emph{Initialisation:} we simply take $\alpha_1 = p(f_1)$.
		\item \emph{Irregular part:} We assume that for $k < n-1$ we have $\alpha_1,\dots,\alpha_k$ already defined and satisfying the hypothesis. By monotonicity, we know that we have
		\begin{center}
			$\tikzfig{monotone_succ} \preceq \tikzfig{monotone_pred}$
		\end{center}
		So using the definition of $\preceq$ we have $g \in \C(\fa{A}_{k+1},\fa{B}_k \otimes Y)$ and $h \in \C(Y,I)$ such that
		\begin{center}
			$\tikzfig{mono_succ} = \tikzfig{mono_dec_succ}$\\[5pt]
			$\tikzfig{mono_dec_pred} = \tikzfig{mono_pred}$
		\end{center}
		In those equation, we consider stateful morphisms $h \in \C(\text{In} \otimes M, \textup{Out} \otimes M')$ which we represent by having the initial state $M$ and final state $M'$ ``\textcolor{red}{vertical}'' while the standard input and output are ``\textcolor{blue}{horizontal}''. While this could be formalise in the context of a double category, we only use it here as a diagrammatic notation: we read diagrams from up/left to right/down.		
		
		We choose $p(g)$ and $p(h)$ two pseudo-purifications of respectively $g$ and $h$, and we obtain the following:
		\begin{center}
			$\tikzfig{pure_mono_succ} = \tikzfig{pure_dec_succ}$\\[5pt]
			$\tikzfig{pure_dec_pred} = \tikzfig{pure_mono_pred}$			
		\end{center}
		Using uniqueness of pseudo-purification up to a causal morphism, we obtain two causal morphisms $c$ and $d$ such that
		\begin{center}
			$\tikzfig{pure_mono_succ_bis} = \tikzfig{pure_dec_succ_bis}$\\[5pt]
			$\tikzfig{pure_dec_pred_bis} = \tikzfig{pure_mono_pred_bis}$			
		\end{center}
		It follows that 
		\begin{center}
			$\tikzfig{pure_mono_succ_bis} = \tikzfig{pure_recap}$
		\end{center}
		We then define $\alpha_{k+1} =  c \circ (\id \otimes p(h)) \circ d$. By construction, it satisfies the hypothesis.
		\item \emph{Regular part:} for $k \geq n$ we simply take $\alpha_k = f^{\textup{reg}}$.
	\end{itemize}
	By construction, we have $\fa{\alpha} = (f_k)_{k \geq 1}$, hence $\fa{-}$ is full.
\end{proof}

\subsection{Completeness}

In this subsection, we prove the completeness of $\fa{-}$. We start by a few lemmas.

\begin{lemma}\label{app:lem:shadow}
	In a category $\C$, if $\pi : X \to X$ is a shadow of a shadowful morphism $f \in \C(A,B \otimes X)$ then
	\[ (\id_B \otimes \pi) \circ f = f \]
\end{lemma}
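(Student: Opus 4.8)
The plan is to unfold the definition of a shadow directly and apply it to a cleverly chosen pair of morphisms. Recall that $\pi$ being a shadow of $f \in \C(A, B \otimes X)$ means that for the isomorphism witnessing $B \otimes X \cong B \otimes X$ (here simply the identity, with $B_0 = B$ and $B_1 = X$), the following equivalence holds for all $g, h$:
\[
(g \otimes \id_X) \circ f = (h \otimes \id_X) \circ f \iff g \circ \pi = h \circ \pi,
\]
where the composites are formed after grouping the output appropriately. The key observation is that the claim $(\id_B \otimes \pi) \circ f = f$ is itself an instance of the right-hand side of this equivalence, for a suitable choice of $g$ and $h$ that would make the two sides collapse.

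First I would instantiate the shadow equivalence with $g = \id_B$ and $h = \id_B$, but applied to $\pi$ rather than a bare output wire. More precisely, the natural move is to compare the two morphisms $(\id_B \otimes \pi) \circ f$ and $f$ by recognizing that, since $\pi$ is idempotent, we have $\pi \circ \pi = \pi$. Taking $g = \pi$ and $h = \id_X$ in the shadow characterization (with the roles set so that the post-composed maps act on the $B_1 = X$ component), the right-hand condition $g \circ \pi = h \circ \pi$ becomes $\pi \circ \pi = \id_X \circ \pi$, i.e. $\pi = \pi$, which holds trivially by idempotence of $\pi$. The equivalence then forces the left-hand side, namely $(\id_B \otimes \pi) \circ f = (\id_B \otimes \id_X) \circ f = f$, which is exactly the desired equality.

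Thus the proof is essentially a one-line application of the defining property of a shadow, using idempotence $\pi \circ \pi = \pi$ to discharge the right-hand side of the biconditional for free. The main subtlety — and the step I expect to require the most care — is bookkeeping the isomorphism $\iota$ and the placement of $g, h$ on the correct tensor factor: the definition of a morphism with shadows quantifies over \emph{every} isomorphism $\iota : B \otimes X \cong B_0 \otimes B_1$, and one must apply it with $B_0 = B$, $B_1 = X$, and $\iota = \id$, so that the shadow $\pi$ acts precisely on the $X$ wire. Once the factors are correctly identified, verifying $\pi \circ \pi = \id_X \circ \pi$ is immediate from idempotence, and no further computation is needed.
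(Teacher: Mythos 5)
Your proof is correct and is essentially identical to the paper's: the paper also instantiates the defining equivalence of a shadow with $g = \pi$ and $h = \id_X$, reducing the claim $(\id_B \otimes \pi) \circ f = f$ to the idempotence equation $\pi \circ \pi = \pi$. The brief false start (taking $g = h = \id_B$) is harmlessly superseded by the correct instantiation in the same paragraph, and your care about fixing $\iota = \id$, $B_0 = B$, $B_1 = X$ is exactly the right bookkeeping.
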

\begin{proof}
	Using the definition of a shadowful morphism, we obtain
	\[ (\id_B \otimes \pi) \circ f = f \iff \pi \circ \pi = \pi \]
\end{proof}

\begin{lemma}\label{app:lem:subpurification}
	In a pseudo-purifiable category, if $f \in \C(A,B)$, $g \in \C(A,B \otimes X)$, $p \in \Pure(g)$, and
	\[f = \tikzfig{purification_G}\]
	then we have $p \in \Pure(f)$.
\end{lemma}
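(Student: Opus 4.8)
The plan is to recognise the hypothesis as expressing $f$ as a sequential composite, and then to apply the composition clause in the definition of a pseudo-purifiable category, using the fact that the identity is a pseudo-purification of any discard map.

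First I would fix notation. Write $g \in \C(A, B \otimes X)$ and let $p \in \C(A, B \otimes X \otimes Z)$ be the given pseudo-purification, so that $p \in \Pure(g)$ with purification output $Z$. The hypothesis, which discards the $X$-output of $g$, reads $f = (\id_B \otimes \Terre{X}) \circ g$; this exhibits $f$ as the composite of $g$ followed by the discard $\Terre{X} \in \C(X, I)$ along the wire $X$. In the notation of the composition clause of pseudo-purifiability this is the instance $f_1 = g$, $f_2 = \Terre{X}$, $A_2 = I$, $B_2 = I$, $C = X$, whose composite is exactly $f$.

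Next I would verify that $\id_X \in \Pure(\Terre{X})$, i.e. that the identity is a pseudo-purification of the discard. This is immediate from the definition: for any $h \in \C(X, Y)$ with $\Terre{Y} \circ h = \Terre{X}$ — that is, for any causal $h$ — we have $h = h \circ \id_X$, with $h$ itself serving as the required causal factor. Thus $\id_X$ pseudo-purifies $\Terre{X}$, with purification output $X$. With $p_1 = p \in \Pure(g)$ and $p_2 = \id_X \in \Pure(\Terre{X})$ now in hand, the composition clause yields that the composite pseudo-purification lies in $\Pure(f)$. The only point requiring care is to observe that, because $p_2 = \id_X$ acts trivially, this composite is nothing but $p$ itself, now read as a morphism $A \to B \otimes (X \otimes Z)$ whose visible output is $B$ and whose purification output is $X \otimes Z$: the factor $X$ comes from the purification output of $\id_X$ and the factor $Z$ from that of $p$. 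Hence $p \in \Pure(f)$, as required.

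The main obstacle is conceptual rather than computational: one has to see that $f = (\id_B \otimes \Terre{X}) \circ g$ is literally a sequential composition along $X$, so that the composition axiom is applicable, and that pairing $p$ with the trivial pseudo-purification $\id_X$ of the discard leaves the underlying morphism unchanged up to reassociating and reordering its outputs. Everything else is a direct unwinding of the definitions.
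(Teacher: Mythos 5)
Your proposal is correct and follows essentially the same route as the paper: the paper's (two-sentence) proof likewise applies the composition clause of pseudo-purifiability to $f = (\id_B \otimes \Terre{X}) \circ g$, using that $\id_X$ is a pseudo-purification of $\Terre{X}$, so that the composite pseudo-purification is $p$ itself. Your write-up merely fills in the details the paper leaves implicit (the verification that $\id_X \in \Pure(\Terre{X})$ and the identification of the composite with $p$ up to reordering of outputs), and those details are right.
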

\begin{proof}
	Since $f$ is $g$ composed with the discard, then a pseudo-purification of $G$ composed with a pseudo-purification of $\Terre{X}$ gives a pseudo-purification of $f$. The identity morphism $\id_X$ is a pseudo-purification of $\Terre{X}$.
\end{proof}

\begin{lemma}\label{app:lem:complete}
	If $\C$ is a shadow pseudo-purifiable category, whenever $\fa{\alpha} = \fa{\beta}$ there exists $\alpha',\beta'$ such that $\fa{\alpha'} = \fa{\beta'}$ and 
	\[ \CM,\IM,[\D \alpha' = \D \beta'] \vdash \alpha = \beta \] 
\end{lemma}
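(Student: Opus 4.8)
The plan is to peel off the first layer: I use pseudo-purification to isolate a common ``pure core'' shared by $\alpha$ and $\beta$, and the shadow structure to cancel that core on the tails, so that the residual sequences $\alpha',\beta'$ have equal finite approximations while $\alpha=\beta$ becomes derivable from $\D\alpha'=\D\beta'$.

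First I would set $f \df \fa{\alpha}_1 = \fa{\beta}_1 \in \C(A_1,B_1)$. Writing $\alpha_1 \in \C(A_1,B_1\otimes M)$ and $\beta_1 \in \C(A_1,B_1\otimes N)$ for the two first layers, equality of the first approximations says $(\id_{B_1}\otimes\Terre{M})\circ\alpha_1 = f = (\id_{B_1}\otimes\Terre{N})\circ\beta_1$. Since $\C$ is shadow pseudo-purifiable, I choose a pseudo-purification $p\in\Pure(f)$, $p\in\C(A_1,B_1\otimes X)$, that is moreover a morphism with shadows; let $\pi:X\to X$ be its shadow for the splitting of its codomain $B_1\otimes X$ into $B_1$ and $X$, so that $(\id_{B_1}\otimes\pi)\circ p = p$ by \cref{app:lem:shadow}. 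Applying the defining property of pseudo-purification to the two decompositions above yields causal morphisms $c_\alpha:X\to M$ and $c_\beta:X\to N$ with $\alpha_1=(\id_{B_1}\otimes c_\alpha)\circ p$ and $\beta_1=(\id_{B_1}\otimes c_\beta)\circ p$.

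Next I would move the causal remainders across the first boundary. Since $c_\alpha$ sits on the memory wire at the output of layer $1$, the rule $\CM$ lets me slide it onto the input of layer $2$, producing a sequence $\tilde\alpha$ with $\CM\vdash\alpha=\tilde\alpha$ whose first layer is exactly $p$ and whose second layer is $\alpha_2\circ(\id_{A_2}\otimes c_\alpha)$; symmetrically for $\tilde\beta$. Now $\tilde\alpha,\tilde\beta$ share the first layer $p$. I define $\alpha'$ (resp. $\beta'$) to be the tail of $\tilde\alpha$ (resp. $\tilde\beta$) with the shadow $\pi$ precomposed on the memory wire entering its first layer. Using $(\id_{B_1}\otimes\pi)\circ p=p$ to exhibit a copy of $\pi$ at the output of layer $1$, and then $\IM$ to duplicate that idempotent across the boundary onto the input of layer $2$, I obtain that $\tilde\alpha$ equals the sequence obtained by plugging $p$, through the delayed memory, into $\D\alpha'$; the same context sends $\D\beta'$ to $\tilde\beta$. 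Hence, assuming the hypothesis $\D\alpha'=\D\beta'$ and closing under the congruence rules, $\tilde\alpha=\tilde\beta$, so indeed $\CM,\IM,[\D\alpha'=\D\beta']\vdash\alpha=\beta$.

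It then remains to verify $\fa{\alpha'}=\fa{\beta'}$. Sliding causals via $\CM$ is sound for $\fa{-}$ by \cref{app:lem:soundness_proof}, so $\fa{\tilde\alpha}_k=\fa{\alpha}_k=\fa{\beta}_k=\fa{\tilde\beta}_k$ for all $k$. Writing $\gamma^\alpha_{k-1},\gamma^\beta_{k-1}$ for the finite approximations of the tails of $\tilde\alpha,\tilde\beta$ (with the memory $X$ as an extra first input), this reads $(\id_{B_1}\otimes\gamma^\alpha_{k-1})\circ(p\otimes\id)=(\id_{B_1}\otimes\gamma^\beta_{k-1})\circ(p\otimes\id)$. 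Applying the shadow property of $p$ (after tensoring $g,h$ with the identities on the remaining inputs) cancels $p$ against $\pi$ and yields $\gamma^\alpha_{k-1}\circ(\pi\otimes\id)=\gamma^\beta_{k-1}\circ(\pi\otimes\id)$, which is exactly $\fa{\alpha'}_{k-1}=\fa{\beta'}_{k-1}$; and as tails of regular sequences with one extra first layer, $\alpha',\beta'$ remain regular. The main obstacle is that the single morphism $p$ must play two roles at once: as a pseudo-purification of $f$ it lets the first layers be rewritten into a common core plus causal remainders absorbed by $\CM$, while as a morphism with shadows its idempotent $\pi$ both licenses the $\IM$ step and cancels $p$ on the tails to give $\fa{\alpha'}=\fa{\beta'}$. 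Securing one $p$ with both properties is precisely what the shadow pseudo-purifiability hypothesis supplies, and the remaining delicate point is the bookkeeping of the memory wire when decomposing $\tilde\alpha$ as $p$ combined with $\D\alpha'$ so that closure of the congruence under contexts applies.
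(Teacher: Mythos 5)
Your proof is correct and follows essentially the same route as the paper's: factor both first layers through a common shadowful pseudo-purification with causal remainders, slide those remainders across the first boundary with $\CM$, use the shadow idempotent $\pi$ and $\IM$ to split off tails $\alpha',\beta'$ so that $\alpha$ and $\beta$ become a common context $\gamma'$ filled with $\D\alpha'$ and $\D\beta'$ respectively, and finally use the shadow property to cancel $p$ and deduce $\fa{\alpha'}=\fa{\beta'}$ from $\fa{\alpha}=\fa{\beta}$. The only (sound) difference is a mild streamlining: you pseudo-purify the common approximation $f=\fa{\alpha}_1$ directly and apply the defining property of $\Pure(f)$ to obtain the causal factorizations of $\alpha_1$ and $\beta_1$, whereas the paper pseudo-purifies $\alpha_1$ and $\beta_1$ separately and relates them to $f_1$ via Lemma~\ref{app:lem:subpurification} and the uniqueness of pseudo-purifications up to a causal morphism.
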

\begin{proof}
	We take $\alpha,\beta \in \RegStSeq{\C}(A,B)$ such that $\fa{\alpha} = \fa{\beta}$. We take $p(\alpha_1)$ a shadowful pseudo-purification of $\alpha_1$, and $p(\beta_1)$ a shadowful pseudo-purification of $\beta_1$. Using Lemma \ref{app:lem:subpurification}, we note that $p(\alpha_1)$ is also a pseudo-purification of $f_1$, and so is $p(\beta_1)$. So using uniqueness of the pseudo-purification up to a causal morphisms, there exists $c$ causal such that:
	\begin{center}
		$\tikzfig{compl_a1} = \tikzfig{compl_pa1g} \qquad \tikzfig{compl_b1} = \tikzfig{compl_pa1Cg}$
	\end{center} 
	
	Since $\C$ is a shadow category, we write $\pi$ for the (idempotent) shadow of $p(\alpha_1)$, and from Lemma \ref{app:lem:shadow} we have
	\begin{center}$\tikzfig{compl_a1} = \tikzfig{compl_pa1PIg} \qquad \tikzfig{compl_b1} = \tikzfig{compl_pa1PICg}$
	\end{center}
	Using the fact that the discard is causal and $\pi$ is idempotent, we can rewrite $\alpha$ using $\IM$ and $\CM$ as follows:
	\begin{center}
		$\alpha = \tikzfig{compl_a_exp} \equiv \tikzfig{compl_a_eq} = (\id_{B_1} \otimes \D \alpha') \circ \gamma'$
	\end{center}
	where $\alpha'$ and $\gamma'$ are defined as follows:
	\begin{center}
		$\tikzfig{compl_c_prime} \df \tikzfig{compl_c_prime_def} \qquad \tikzfig{compl_a_prime} \df \tikzfig{compl_a_prime_def} $
	\end{center}
	Similarly, using the fact that $\C$ and the discard are causal and $\pi$ is idempotent, we can rewrite $\beta$ using $\IM$ and $\CM$ as follows:
	\begin{center}
		$\beta = \tikzfig{compl_b_exp} \equiv \tikzfig{compl_b_eq} = (\id_{B_1} \otimes \D \beta') \circ \gamma'$
	\end{center}
	where $\gamma'$ is defined above and $\beta'$ is defined as follows:
	\begin{center}
		$\tikzfig{compl_b_prime} \df \tikzfig{compl_b_prime_def}$
	\end{center}
	So we found $\alpha',\beta',\gamma'$ such that
	\[\CM,\IM \vdash \alpha = (\id_{B_1} \otimes \D \alpha') \circ \gamma'\]\[\CM,\IM \vdash \beta = (\id_{B_1} \otimes \D \beta') \circ \gamma' \]
	This means that
	\[ \CM,\IM, \left[\D \alpha' = \D \beta'\right] \vdash \alpha = \beta \]
	We still need to prove that $\fa{\alpha'} = \fa{\beta'}$. Since $\fa{\alpha} = \fa{\beta}$ and $\fa{-}$ is sound with respect to $\CM$ and $\IM$ (Lemma \ref{app:lem:soundness_proof}) we know that we have:
	\begin{center}
		$\fa{\tikzfig{compl_a_eq_bis}} = \fa{\tikzfig{compl_b_eq_bis}}$
	\end{center}
	Since $\pi$ is a shadow of $p(\alpha_1)$, then it is equivalent to:
	\begin{center}
		$\fa{\tikzfig{compl_a_prime_def}} = \fa{\tikzfig{compl_b_prime_def}}$
	\end{center}
	hence $\fa{\alpha'} = \fa{\beta'}$
\end{proof}

\begin{proposition}\label{app:prop:complete_proof}
	Whenever $\fa{\alpha} = \fa{\beta}$ we have $\CM,\IM \vdash \alpha = \beta$.
\end{proposition}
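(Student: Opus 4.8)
The plan is to obtain Proposition~\ref{app:prop:complete_proof} as a clean coinductive consequence of Lemma~\ref{app:lem:complete}, which already carries out the hard work of unfolding a single layer. Lemma~\ref{app:lem:complete} takes two sequences with equal finite approximations and, at the cost of assuming that their one-step delays agree, proves their equality modulo $\CM,\IM$, while simultaneously producing new sequences whose finite approximations again coincide. This is exactly the shape of the premise of the coinduction rule $(\D_\mathbb{N})$, so the only remaining task is to iterate the lemma to build the required infinite family and then invoke $(\D_\mathbb{N})$.

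Concretely, I would set $\alpha^{(0)} \df \alpha$ and $\beta^{(0)} \df \beta$, and build sequences $(\alpha^{(n)})_{n \geq 0}$ and $(\beta^{(n)})_{n \geq 0}$ in $\RegStSeq{\C}$ by induction on $n$. Assuming $\alpha^{(n)},\beta^{(n)}$ have been defined and satisfy the invariant $\fa{\alpha^{(n)}} = \fa{\beta^{(n)}}$, I apply Lemma~\ref{app:lem:complete} to them: this yields $\alpha^{(n+1)},\beta^{(n+1)}$ (the $\alpha',\beta'$ of the lemma) satisfying both $\fa{\alpha^{(n+1)}} = \fa{\beta^{(n+1)}}$, which propagates the invariant, and
\[ \CM,\IM,\bigl[\D \alpha^{(n+1)} = \D \beta^{(n+1)}\bigr] \vdash \alpha^{(n)} = \beta^{(n)}. \]
Since the lemma's construction merely peels off and recombines the first layer of an eventually-constant sequence (replacing $\alpha_1$ by a purification-and-shadow block and shifting the tail), regularity is preserved, so each $\alpha^{(n)},\beta^{(n)}$ genuinely lies in $\RegStSeq{\C}$ and the family is well-defined for every $n$.

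With these two sequences in hand, the displayed derivabilities are precisely the premises of the coinduction rule $(\D_\mathbb{N})$ instantiated with $\Gamma = \{\CM,\IM\}$, $S_n = \alpha^{(n)}$, and $T_n = \beta^{(n)}$. Applying $(\D_\mathbb{N})$ yields $\CM,\IM \vdash \alpha^{(0)} = \beta^{(0)}$, that is $\CM,\IM \vdash \alpha = \beta$, which is the claim. The genuinely delicate reasoning — the use of shadows, pseudo-purifications, and their uniqueness up to a causal morphism — is entirely contained in Lemma~\ref{app:lem:complete}, so at this level the proof is essentially bookkeeping. The main (and only) obstacle is therefore to make the infinite family explicit, verify that the invariant $\fa{\alpha^{(n)}} = \fa{\beta^{(n)}}$ survives each application of the lemma, and check that the indexing of the constructed sequences matches exactly the indexing demanded by $(\D_\mathbb{N})$.
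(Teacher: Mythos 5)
Your proposal is correct and follows exactly the paper's own argument: chain Lemma~\ref{app:lem:complete} to build the sequences $\alpha^{(n)},\beta^{(n)}$ with the invariant $\fa{\alpha^{(n)}} = \fa{\beta^{(n)}}$, then close with the coinduction rule $(\D_\mathbb{N})$. Your additional observation that regularity is preserved by the lemma's construction is a worthwhile detail the paper leaves implicit, but it does not change the route.
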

\begin{proof}
	We chain the use of lemma \ref{app:lem:complete} and obtain two sequences $\alpha^{(n)}$ and $\beta^{(n)}$ such that $\alpha^{(0)} = \alpha$, $\beta^{(0)}= \beta$ and for all $n \geq 0$ we have
	\[ \CM,\IM,[\D \alpha^{(n+1)} = \D \beta^{(n+1)}] \vdash \alpha^{(n)} = \beta^{(n)} \] 
	Using the coinduction rule this implies \[\CM,\IM \vdash \alpha = \beta\]
\end{proof}

\end{document}